\documentclass[letterpaper,11pt]{article}
\usepackage[margin=0.95in]{geometry}
\usepackage{amsthm,amsmath}
\usepackage{amssymb}

\usepackage[ruled]{algorithm}
\usepackage[noend]{algpseudocode}

\usepackage{epsfig}

\setcounter{secnumdepth}{3}

\renewcommand{\thefootnote}{\fnsymbol{footnote}}

\newtheorem{theorem}{Theorem}
\newtheorem{corollary}{Corollary}
\newtheorem{lemma}{Lemma}

\newtheorem{proposition}{Proposition}

\newtheoremstyle{redstyle}
     {3pt}
     {3pt}
     {\color{black}}
     {}
     {\color{red}\bfseries}
     {:}
     {.5em}
     {}
\theoremstyle{redstyle}

\newcommand{\NOGRAN}{\textsc{NoGran}}

\newcommand{\AlgD}{{\sc DiamUBr}}
\newcommand{\AlgN}{{\sc SizeUBr}}
\newcommand{\AlgG}{{\sc GranUBr}}

\newcommand{\AlgCRBE}{{\sc ChooseRepByEcho}}

\newcommand{\m}{\mathcal}
\newcommand{\cT}{{\mathcal T}}
\newcommand{\cN}{{\mathcal N}}
\newcommand{\cA}{{\mathcal A}}

\newcommand{\eps}{\varepsilon}

\newcommand{\gammapom}{\gamma}

\newcommand{\DIR}{\text{DIR}}

\newcommand{\tju}[1]{} 
\newcommand{\labell}[1]{\label{#1}} 
\renewcommand{\paragraph}[1]{\vspace*{0.5ex}\noindent {\bf #1}}
\newcommand{\remove}[1]{}

\newcommand{\dist}{\text{dist}}
\newcommand{\distM}{\text{distM}}
\newcommand{\side}{\text{side}}
\newcommand{\colour}{\text{color}}
\newcommand{\NAT}{{\mathbb N}}
\newcommand{\INT}{{\mathbb Z}}

\newcommand{\Transmit}{Inter-Box-Broadcast}

\newcommand{\Election}{Election}


\usepackage{color}

\newcommand{\dk}[1]{#1} 
\newcommand{\tj}[1]{#1} 
\newcommand{\comment}[1]{}

\newcommand{\sone}{asleep}
\newcommand{\stwo}{active}
\newcommand{\sthree}{idle}


\newcommand{\boxx}{\text{box}}

\title{Distributed Deterministic Broadcasting\\ in Wireless Networks of Weak Devices under the SINR Model\thanks{%
	This work was supported by the
	EPSRC grant EP/G023018/1.}
	}

\author{%
    Tomasz Jurdzinski\footnotemark[2]
    \and
    Dariusz R.~Kowalski\footnotemark[3]
    \and
    Grzegorz Stachowiak\footnotemark[2]
}

\begin{document}

\footnotetext[2]{Institute of Computer Science, University of Wroc{\l}aw, Poland.}

\footnotetext[3]{Department of Computer Science,
            University of Liverpool,
            Liverpool L69 3BX, UK.
            }


\date{}

\maketitle


\begin{abstract}
The Signal-to-Interference-and-Noise-Ratio model (SINR) is currently the most popular model for
analyzing communication in wireless networks. Roughly speaking, it allows receiving a message
if the strength of the signal carrying the message dominates over the
combined strength of the remaining signals and the background noise at the receiver.
There is a large volume of analysis done under the SINR model in the centralized setting,
when both network topology and communication tasks are provided as a part of the common input,
but surprisingly not much is known
in the ad hoc setting, when nodes have very limited knowledge about the network topology.
In particular, there is no theoretical study of deterministic solutions to multi-hop
communication tasks, i.e., tasks in which packets often have to be relayed in order to reach their destinations.
These kinds of problems, including broadcasting, routing, group communication, leader election,
and many others, are important from perspective of development of
future multi-hop wireless and mobile technologies, such as MANET, VANET, Internet of Things.

In this paper we initiate a study of distributed deterministic broadcasting
in ad-hoc wireless networks with uniform transmission powers under the SINR model.
We design algorithms in two settings: with and without local knowledge about immediate neighborhood.
In the former setting, our solution has almost optimal $O(D\log^2 n)$ time cost,
where $n$ is \tj{the size of a network}, $D$ is the eccentricity of the network and
$\{1,\ldots,N\}$ is the set of possible node IDs.
In the latter case, we prove an $\Omega(n\log N)$ lower bound and develop
an algorithm matching this formula, where $n$ is the number of network nodes.
As one of the conclusions, we derive that the inherited cost of broadcasting techniques in wireless networks
is much smaller, by factor around $\min\{n/D,\Delta\}$, than the cost of learning the immediate neighborhood.
\tj{Finally, we develop a $O(D\Delta\log^2 N)$ algorithm for the setting without local knowledge,
where $\Delta$ is the upper bound on the degree of the communication graph of a network. This algorithm
is close to a lower bound $\Omega(D\Delta)$.}

\tju{przen. z our results}
\tj{In the model without local knowledge, we 
take advantage
of the fact that efficient deterministic distributed communication is possible (in the SINR model)
between stations which are very close, despite large amount
of interferences caused by other transmitters.
This feature somehow compensates inconveniences caused by distant
interferences and makes it possible to obtain a broadcasting algorithm with
efficiency similar to that obtained for UDG radio networks. However,
unlike in the UDG radio networks model, the (lower) bounds apply also
for randomized solutions. In other words, randomization
does not substantially help in ad hoc distributed broadcasting in a large class of
networks.
}

\noindent
{\bf Keywords:} Ad Hoc wireless networks, Signal-to-Interference-and-Noise-Ratio model (SINR) model, Broadcasting, Distributed algorithms, Deterministic algorithms, Local knowledge. 

\end{abstract}

\thispagestyle{empty}
\setcounter{page}{0}


\renewcommand{\thefootnote}{\arabic{footnote}}

\newpage



\setcounter{section}{0}
\setcounter{proposition}{0}
\setcounter{lemma}{0}
\setcounter{theorem}{0}
\setcounter{corollary}{0}
\setcounter{algorithm}{0}
\setcounter{equation}{0}
\setcounter{figure}{0}

\section{Introduction}

In this work we consider a broadcasting problem in ad-hoc wireless networks under the
Signal-to-Interference-and-Noise-Ratio model (SINR).
Wireless network consists of $n$ stations, also called nodes, with unique integer IDs in the range $\{1,\ldots,N\}$
and uniform transmission powers, deployed in the two-dimensional space with Euclidean metric.
Each station initially knows only its own ID and location, parameters $n$ and $N$.
A communication (or reachability) graph of the network is the graph defined on network nodes
and containing links $(v,w)$ such that if $v$ is the only transmitter in the network then $w$ receives
the message transmitted by $v$.
We consider two settings: one with local knowledge, in which each station knows also its neighbors
(i.e., stations reachable by a direct transmission), and the other when no extra knowledge is assumed.

In the broadcasting problem, there is one designated node, called the source, which has a piece
of information (called a source message or a broadcast message)
that must be delivered to all other accessible nodes by using wireless communication.
In the beginning, only the source is active from perspective of the broadcast task,
and other nodes join the execution after receiving the broadcast message for the first time.
The goal is to minimize the worst-case time for accomplishing the broadcasting task.



\subsection{Previous and Related Results}

Recent development of deterministic protocols for wireless communication, e.g.,
CDMA-based technologies, and rapidly growing scale of ad hoc wireless networks,
poses new challenges for design of efficient deterministic distributed protocols.
In this work, we study
the problem of {\em distributed deterministic broadcasting} in ad hoc wireless networks,
which,  to the best of our knowledge,
has not been theoretically studied under the SINR model, from perspective of worst-case complexity.
SINR model is currently considered the most adequate
among the models of wireless networks.
Furthermore,
no other communication task involving multi-hop message propagation has been theoretically studied
from perspective of distributed deterministic solutions in the SINR setting.
In what follows, we list most relevant results in the SINR model, and
the state of the art obtained in the older Radio Network model.

\paragraph{SINR model.}
In the 
SINR model in ad hoc setting,
deterministic {\em local} broadcasting, in which nodes have to inform
only their neighbors in the corresponding reachability graph,
was studied in \cite{YuWHL11}.
The considered setting allowed power control by algorithms,
in which, in order to avoid collisions,
stations could transmit with any power smaller than the
maximal one.
Randomized solutions for contention resolution~\cite{KV10}
and local broadcasting~\cite{GoussevskaiaMW08} were also obtained.

There is a vast amount of work on centralized algorithms under the SINR model.
The most studied problems include connectivity, capacity maximization,
link scheduling types of problems (e.g.,\ \cite{FanghanelKRV09,Kesselheim11,AvinLPP09}).
For recent results and references we refer the reader to the survey~\cite{WatSurv}.
Multiple Access Channel properties were also recently studied
under the SINR model, c.f.,~\cite{RichaSSZ}.

\remove{

\item
z prac o connectivity w SINR, podaje cos co traktuje o uniform power:
\cite{AvinLPP09} (stala liczba kolorow, ale stacje tylko w wezlach gridu);
\cite{AvinLP09} (o tym, ze uniform niewiele gorsze od nonuniform);
\item
w surveyu Wattenhoffera i in. jest cala kolekcja wynikow na temat one-slot scheduling
i multi-slot scheduling offline (\textbf{scentralizowany}) dla modelu uniform: NP-zupelnosc, algorytmy
aproksymacyjne... a z algorytmow rozproszonych wymieniaja glownie:
\cite{GoussevskaiaMW08} o local broadcasting zrandomizowanym (``each node performs a successful local
broadcasting in time proportional to the number of neighbors
in its physical proximity''); \cite{LebharL09} traktuje o uniform (udg): nie doczytalem dokladnie,
ale chodzi o zrandomizowana symulacje collision-free (?) UDG w modelu SINR przy jednostajnym rozkladzie
wierzcholkow w ustalonym kwadracie...
\end{itemize}
%
%
%
%
%
%
}

\paragraph{Radio network model.}
There are several papers analyzing deterministic broadcasting in the radio model of wireless networks,
under which a message is successfully heard if there are no other simultaneous transmissions
from the {\em neighbors} of the receiver in the communication graph.
This model does not take into account the real strength of the received signals, and also the signals
from outside of some close proximity.
In the geometric ad hoc setting, Dessmark and Pelc~\cite{DessmarkP07} were the first who studied
this problem. They analyzed the impact of local knowledge, defined as a range within which
stations can discover the nearby stations.
Unlike most research on broadcasting problem and the assumptions of this paper,
Dessmark et\ al. \cite{DessmarkP07} assume spontaneous wake-up of stations. That is,
stations are allowed to do some pre-processing
(including sending/receiving messages) prior receiving
the broadcast message for the first time.
Moreover it is assumed in \cite{DessmarkP07} that IDs are from $\{1,\ldots,n\}$,
which makes the setting even
less comparable with the one considered in this work.
Emek et al.~\cite{EmekGKPPS09} designed a broadcast algorithm
working in time $O(Dg)$
in UDG radio networks with eccentricity $D$ and granularity $g$, where
eccentricity was defined as the minimum number of hops to propagate the broadcast message throughout
the whole network and
granularity was defined as
the
inverse of the minimum distance between any two stations.
Later, Emek et al.~\cite{EmekKP08} developed a matching lower bound $\Omega(Dg)$.
%
There were several works analyzing deterministic broadcasting in geometric graphs in the centralized radio setting,
c.f.,~\cite{GasieniecKKPS08,GasieniecKLW08,SenH96}.
%

The problem of broadcasting is well-studied in the setting of graph radio model, in which stations
are not necessarily deployed in a metric space;
here we restrict to only the most relevant results.
In deterministic ad hoc setting with no local knowledge, the fastest $O(n\log(n/D))$-time algorithm in symmetric networks was developed by Kowalski~\cite{Kow-PODC-05}, and almost matching lower
bound was given by Kowalski and Pelc~\cite{KP-DC-05}.
%
For recent results and references in less related settings we refer the reader
to~\cite{DeMarco-SICOMP-10, KP-DC-07, CzumajRytter-FOCS-03,Censor-HillelGKLN11,GalcikGL09}

There is vast literature on randomized algorithms for broadcasting in graph radio model.
Since they are quite efficient, there are very few studies of the problem restricted to
geometric setting. However, when mobility of stations is assumed, location and movement
of stations on the plane is natural. Such settings were studied e.g.,\ in
\cite{Farach-ColtonAMMZ11,Farach-ColtonM07}.


%
%


%

\subsection{Our Results}
In this paper we present the first study on deterministic broadcasting in wireless connected networks
deployed in two dimensional Euclidean space under the SINR model.
We distinguish between the two settings:
with and without local knowledge about neighbors in the communication graph.
In the former model, we developed a broadcasting
algorithm with time complexity $O(n\log N)$, which matches the lower bound \tj{(Section~\ref{s:anonymous})}.
\tj{Then, an algorithm finishing broadcasting in time $O(D\Delta\log^2N)$ is presented, where $\Delta$ is the
largest degree of a vertex in the reachability graph (Section~\ref{s:algdeg}).
This algorithm is close to the lower bound $\Omega(D\Delta)$ -- see Section~\ref{s:lower}.}
Our solution for networks with local knowledge works in time
$O(D\log^2 n)$, which provides $O(\log^2 n)$ overhead over the straightforward $\Omega(D)$ lower bound,
and is faster than the algorithms for anonymous networks in every network with eccentricity
$D=o(n/\log N)$ \tj{or maximal degree $\Delta=\omega(1)$.
It also implies that the cost of
learning
neighborhoods by stations in wireless network is
much higher, by factor around $n/D$ \tj{or $\Delta$}, than the cost of broadcast itself (performed when such neighborhoods
are provided).
Importantly, the algorithm for networks with local knowledge works for any path loss parameter
$\alpha\geq 2$ (though additional multiplicative $\log^2 N$ factor
appears in complexities of algorithms for $\alpha=2$), while the algorithms without local knowledge are applicable only when $\alpha>2$.
}

Our results rely on novel techniques which simultaneously
exploit specific properties of conflict resolution in the
SINR model (see e.g. \cite{AvinEKLPR09}) and algorithmic
techniques developed for radio networks model. In particular,
in the model with local knowledge, we show how to efficiently combine a novel
SINR-based leader election technique, ensuring several parallel communications
inside range area of one station (which is unfeasible to achieve in
radio networks model), with the approach simulating collision
detection in radio networks (c.f.\ \cite{KP04}).
As a result,
we develop a general transformation of algorithms relying on the knowledge
of network granularity \tj{(Section~\ref{s:granularity-unknown})into algorithm of asymptotically 
similar performance
that do not require such knowledge.}

%
\tj{In the model without local knowledge, we 
take advantage
of the fact that efficient deterministic distributed communication is possible (in the SINR model)
between stations which are very close, despite large amount
of interferences caused by other transmitters.
This feature somehow compensates inconveniences caused by distant
interferences and makes possible to achieve broadcasting algorithm with
efficiency similar to that obtained for UDG radio networks. However,
unlike in the UDG radio networks model, the (lower) bounds apply also
for randomized solutions. In other words, randomization
does not substantially help in ad hoc distributed broadcasting in a large class of
networks.
}
%



%

\vspace*{-1ex}
\section{Model, Notation and Technical Preliminaries}

Throughout the paper, $\NAT$ denotes the set of natural numbers,
$\NAT_+$ denotes the set $\NAT\setminus\{0\}$, and $\INT$
denotes the set of integers.
For $i,j\in\INT$, we use the notation $[i,j]=\{k\in\NAT\,|\,i\leq k\leq j\}$
and $[i]=[1,i]$.
%

We consider a wireless network consisting of $n$ {\em stations}, also called {\em nodes},
deployed into a two dimensional
Euclidean space and communicating by a wireless medium.
%
%
All stations have unique integer IDs in set $[N]$. 
Stations of a network are denoted by letters $u, v, w$, which simultaneously
denote their IDs.
%
%
%
Stations are located on the plane with {\em Euclidean metric} $\dist(\cdot,\cdot)$,
and each station knows its coordinates.
%
%
Each station $v$ has its {\em transmission power} $P_v$, which is a positive real number.
There are three fixed model parameters: path loss
\tj{$\alpha\geq 2$,}
threshold $\beta\ge 1$, and ambient noise $\cN\ge 1$.
The $SINR(v,u,\cT)$ ratio, for given stations $u,v$ and a set of (transmitting) stations $\cT$,
is defined as follows:
\vspace*{-1ex}
\begin{equation}\label{e:sinr}
SINR(v,u,\cT)
=
\frac{P_v\dist(v,u)^{-\alpha}}{\cN+\sum_{w\in\cT\setminus\{v\}}P_w\dist(w,u)^{-\alpha}}
\end{equation}
In the {\em Signal-to-Interference-and-Noise-Ratio model} (SINR) considered in this work,
station $u$ successfully receives a message from station $v$ in a round if
$v\in \cT$, $u\notin \cT$, and:
\begin{itemize}
\vspace*{-1ex}
\item
$SINR(v,u,\cT)\ge\beta$, where $\cT$ is the set of stations transmitting at that time, and
\vspace*{-1ex}
\item
$P_v\dist^{-\alpha}(v,u)\geq (1+\eps)\beta\cN$,
\end{itemize}
\vspace*{-1ex}
where $\eps>0$ is a fixed {\em sensitivity parameter} of the model.
\tj{
The above definition is common in the literature, c.f.,~\cite{KV10}.\footnote{%
The first condition is a straightforward application of the SINR ratio,
comparing strength of one of the received signals with the remainder.
The second condition enforces the signal to be sufficiently strong in order to be
distinguished from the background noise, and thus to be decoded.
\tj{Moreover, this condition ensures that all transmission powers are high enough so that some interference can be tolerated.}
}
} 
\remove{ 
As the first of the above
conditions is a standard formula defining SINR model in the literature, the second condition
is less obvious. Informally, it states that reception of a message at a station $v$ is possible
only if the power received by $u$ is at least $(1+\eps)$ times larger than the minimum power
needed to deal with ambient noise. This assumption is quite common in the literature
(c.f.,\ \cite{KV10}), for two reasons.
First, it captures the case when the ambient noise, which in practice is of random nature,
may vary by factor $\eps$ from its mean value $\cN$ (which holds with some meaningful
probability).
Second, the lack of this assumption trivializes many communication tasks; for example,
in case of the broadcasting problem, the lack of this assumption implies
a trivial lower bound $\Omega(n)$ on time complexity, even for shallow network
topologies of eccentricity
$O(\sqrt{n})$ (i.e., of $O(\sqrt{n})$ hops) and for centralized and randomized algorithms.\footnote{%
Indeed, assume that we have a network whose all vertices
form a grid
$\sqrt{n}\times \sqrt{n}$ such that $P_v=1$ for each station $v$ and
distances between consecutive elements of the grid
are $(\beta\cdot\cN)^{-1/\alpha}$; that is, the power of the signal received by each
station is at most equal to the ambient noise.
If the constraint
$P_v\dist^{-\alpha}(v,u)\geq (1+\eps)\beta\cN$ is not required for reception
of the message, the source message can still be sent to each station of the network. However,
if more than one station is sending a message simultaneously, no station in the
network receives a message.
}
} 

In the paper, we assume for the sake of
clarity of presentation that $\beta=1$ and
$\cN=1$.
These assumptions can be dropped without harming the asymptotic performances of
the presented algorithms and lower bounds formulas.

\paragraph{Ranges and uniformity.}
The {\em communication range} $r_v$ of a station $v$ is the radius of the circle in which a message transmitted
by the station is heard, provided no other station transmits at the same time.
A network
is
{\em uniform}, when ranges (and thus transmission powers) of all stations are equal,
or {\em nonuniform} otherwise.
In this paper, only uniform networks are considered.
For clarity of presentation
we make the assumption that all powers are equal to $1$, i.e., $P_v=1$ for each $v$.
The assumption that the values of $P_v$ are $1$ can be dropped without changing
asymptotic formulas for presented algorithms and lower bounds.
Under these assumptions, $r_v=r=(1+\eps)^{-1/\alpha}$
for each station $v$.
%
%
%
%
The {\em range area} of a station with range $r$
located at the point $(x,y)$ is defined as the circle with radius $r$.

\paragraph{Communication graph and graph notation.}
The {\em communication graph} $G(V,E)$, also called the {\em reachability graph}, of a given network
consists of all network nodes and edges $(v,u)$ such that $u$ is in the range area of $v$.
%
Note that the communication graph is symmetric for uniform networks, which are considered
in this paper.
By a {\em neighborhood} of a node $u$ we mean the set (and positions) of all 
neighbors of $u$, i.e., the set $\{w\,|\, (w,u)\in E\}$ in the communication graph $G(V,E)$
of the underlying network.
The {\em graph distance} from $v$ to $w$ is equal to the length of a shortest path from $v$ to $w$
in the communication graph, where the length of a path is equal to the number of its edges.
The {\em eccentricity} of a node
is the maximum graph
distance from this node to all other nodes
(note that the eccentricity is of the order of the diameter if the communication
graph is symmetric --- this is also the case in this work).
%

We say that a station $v$ transmits {\em $c$-successfully} in a round
$t$ if $v$ transmits a message in round $t$ and this message is heard by
each station $u$ in distance smaller or equal to $c$ from $v$. We say that a station
$v$ transmits {\em successfully} in round $t$ if it transmits $r$-successfully, i.e.,
each of its neighbors in the communication graph can hear its message.
Finally, $v$ transmits {\em successfully} to $u$ in round $t$ if $v$ transmits
a message in round $t$ and $u$ receives this message.

\paragraph{Synchronization.}
It is assumed that algorithms work synchronously in rounds, each station can
either act as a sender or as a receiver during a round.
We do not assume global clock ticking -- as it can be coordinated by updating
round counter and passing it along the network with messages.

\paragraph{Collision detection.}
We consider the model without {\em collision detection}, that is,
if a station $u$ does not receive a message in a round $t$, it has no information
whether any other station was transmitting
in that round
and about the value of $SINR(v,u,\m{T})$, for any station $u$, where $\m{T}$ is the set
of transmitting stations in round $t$.

\paragraph{Broadcasting problem and complexity parameters.}
In the broadcasting problem studied in this work, there is one distinguished node, called the {\em source},
which initially holds a piece of information (also called a source message or a broadcast message).
The goal is to disseminate this message to all other nodes by sending messages along the network.
The complexity measure is the worst-case time to accomplish the broadcast task,
taken over all connected networks with specified parameters.
Time, also called the {\em round complexity}, denotes here the number of communication rounds in
the execution of a protocol: from the round when the source is activated
with its broadcast message till the broadcast task is accomplished (and each station is aware of this fact).
For the sake of complexity formulas, we consider the following parameters:
$n$, $N$, $D$, and $g$, where:
$n$ is the number of nodes,
$[N]$ is the range of IDs,
$D$ is the eccentricity of the source,
and $g$ is the granularity of the network, defined as $r$ times the
inverse of the minimum distance between any two stations (c.f.,~\cite{EmekGKPPS09})
divided by $r$.

\paragraph{Messages and initialization of stations other than source.}
We assume that a single message sent in the execution of any algorithm
can carry the broadcast message and at most polynomial, in the
size of the network, number of control bits in the size of the network.
For simplicity of analysis, we assume that every message sent during the execution
of our broadcast protocols contains the broadcast message; in practice, further optimization
of a message content could be done in order to reduce the total number of transmitted bits in real executions.
A station other than the source starts executing the broadcasting protocol
after the first successful receipt of the broadcast message; we call it
a {\em non-spontaneous wake-up model}, to distinguish from other possible settings,
not considered in this work,
where stations could be allowed to do some pre-processing
(including sending/receiving messages) prior receiving
the broadcast message for the first time.
We say that a station that received the broadcast message is {\em informed}.

\paragraph{Knowledge of stations.}
Each station knows its own ID, location, and parameters $n$, $N$.
Some subroutines use the granularity $g$ as a parameter, though
our main algorithms can use these subroutines without being aware
of the actual granularity of the input network.
We distinguish between {\em ad hoc} networks, where stations do not know anything
about the topology of the network at the beginning of the execution of an algorithm,
and networks with {\em local knowledge},
in which each station knows locations and IDs of its neighbors in the communication graph.

\remove{
depending on the algorithm, other general network parameters such as:
diameter $D$ of the imposed communication graph,
or granularity of the network $g$, defined as the inverse of the smallest
distance between any pair of stations.\footnote{%
In many cases, the assumption about the knowledge of $D,g$ can be dropped,
by running parallel threads for different ranges of values of these parameters and implementing an additional coordination mechanism between the threads.}
}

\subsection{Grids and Schedules}

Given a parameter $c>0$, we define 
a partition of the $2$-dimensional space
into square boxes of size $c\times c$ by the grid $G_c$, in such a way that:
all boxes are aligned with the coordinate axes,
point $(0,0)$ is a grid point,
each box includes its left side without the top
endpoint and its bottom side without the right endpoint and
does not include its right and top sides.
We say that $(i,j)$ are the coordinates
of the box with its bottom left corner located at $(c\cdot i, c\cdot j)$,
for $i,j\in \INT$. A box with coordinates
$(i,j)\in\INT^2$ is denoted $C(i,j)$.
As observed in \cite{DessmarkP07,EmekGKPPS09}, the {\em grid} $G_{r/\sqrt{2}}$
is very useful in design of algorithms for geometric radio networks, provided
$r$ is equal to the range of each station.
This follows from the
fact that $r/\sqrt{2}$ is the largest parameter of a grid such that each
station in a box is
in the range of every other station in that box.
In the following, we fix $\gamma=r/\sqrt{2}$, where $r=(1+\eps)^{-1/\alpha}$, and call $G_{\gamma}$
the {\em pivotal grid}. If not stated otherwise, our considerations will
refer to (boxes of) $G_{\gamma}$.

Two boxes $C,C'$ are {\em neighbors} in a network if there are
stations $v\in C$ and $v'\in C'$ such that edge $(v,v')$ belongs to the
communication graph of the network. Boxes $C(i,j)$ and $C'(i',j')$ are {\em adjacent} if
$|i-i'|\leq 1$ and $|j-j'|\leq 1$ (see Figure~\ref{fig:adjacent}).
For a station $v$ located in position $(x,y)$ on the plane we define its {\em grid
coordinates} with respect to the grid $G_c$ as the pair of integers $(i,j)$ such that the point $(x,y)$ is located
in the box $C(i,j)$ of the grid $G_c$ (i.e., $ic\leq x< (i+1)c$ and
$jc\leq y<(j+1)c$).
If not stated otherwise, we will refer to grid coordinates with respect
to the pivotal grid.

A (general) {\em broadcast schedule} $\mathcal{S}$ of length $T$
wrt  $N\in\NAT$ is a mapping
from $[N]$ to binary sequences of length $T$.
A station
with identifier $v\in[N]$ {\em follows}
the schedule $\m{S}$ of length $T$ in a fixed period of time consisting of $T$ rounds,
when
$v$ transmits a message in round $t$ of that period iff
the 
position $t\mod T$ of
$\m{S}(v)$ is equal to $1$.

A {\em geometric broadcast schedule} $\mathcal{S}$ of length $T$
with parameters $N,\delta\in\NAT$, $(N,\delta)$-gbs for short, is a mapping
from $[N]\times [0,\delta-1]^2$ to binary sequences of length $T$.
Let $v\in[N]$ be a station whose grid coordinates
with respect to
the grid $G_c$ are equal to $(i,j)$.
We say that $v$ {\em follows}
$(N,\delta)$-gbs $\m{S}$ 
for the grid $G_c$
in a fixed period of time, 
when $v$ transmits a message in round $t$ of that period iff
the $t$th position of
$\m{S}(v,i\mod \delta,j\mod\delta)$ is equal to $1$.
A set of stations $A$ on the plane is {\em $\delta$-diluted} wrt $G_c$, for $\delta\in\NAT\setminus\{0\}$, if
for any two stations $v_1,v_2\in A$ with grid coordinates $(i_1,j_1)$ and $(i_2,j_2)$, respectively,
the relationships $(|i_1-i_2|\mod \delta)=0$ and $(|j_1-j_2|\mod \delta)=0$ hold.

Let $\m{S}$ be a general broadcast schedule wrt $N$ of length $T$,
let $c>0$ and $\delta>0$, $\delta\in\NAT$.
A $\delta$-dilution of a  $\m{S}$ 
is defined as a $(N,\delta)$-gbs $\m{S}'$ such that the bit $(t-1)\delta^2+a\delta+b$
of $\m{S}'(v,a,b)$ is equal to $1$ iff the bit $t$ of $\m{S}(v)$
is equal to $1$. That is, each round $t$ of $\m{S}$ is
partitioned
into $\delta^2$ rounds
of $\m{S}'$, indexed by pairs $(a,b)\in [0,\delta-1]^2$, such that a station
with grid coordinates $(i,j)$ in $G_c$ is
allowed to send messages only in rounds with index $(i\mod\delta,j\mod\delta)$,
provided schedule $\m{S}$ admits a transmission in its (original) round $t$.
Since we will usually apply dilution to the pivotal grid, it is assumed that all references
to a dilution concern that grid, unless stated otherwise.

\begin{figure}
\begin{center}
\epsfig{file=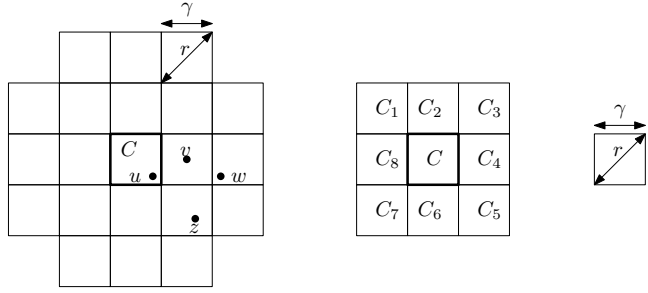, scale=0.8}
\end{center}
\caption{If $v,w,z$ are in the range are of $u$, then boxes
containing $v,w,$ and $z$ are neighbors of $C$. The first figure
contains all $20$ boxes which can be neighbors of $C$.
The boxes
$C_1,\ldots,C_8$ are adjacent to $C$.}
\label{fig:adjacent}
\end{figure}%
Observe that, since ranges of stations are equal to the length
of diagonal of boxes of the pivotal grid, a box $C(i,j)$ can have at most
$20$ neighbors (see Figure~\ref{fig:adjacent}).
We define the set $\DIR\subset[-2,2]^2$ such  that $(d_1,d_2)\in\DIR$ iff
it is possible that boxes with coordinates $(i,j)$ and $(i+d_1,j+d_2)$
can be neighbors.
Given $(i,j)\in\INT^2$ and $(d_1,d_2)\in\DIR$, we say that the box $C(i+d_1,j+d_2)$
is {\em located in direction} $(d_1,d_2)$ from the box $C(i,j)$.

\section{Algorithms for Networks with Local Knowledge}
\labell{s:local}

In this section we describe our broadcasting algorithms
for networks with local knowledge, i.e., under the assumption
that each stations knows (IDs and locations) of all stations
in its range area. Recall that we also assume that stations
know $n$, the size of the network and $N$, the range of identifiers.
We start with presenting a generic algorithmic scheme and tools for analysis.
Next, we describe an algorithm for networks with additionally known granularity
bound $g$, i.e., parameters $n,N$ and $g$ are known to the stations in the beginning
of the execution. Complexity of this algorithm is expressed in terms of $D$ and $g$;
note however that stations do not need any information about $D$ in order to execute
our algorithms.
Finally, using this algorithm as a subroutine, we provide a
solution for the general setting when only $n$ and $N$ are known.

\subsection{Generic Algorithmic Scheme}
\labell{s:generic}

\tj{In the first step of each broadcasting algorithm, the source sends the broadcast message.
Then, our broadcasting algorithms repeat several times the procedure
{\Transmit}, whose $i$th repetition is aimed to transmit
the broadcast message from boxes of the pivotal grid containing
at least one station that has 
received the broadcast message in the previous execution of {\Transmit} (or from the source)
to boxes which are their neighbors.}
%

Each station $v$ of the network is in state $s(v)$, which may be equal to
one of the following three values: \sone, \stwo, or \sthree.
At the beginning of execution of each of our broadcasting algorithms, the source
sends the broadcast message and all stations in its box of the pivotal
grid set their states to \stwo, while all the remaining stations are in the {\sone} state.
The states of stations change only at the end of {\Transmit},
according to the following rules:
\begin{itemize}
\vspace*{-1ex}
\item
All stations in state {\stwo} change their state to {\sthree}.
\item
A station $u$ changes its state from {\sone} to {\stwo} if it has received
the broadcast message from a station $v$ in the current
execution of {\Transmit}
such that either $v$ was in state {\stwo} (at the beginning of the current
execution of \Transmit) or $v$ belongs to the same box of the pivotal grid
as $u$. \tj{That is, let $C$ be a box of the pivotal grid,
let $u\in C$ be in state {\sone} at the beginning of {\Transmit}.
The only possibility that $u$ receives a message and it does not
change its state from {\sone} to {\stwo} at the end of {\Transmit} is that
each message received by $u$ is sent by a station $v$ which is in state {\sone} 
when it sends the message and $v\not\in C$.}
\end{itemize}
Our goal is to preserve the following invariant during the execution of our algorithms:
\begin{enumerate}
\item[(I)]\labell{i:I}
For each box $C$ of the pivotal grid, states of all stations located inside $C$
are equal.
\end{enumerate}
%
The intended property of an execution of {\Transmit} is:
\begin{enumerate}
\item[(P)]\labell{i:P}
The broadcast
message is (successfully)
sent from each box $C$ containing stations in state $\stwo$ to all stations located
in boxes which are neighbors of $C$.
(Recall that a box $C'$ is a neighbor of a box $C$ if there are
stations $v\in C$ and $v'\in C'$ such that edge $(v,v')$ belongs to the
communication graph.)
\end{enumerate}
Note that, since stations move to the state $\stwo$ only after receiving the
broadcast message, the following fact holds.
\begin{proposition}\labell{prop:invariants}
If (I) and (P) are satisfied, the source message is transmitted
to the whole network in time $O(D\cdot T(n))$, where $T(n)$ is time complexity
of one execution of {\Transmit}.
\end{proposition}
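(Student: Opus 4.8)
The plan is an induction over the consecutive executions of {\Transmit}, showing that the set of boxes whose stations are in state \stwo\ advances like a breadth-first wavefront, exactly one box-layer per execution. First I would introduce the auxiliary graph $H$ on the non-empty boxes of the pivotal grid, putting an edge between every two neighboring boxes; write $C_0$ for the source's box and $\delta_H(C)$ for the $H$-distance from $C_0$ to $C$. Before the induction I record two elementary facts. (i) $H$ is connected and in fact $\delta_H(C)\le D$ for every non-empty box $C$: projecting a shortest source-to-$v$ path in the communication graph (for $v\in C$) onto boxes gives a walk from $C_0$ to $C$ in which consecutive boxes are equal or neighbors, so its number of genuine box-changes, an upper bound on $\delta_H(C)$, is at most $D$. (ii) Any two stations of a common box are within each other's range — this is precisely why $\gamma=r/\sqrt2$ is chosen for the pivotal grid — so the source's single initial transmission is heard by every station of $C_0$.

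The core claim, proved by induction on $i\ge 0$, is: after the source's initial transmission followed by $i$ executions of {\Transmit}, invariant (I) holds and, for every non-empty box $C$, all stations of $C$ are informed and in state \sthree\ if $\delta_H(C)<i$, informed and in state \stwo\ if $\delta_H(C)=i$, and uninformed and in state \sone\ if $\delta_H(C)>i$. The base case $i=0$ is the configuration produced by the source's transmission, using fact (ii). For the step, observe that at the start of the $(i+1)$-st execution of {\Transmit} the state-\stwo\ boxes are exactly those at distance $i$, and that, as described above, each repetition of {\Transmit} transmits only from boxes holding a station first informed in the immediately preceding repetition, so every transmitter in this execution lies in such a box. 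Hence: (a) every box $C$ with $\delta_H(C)=i+1$ has an $H$-neighbor at distance $i$, so by property (P) all of $C$'s stations receive the message, necessarily from a state-\stwo\ sender, and therefore switch from \sone\ to \stwo; (b) every distance-$i$ box moves from \stwo\ to \sthree, and every box at distance $<i$ stays \sthree; (c) every box $C$ with $\delta_H(C)\ge i+2$ hears nothing, since no transmitter — all of which sit in distance-$i$ boxes — is even joined by a communication-graph edge to a station of $C$ (such a box would have $\delta_H\le i+1$), so $C$ stays uninformed and \sone. Invariant (I) survives because the stations inside a single box all receive the same promoting messages and so undergo the same transition. Taking $i=D$ and invoking fact (i), after $D$ executions every box, hence every station, is informed; the round count is then $1$ (the source's transmission) plus $D\cdot T(n)$, i.e. $O(D\cdot T(n))$.

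The part I expect to require the most care is the bookkeeping in the inductive step: ruling out any change of state ahead of the wavefront — a box at distance $\ge i+2$ must be left both uninformed and \sone, which relies jointly on the transmitters of execution $i+1$ all sitting in distance-$i$ boxes and on such a box having no communication-graph edge into a box at distance $\ge i+2$ — and, dually, checking that property (P) is strong enough to force \emph{every} station of each distance-$(i+1)$ box to receive a promoting message, not merely some of them. I would also note that this proposition concerns dissemination only; having each station additionally \emph{learn} that broadcasting has finished is handled by a separate mechanism and does not affect the $O(D\cdot T(n))$ bound.
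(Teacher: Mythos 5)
Your overall route --- induct on executions of {\Transmit} over the auxiliary box graph $H$, use the projection of shortest communication-graph paths to get $\delta_H(C)\le D$, and use property (P) together with the state-transition rule to advance the \stwo{} front by one $H$-layer per execution --- is sound, and it is a reasonable expansion of what the paper asserts without proof as an immediate consequence of the generic scheme. Facts (i) and (ii), the base case, and steps (a) and (b) are correct.

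The weak point is step (c). The premise that ``every transmitter in this execution lies in a distance-$i$ box'' is not implied by (I) and (P), and it is false for the paper's actual implementations of {\Transmit}: in Round~2 of Gran-{\Transmit} and Gen-{\Transmit}, stations still in state \sone{} that were informed only moments earlier --- hence sitting in boxes at distance $i{+}1$ --- also transmit, and a box at distance exactly $i{+}2$ can have a communication-graph edge into such a box. So a box two layers ahead of the front may well hear something and acquire informed stations; your inductive invariant ``uninformed and in state \sone{} if $\delta_H(C)>i$'' is too strong and fails at the next iteration. What does survive --- and is all the induction actually needs --- is the state half of the invariant: by the transition rule a station leaves \sone{} only upon receiving from a sender that was in state \stwo{} at the start of the execution (which forces its box to be an $H$-neighbour of a distance-$i$ box) or from a same-box sender, so no box at distance $\ge i+2$ becomes \stwo. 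Better still, since the proposition is only an upper bound, you can discard the exact-wavefront claim altogether and prove the monotone statement ``every box $C$ is entirely in state \stwo{} (hence informed) by the end of execution $\delta_H(C)$'' by induction on $\delta_H(C)$, using only (P), the transition rule, and the remark that \stwo{} stations already hold the broadcast message; this sidesteps any claim about what happens ahead of the front and still yields the $1+D\cdot T(n)=O(D\cdot T(n))$ bound.
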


In what follows, we give a specification of {\Transmit} first under the
assumption of known granularity $g$, and later we remove that assumption.

\vspace*{-1ex}
\subsection{A Granularity-Dependent Algorithm}
\labell{s:granularity-unknown}

In this section we describe a broadcasting algorithm
whose complexity depends on granularity. We assume that
granularity $g$ is known to all stations of the network.
First, we present a general leader election algorithm,
which, given a set of stations $V$ with granularity $g$, elects a leader
in each box of the pivotal grid containing at least one
element of $V$, in time $O(\log g)$. Then, using this algorithm, we describe
how to implement {\Transmit} in time $O(\log g)$ in such a way that (I) and (P) are
preserved.


\subsubsection{Leader Election}
Let $I_1=[i_1,j_1)$, $I_2=[i_2,j_2)$ be segments on a
line, whose endpoints belong to the grid
$G_x$.
%
\tj{The {\em box-distance} between $I_1$ and $I_2$ with respect to $G_x$ is zero when $I_1\cap I_2\neq\emptyset$,
and it is equal to $\min(|i_1-j_2|/x, |i_2-j_1|/x)$ otherwise. Given two rectangles
$R_1$, $R_2$, whose vertices belong to $G_x$, the box-distance $\distM(R_1,R_2)$ between $R_1$
and $R_2$ is equal to the maximum of the box-distances between projections
of $R_1$ and $R_2$ on the axes defining the first and the second dimension in the Euclidean
space.}

We say that a function $d_{\alpha}:\NAT\to\NAT$ is {\em flat} for $\alpha\geq 2$ if
\begin{equation}
d_{\alpha}(n)=\left\{
\begin{array}{rcl}
O(1) & \mbox{ for } & \alpha>2\\
O(\log n) & \mbox{ for } & \alpha=2
\end{array}
\right.
\end{equation}

\begin{lemma}\labell{lm:leaderGranularity}
Given a set of stations $V$ with granularity $g$, one can choose the leader in each
box of the pivotal grid containing at least one element of $V$ in $O(d_{\alpha}^2(n)\log g)$ rounds,
where $d_{\alpha}(n)$ is a flat function.

Moreover, if polynomial size of messages is allowed, each station can learn
positions of all (active) stations located in its box in $O(\log g)$ rounds.
\end{lemma}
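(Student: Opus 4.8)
The plan is to build the leader election out of two ingredients: (i) a way to let stations that are *very close* (within one box of the pivotal grid) communicate deterministically despite interference from far away, and (ii) a tournament-style reduction on ID bits that halves the number of candidates (or refines a coordinate range) each phase. The granularity bound $g$ enters because it caps the number of distinct positions inside a box, hence the depth of the tournament.

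First I would dilute all transmissions with respect to the pivotal grid using a constant dilution parameter $\delta$ (recall that $\DIR\subset[-2,2]^2$, so a constant $\delta$ suffices to make, in each dilution slot, every active box mutually non-adjacent). This localizes the problem: after dilution, the only interference a receiver in box $C$ can suffer comes from boxes at box-distance $\Omega(1)$, and since $\alpha\ge 2$ the tail sum $\sum \distM(\cdot,\cdot)^{-\alpha}$ of such contributions is bounded — by a constant when $\alpha>2$, and by $O(\log(\text{number of active rings}))=O(\log n)$ when $\alpha=2$. This is exactly where the flat function $d_\alpha(n)$ is born: a single in-box transmitter is heard by everyone in its box provided the ambient "noise floor" is inflated by a factor $d_\alpha(n)$, i.e.\ provided we rescale so that one clean transmission tolerates this much residual interference. (Formally one re-derives the two SINR reception conditions with $\cN$ replaced by $\Theta(d_\alpha(n))$; the $d_\alpha^2(n)$ in the round count then comes from applying such an argument in each of the two spatial dimensions, or equivalently from a constant-factor safety margin squared through the two-coordinate box-distance.)

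Next, within one box, I would run a $\lceil\log g\rceil$-phase elimination. Partition the box recursively (a quadtree-like halving in both coordinates); in phase $k$, the surviving candidates occupy one sub-cell of side $\gamma/2^{k}$, and because the minimum inter-station distance is $r/g$, after $O(\log g)$ halvings at most one station remains. To drive one phase, candidates whose relevant ID/position bit is $1$ transmit while the rest listen; by the previous paragraph every station in the box hears the transmission iff exactly the candidates with bit $1$ are nonempty and, crucially, the set of transmitters is itself $O(1)$-diluted inside the box — which we arrange by cycling through the $O(1)$ sub-cells so that at most one sub-cell's worth of stations (all co-located up to the current resolution, hence safely within range of each other even without dilution once the cell is small) speaks at a time. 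Standard bit-fixing (à la the radio-network technique \cite{KP04} the paper cites) then lets the listeners deduce which half to keep, so one candidate is pinned down after $O(d_\alpha^2(n)\log g)$ rounds total. For the "moreover" clause, instead of discarding losers we have every station, in its turn, announce its full position; since messages are polynomial-size and there are $O(g^2)$ slots but only $O(\log g)$ *occupied* resolution levels needed — here I would iterate over the $\le g^2$ candidate sub-cells but piggyback, so that each of the (at most polynomially many, in fact $O(g^2)$) stations broadcasts once; to hit $O(\log g)$ one aggregates: each already-heard station forwards the concatenated list, doubling coverage per round as in a convergecast, giving $O(\log g)$ rounds with polynomial messages.

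**The main obstacle** I anticipate is the $\alpha=2$ boundary case and, relatedly, making the "a lone in-box transmitter is always heard" claim fully rigorous: with $\alpha=2$ the interference from the $\Theta(k)$ boxes at box-distance $k$ contributes $\Theta(1/k)$ each, so the harmonic-type sum over up to $O(n)$ active rings is $\Theta(\log n)$, and one must check that the non-spontaneous/bounded-candidate structure really caps the number of simultaneously active boxes (and their spatial spread) so that this is the *only* place the $\log n$ creeps in — hence $d_\alpha$ flat and the clean $O(d_\alpha^2(n)\log g)$ bound. The second delicate point is the accounting for the "moreover" part: one must be careful that broadcasting *all* positions in $O(\log g)$ rounds is possible, which needs the convergecast-with-concatenation idea rather than a naive round-per-station scan; I would state it with the polynomial-message allowance doing the heavy lifting and defer the precise encoding to the construction.
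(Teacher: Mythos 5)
Your first ingredient (dilution with respect to the pivotal grid, with the residual far-interference sum giving rise to the flat function $d_\alpha(n)$, and the $d_\alpha^2(n)$ factor coming from two-dimensional dilution) matches the paper's corresponding proposition, and your observation that the granularity bound caps the depth of a spatial tournament at $O(\log g)$ is the right quantitative driver. The gap is in how you run a single phase of the tournament. Your elimination step needs every station in the box to learn whether the set of candidates with the current bit equal to $1$ is nonempty, where that set may have arbitrary size and its elements may be essentially co-located. In this model there is no collision detection and $\beta\ge 1$: if two or more nearly co-located candidates transmit simultaneously, each signal is interference for the others, the SINR at every receiver is below $1$, and \emph{nobody} hears anything --- which is indistinguishable from the case that no candidate transmitted. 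Your patch (``all co-located up to the current resolution, hence safely within range of each other even without dilution'') gets the physics backwards: co-location helps a \emph{single} transmitter reach its box-mates, but it makes simultaneous transmitters jam one another completely. So the listeners cannot ``deduce which half to keep,'' and the top-down quadtree elimination does not go through as stated. (It could be repaired by simulating collision detection via the Echo technique of \cite{KP04}, but that is precisely the machinery the paper reserves for the granularity-free general algorithm, and it needs its own setup --- a designated farthest pair and collision-avoiding parallel execution across boxes --- which your sketch does not provide.)

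The paper sidesteps this entirely by running the tournament \emph{bottom-up} rather than top-down. It starts from the grid $G_{\gamma/h}$ with $h\ge g$ a power of two; the granularity bound guarantees that each cell of this finest grid contains \emph{at most one} station, which is declared that cell's leader --- this is where $g$ does its real work, not merely in bounding the depth. Then for $\log h=O(\log g)$ levels it merges four cells into one: the at most four sub-cell leaders transmit in four separate sub-phases indexed by their position within the merged cell, each sub-phase being $d_\alpha(n)$-diluted over the whole plane, so every sub-phase has at most one transmitter per relevant region and every transmission is heard by the other sub-leaders; the smallest-labelled one wins. No emptiness test on a set of size greater than one is ever needed. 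The ``moreover'' clause then comes for free: leaders carry the concatenated lists of the stations in their current cells upward through the merges, so no separate convergecast is required.
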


The remaining part of this section is devoted to the proof of Lemma~\ref{lm:leaderGranularity}.
\begin{proposition}\labell{prop:lead1}
\tj{For each $\alpha\geq 2$ 
and $\eps>0$,
there exists a flat function $d_{\alpha}(n)$
such that
the following properties hold.}
Assume that a set of $n$ stations $A$ is $d$-diluted wrt the grid $G_x$, where $x=\gamma/c$, $c\in\NAT$, $c>1$
and $d\geq d_{\alpha}(n)$.
Moreover, at most one station from $A$ is located in each box of $G_x$. Then,
if all stations from $A$ transmit simultaneously, each of them
is $\frac{2r}{c}$-successful. Thus, in particular, each station from
a box $C$ of $G_x$ can transmit its message
to all its neighbors located in $C$ and in boxes $C'$ of $G_x$ which are adjacent to $C$.
\end{proposition}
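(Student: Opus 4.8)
The plan is to bound, for a fixed station $v\in A$ located in some box $C$ of $G_x$, the total interference at an arbitrary target point $u$ with $\dist(v,u)\le 2r/c$, and show it is small enough that $SINR(v,u,A)\ge\beta=1$ together with the strength condition $P_v\dist(v,u)^{-\alpha}\ge(1+\eps)\beta\cN$. Since $P_w=1$ for all $w$, $\cN=1$, $\beta=1$, the signal term is $\dist(v,u)^{-\alpha}\ge(2r/c)^{-\alpha}$; as $r=(1+\eps)^{-1/\alpha}$ this is $(c/2)^{\alpha}(1+\eps)^{-1}$, which is $\ge(1+\eps)$ as soon as $(c/2)^{\alpha}\ge(1+\eps)^2$, and for the boundary values of $c$ one absorbs the constant into the choice of $d_\alpha$ (or simply notes $c>1$ integer forces $c\ge2$, giving $(c/2)^\alpha\ge1$; the clean inequality is recovered by the dilution making interference negligibly small, so the true content is the interference bound).

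\medskip\noindent
First I would set up the geometric counting. The other transmitters $w\in A\setminus\{v\}$ are $d$-diluted wrt $G_x$, so in each ``ring'' of boxes of $G_x$ at box-distance roughly $k$ from $C$ there are at most $O((k/d)^2)$ stations of $A$ (they must sit in boxes whose coordinates differ by multiples of $d$). A station $w$ in a box at box-distance $k$ from $C$ is at Euclidean distance at least $\Omega(kx)=\Omega(k\gamma/c)=\Omega(kr/c)$ from any point in $C$, hence at distance $\ge\Omega(kr/c)-2r/c=\Omega(kr/c)$ from $u$ once $k$ is bigger than an absolute constant (and the finitely many nearby rings, $k=O(1)$, are handled separately: dilution with $d\ge d_\alpha(n)$ large forces there to be no other station of $A$ within any bounded box-distance, so those rings contribute $0$). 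Then

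\begin{equation}
\sum_{w\in A\setminus\{v\}}\dist(w,u)^{-\alpha}
\;\le\;\sum_{k\ge c_0 d}O\!\left(\frac{k^2}{d^2}\right)\left(\frac{c}{kr}\right)^{\alpha}
\;=\;O\!\left(\frac{c^{\alpha}}{d^{2}r^{\alpha}}\right)\sum_{k\ge c_0 d}k^{2-\alpha}.
\end{equation}

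\medskip\noindent
Now the two regimes for $\alpha$ appear exactly as in the definition of a flat function. For $\alpha>2$ the series $\sum_{k\ge c_0 d}k^{2-\alpha}$ converges and is $O(d^{3-\alpha})$, so the interference is $O(c^\alpha d^{1-\alpha}r^{-\alpha})$, which for a constant $d=d_\alpha(n)$ chosen large enough (depending only on $\alpha,\eps,c$ — and $c$ is itself bounded since we only use this for the fixed grids in the algorithm, or one keeps the $c^\alpha$ factor and demands $d\ge(\text{const})\cdot c^{\alpha/(\alpha-1)}$) is at most, say, $\frac{1}{\eps}\dist(v,u)^{-\alpha}-1$, giving $SINR(v,u,A)=\frac{\dist(v,u)^{-\alpha}}{1+\text{interference}}\ge\frac{\dist(v,u)^{-\alpha}}{(1/\eps)\dist(v,u)^{-\alpha}}\ge\beta$ after re-tuning constants; the strength condition follows from the same choice. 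For $\alpha=2$ the series diverges logarithmically, $\sum_{c_0 d\le k\le M}k^{0}=O(M)$, so we cannot sum to infinity — here one uses that the plane is finite: all $n$ stations lie within box-distance $O(g)$ (granularity bound) or, better for this proposition, within box-distance $O(\sqrt n\,d)$ of $C$ since there are only $n$ of them and they are $d$-diluted. Truncating at $k=O(\sqrt n\,d)$ gives interference $O(c^2 d^{-2}r^{-2}\cdot\sqrt n\,d)=O(c^2\sqrt n\,(dr)^{-2})$ — wait, that is not small; the correct accounting for $\alpha=2$ is to take $d=d_\alpha(n)=\Theta(\log n)$ and sum $\sum_{k\ge c_0 d}k^{-1}$ hmm that still diverges. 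The resolution is that for $\alpha=2$ one partitions the $n$ competing stations into $O(\log n)$ distance classes and the $d$-dilution with $d=\Theta(\log n)$ makes the dominant (nearest nonempty) class contribute $O((\log n)^{-2}\cdot(\log n)^2\cdot r^{-2})=O(r^{-2})$ with a small constant while each subsequent doubling class contributes a geometrically...

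\medskip\noindent
\emph{The main obstacle} is precisely this $\alpha=2$ borderline case: the interference series barely diverges, and one must exploit both the quadratic sparsity from $d$-dilution \emph{and} the fact that only $n$ stations exist to get a convergent (or $O(1)$-summable after the $\log n$ dilution) bound. Concretely, I expect the clean argument to be: group the at-most-$n$ interferers by $G_x$-box-distance into $O(\log_2 n)$ geometric bands $[2^{i}d,2^{i+1}d)$; band $i$ holds $O(2^{2i})$ stations each at distance $\Omega(2^i d r/c)$, contributing $O(2^{2i}\cdot(c/(2^i d r))^2)=O(c^2/(dr)^2)$ — independent of $i$; with $O(\log n)$ bands the total is $O(c^2\log n/(dr)^2)$, and choosing $d=d_\alpha(n)=\Theta(c\sqrt{\log n}\,)$ (a flat function up to the harmless $c$ factor, which is constant in all our applications) makes this $O(r^{-2})$ with as small a constant as we like; then re-tuning gives $SINR\ge\beta$ and the strength bound. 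For $\alpha>2$ the bands contribute $O(2^{(2-\alpha)i})$, a convergent geometric series, so a \emph{constant} $d$ suffices, matching $d_\alpha(n)=O(1)$. Finally the last sentence of the proposition is immediate: a neighbor of $v$ inside $C$ or in a box of $G_x$ adjacent to $C$ is at Euclidean distance at most $\mathrm{diam}(C)+\sqrt2\,x\le 2\sqrt2\,x=2\gamma/c=\sqrt2\,r/c\le 2r/c$ from $v$, so $2r/c$-success covers exactly those stations.
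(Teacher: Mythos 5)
Your final dyadic-band argument is, in substance, the paper's proof: lower-bound the signal at $u$ by $(c/(2r))^{\alpha}$, observe that $d$-dilution leaves at most $8(i+1)$ occupied boxes in the frame at box-distance $id$ from $v$'s box, bound the interference by $1+8\left(\frac{\sqrt2\,c}{r\bar d}\right)^{\alpha}\sum_{i\ge 1}(i+1)^{1-\alpha}$ with $\bar d\ge d-3$ (the $-3$ accounting for $u$ lying up to two boxes away from $v$'s box, since $2r/c<3x$), and compare against the signal. This yields a threshold $d\ge 3+O\bigl((K_\alpha(n)/(1-r^{\alpha}))^{1/\alpha}\bigr)$ where $K_\alpha(n)=\sum_{i=1}^{n}i^{1-\alpha}$, which is $O(1)$ for $\alpha>2$ and $O(\log n)$ for $\alpha=2$ because the sum is truncated at $n$ terms --- exactly your ``only $n$ stations exist'' observation.

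Two concrete defects need repair before your write-up is a proof. First, your per-ring count is wrong: a single ring of boxes at box-distance $k$ contains $O(k)$ boxes, of which dilution leaves $O(k/d)$ occupied, not $O((k/d)^{2})$ (the latter is the cumulative count inside radius $k$). With your overcount the series $\sum_{k}k^{2-\alpha}$ diverges for every $\alpha\in(2,3]$, so the assertion that it ``converges and is $O(d^{3-\alpha})$'' for all $\alpha>2$ is false. Your later dyadic regrouping ($O(2^{2i})$ occupied boxes in the band $[2^{i}d,2^{i+1}d)$, each at distance $\Omega(2^{i}dx)$) is correct and is just a resummation of the paper's frame-by-frame count; keep that version and delete the first. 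Second, the threshold for $d$ must not depend on $c$: the proposition promises a single flat $d_{\alpha}(n)$, and the algorithms invoke it with $c$ as large as the granularity (GranLeaderElection runs through $x=r/h$ down to $h\approx g$), so neither ``$c$ is constant in all our applications'' nor ``$d=\Theta(c\sqrt{\log n})$'' is acceptable. The fix is the comparison the paper actually makes: set the interference bound $O\bigl(K_\alpha(n)\,(c/(dr))^{\alpha}\bigr)$ against the signal $(c/(2r))^{\alpha}$ rather than against an absolute constant; the $c^{\alpha}$ factors cancel, and using $c\ge 2$ to absorb $(r/c)^{\alpha}\le 2^{-\alpha}r^{\alpha}$ one gets a threshold for $d$ depending only on $\alpha$, $\eps$ and $n$. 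With those two repairs your argument coincides with the paper's.
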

\begin{proof}
Recall that $r=(1+\eps)^{-1/\alpha}$ and $\gamma=r/\sqrt{2}$.
First, assume that $\alpha>2$.
Consider any station $u$ in distance smaller or equal to $\frac{2r}{c}\leq 2\sqrt{2}x<3x$ to a station $v\in A$.
Then, the signal from $v$ received by $u$ is at least
$$\frac1{\left(\frac{2r}{c}\right)^{\alpha}}=\left(\frac{c}{2r}\right)^{\alpha}.$$
Now, we would like to derive an upper bound on interferences caused by stations in $A\setminus\{v\}$
at $u$.
Let $C$ be a box of $G_x$ which contains $v$.
The fact that $A$ is $d$-diluted
wrt $G_x$ implies that the number of boxes containing elements of $A$
which are in box-distance
$id$ from $C$ is at most $8(i+1)$ (see Figure~\ref{fig:gran}).
Moreover, no box in distance $j$ from $C$ such
that ($j\mod d\neq 0$) contains elements of $A$.
\begin{figure}
\begin{center}
\epsfig{file=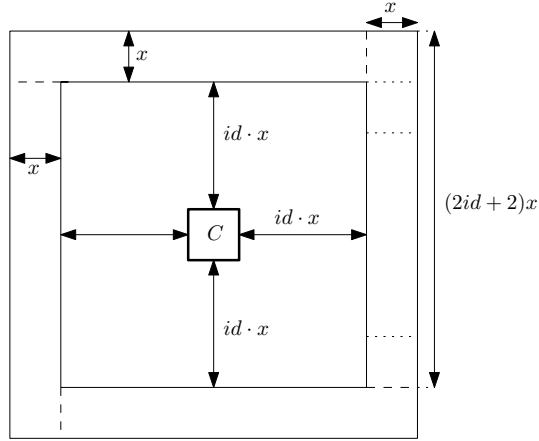, scale=0.8}
\end{center}
\caption{Boxes in distance $id$ from $C$ form a frame partitioned into four
rectangles of size $x\times (2id+2)x$. Each of these rectangles contain at most $i+1$
boxes such that any two of them are in box-distance at least $d$.}
\label{fig:gran}
\end{figure}%
Finally, for a station $v\in C$ and a station $w\in C'$ such that $\distM(C,C')=j$,
the inequality $\dist(v,u)\geq jx$ is satisfied.
Note that our goal is {\em not} to evaluate interferences at $v\in C$, but at any station
$u$ such that $\dist(u,v)\leq \frac{2r}c<3x$. Therefore, $u\in C'$ such that $\distM(C,C')<3$, where
$C'$ is a box of $G_x$.
For a fixed $d>3$, the total noise and interferences $I$ caused by
all elements of $A\setminus\{v\}$ at $u$ is at most
$$\cN+\sum_{i=1}^{n}8(i+1)\cdot\frac{1}{(i\bar{d}x)^{\alpha}}$$
where $d\geq\bar{d}\geq d-3$,
since there are at most $8(i+1)$ nonempty boxes in box-distance $i\cdot d$
from the box $C$
in $d$-diluted instance and the box-distance between $C$ and the box $C'$ containing $u$
is at most $2$. Furthermore,
$$I\leq 1+ 8\cdot\left(\frac{1}{\bar{d}x}\right)^{\alpha}\cdot\sum_{i=0}^{n}(i+1)^{1-\alpha}\leq 1+8\left(\frac{c\sqrt{2}}{r \bar{d}}\right)^{\alpha}\sum_{i=1}^{n}i^{1-\alpha}=1+8d_{\alpha}(n)\left(\frac{\sqrt{2}c}{r\bar{d}}\right)^{\alpha}$$
where $d_{\alpha}(n)=\sum_{i=1}^{n}i^{1-\alpha}=1+\zeta(\alpha-1)$, $\zeta$ is the Riemann zeta function and $\cN=1$.
So, 
the signal from $v$ is received at $u$ if the following
inequality is satisfied
\begin{equation}\label{eq:signal}
1+8d_{\alpha}(n)\left(\frac{\sqrt{2}c}{r\bar{d}}\right)^{\alpha}\leq \left(\frac{c}{2r}\right)^{\alpha}
\end{equation}
which is equivalent to
$$\bar{d}\geq 2\sqrt{2}\left(\frac{8d_{\alpha}(n)}{1-(2r/c)^{\alpha}}\right)^{1/\alpha}.$$
Assuming that $c\geq 2$, we have $1-(\frac{2r}{c})^{\alpha}\geq 1-r^{\alpha}$
and therefore (\ref{eq:signal}) is satisfied for each
$\bar{d}\geq 2\sqrt{2}\left(\frac{8}{1-r^{\alpha}}\right)^{1/\alpha}d_{\alpha}(n)$
or $d\geq 3+2\sqrt{2}\left(\frac{8}{1-r^{\alpha}}\right)^{1/\alpha}d_{\alpha}(n)$.


\paragraph{Note on dependence on $\eps$:}
by substituting $r:=(1+\eps)^{-1/\alpha}$, one can check that $d=O((1/\eps)^{1/\alpha})$ for $\alpha>2$.

\end{proof}

The following corollary is a straightforward application of Proposition~\ref{prop:lead1} for $c=2$.
\begin{corollary}\labell{cor:dilsuc}
For each $\alpha\geq2$ 
there exists
a flat function $d_{\alpha}:\NAT\to\NAT$
such that the following
property is satisfied: \\
Let $A$ be a set of $O(n)$ stations  on the plane which is
$\delta$-diluted wrt the pivotal grid $G_{\gamma}$,
where $\delta\geq d_{\alpha}(n)$ and each box
contains at most one element of $A$. Then, if all elements of $A$ transmit
messages simultaneously in the same round $t$ and no other station is transmitting
a message in $t$, each of them transmits successfully.\\
\end{corollary}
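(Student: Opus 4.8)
The plan is to obtain the statement by instantiating Proposition~\ref{prop:lead1} with $c=2$. For this choice the auxiliary grid of the proposition is $G_x$ with $x=\gamma/c=\gamma/2$, and its conclusion says that each transmitting station is $\frac{2r}{c}=r$-successful; since the neighbors of a station in the communication graph are exactly the stations within Euclidean distance $r$, being $r$-successful is the same as transmitting successfully. So it is enough to verify that the hypotheses of the corollary --- a set $A$ of $O(n)$ stations, $\delta$-diluted with respect to the pivotal grid $G_\gamma$, at most one element per box of $G_\gamma$, and $\delta\geq d_\alpha(n)$ for a suitable flat $d_\alpha$ --- imply those of Proposition~\ref{prop:lead1} relative to $G_{\gamma/2}$, after possibly enlarging the flat function by a constant factor. (This constant factor keeps the function flat, and it also absorbs the harmless passage from $n$ to $O(n)$, since the flat function of Proposition~\ref{prop:lead1} is nondecreasing and satisfies $d_\alpha(\Theta(n))=\Theta(d_\alpha(n))$.)

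The ``at most one element per box'' condition transfers for free, because every box of $G_{\gamma/2}$ lies inside a box of $G_\gamma$. The dilution condition is the one point needing a little care, since the corollary is not literally a special case of Proposition~\ref{prop:lead1}: a box of $G_\gamma$ splits into four boxes of $G_{\gamma/2}$, so a station with $G_\gamma$-coordinates $(i,j)$ has $G_{\gamma/2}$-coordinates $(2i+a,2j+b)$ with $a,b\in\{0,1\}$, and $\delta$-dilution with respect to $G_\gamma$ only forces the $G_{\gamma/2}$-coordinates to occupy at most two residue classes modulo $2\delta$ per axis. I would close this gap by partitioning $A$ according to the parities $(a,b)$ into at most four subsets, each genuinely $2\delta$-diluted with respect to $G_{\gamma/2}$ and still with at most one station per box of $G_{\gamma/2}$, and then bounding the total interference at any receiver $u$ by the sum, over these four subsets, of the per-subset interference bound produced in the proof of Proposition~\ref{prop:lead1}. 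This total is at most four times that bound, so inequality~(\ref{eq:signal}) with $c=2$ remains valid as soon as $\delta$ exceeds a constant multiple of the flat function of Proposition~\ref{prop:lead1}, and we take this constant multiple as $d_\alpha$ in the corollary.

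A cleaner alternative is simply to revisit the proof of Proposition~\ref{prop:lead1} and observe that it uses only the structural fact that, within box-distance $m$ of any fixed box of $G_x$, the number of boxes containing elements of $A$ is $O\bigl((m/d+1)^2\bigr)$; the hypotheses of the corollary yield exactly this for $A$ relative to $G_{\gamma/2}$ with $d=\Theta(\delta)$, because a box-distance measured in $G_{\gamma/2}$ is, up to an additive constant, twice the corresponding box-distance measured in $G_\gamma$. Either way, I expect the only real obstacle to be the bookkeeping for this factor-two mismatch between the pivotal grid and the twice-finer grid; with that in hand the statement is an immediate consequence of Proposition~\ref{prop:lead1} applied with $c=2$.
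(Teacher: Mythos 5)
Your proposal is correct and takes essentially the same route as the paper, which simply declares the corollary ``a straightforward application of Proposition~\ref{prop:lead1} for $c=2$'' without further argument. The factor-of-two grid mismatch you identify (the corollary's dilution is with respect to $G_{\gamma}$ while the proposition with $c=2$ speaks of $G_{\gamma/2}$) is real and is silently glossed over in the paper; your parity-class decomposition (or, equivalently, the direct box-counting observation) correctly closes that gap at the cost of only a constant factor in the flat function.
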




We say that a box $C$ of the grid $G_x$ has the {\em leader} from set $A$ if there is one station
$v\in A$ located in $C$
with status {\em leader} and all stations from $A$ located in $C$ know which station it is.

\begin{proposition}\labell{prop:lead2}
Assume that $A$ is a set of leaders in some boxes of the grid $G_x$, $x\leq\frac{\gamma}{2}$, and
each station knows whether it belongs to $A$. Then,
it is possible to choose the leader of each box of $G_{2x}$ containing at least one element of $A$
in $O(d_{\alpha}(n))$ rounds.
\end{proposition}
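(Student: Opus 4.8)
Each box $C'$ of $G_{2x}$ is the disjoint union of the four boxes of $G_x$ lying inside it, at relative positions $(a,b)\in\{0,1\}^2$; since $A$ holds at most one station per box of $G_x$, the set $A\cap C'$ has at most four members, one per occupied sub-box. I would simply declare the leader of $C'$ to be the member of $A\cap C'$ in the lexicographically smallest occupied sub-box (any fixed rule, e.g.\ smallest identifier, works as well), so that the task reduces to letting every member of $A\cap C'$ learn which station this is, for every $C'$. The assumption $x\le\gamma/2$ is what makes this feasible: a box of $G_{2x}$ then has diameter $2\sqrt2\,x\le\sqrt2\,\gamma=r$, so a single \emph{successful} transmission from a station inside $C'$ is received by all of $C'$.

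\textbf{Key step: a short diluted schedule.} Put $c=\gamma/x\ge 2$ (an integer in the intended applications, where $x=\gamma/2^k$). I would run a schedule in which every member of $A$ transmits its identifier and position exactly once. Fix a dilution factor $m$ as small as the analysis of Proposition~\ref{prop:lead1} permits for the grid $G_x$; inspecting that proof, $m=\Theta(d_\alpha(n)^{1/\alpha})$ suffices, i.e.\ $m=O(1)$ for $\alpha>2$ and $m=O(\sqrt{\log n})$ for $\alpha=2$. Index the schedule by pairs $\big((a,b),(e,f)\big)$ with $(a,b)\in\{0,1\}^2$ and $(e,f)\in[0,m-1]^2$: in round $\big((a,b),(e,f)\big)$ every $v\in A$ occupying sub-box $(a,b)$ of a $G_{2x}$-box whose $G_{2x}$-coordinates are congruent to $(e,f)$ modulo $m$ transmits. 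The stations transmitting in one such round have $G_x$-coordinates of the form $(2e+a+2mk,\,2f+b+2ml)$, hence form a set that is $2m$-diluted with respect to $G_x$ and contains at most one station per box of $G_x$; by Proposition~\ref{prop:lead1} (or by Corollary~\ref{cor:dilsuc} when $G_{2x}$ is the pivotal grid) each such transmitter is $\tfrac{2r}{c}$-successful, and since $\tfrac{2r}{c}=2\sqrt2\,x$ equals the diameter of a $G_{2x}$-box, every member of $A$ in the same $G_{2x}$-box receives the message. After all $4m^2=O(d_\alpha(n))$ rounds, each $v\in A$ knows the identifiers and positions of all members of $A$ in its own $G_{2x}$-box, so it computes the leader locally; stations outside $A$ stay silent, which they can do since each station knows whether it lies in $A$.

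\textbf{Main obstacle.} The whole content sits in the interference estimate invoked through Proposition~\ref{prop:lead1}: one has to verify that an $\Theta(d_\alpha(n)^{1/\alpha})$-fold dilution on both axes already (i)~keeps the aggregate interference below the reception threshold at \emph{every} point of a transmitting $G_{2x}$-box — the intended receiver may lie a full diameter $2\sqrt2\,x$ away from the transmitter, so the tail of the interference series must be summed against that offset — and (ii)~is cheap enough that cycling through its $\Theta(d_\alpha(n))$ residue classes (times the constant number, $4$, of sub-box positions) fits within the claimed $O(d_\alpha(n))$ rounds. Both facts are exactly what Proposition~\ref{prop:lead1} delivers, so once it is invoked the remainder is routine bookkeeping; if instead one cites only the weaker hypothesis ``$d\ge d_\alpha(n)$'' of that proposition, the same argument still yields a flat round bound, $O(d_\alpha^2(n))$, which also suffices for Lemma~\ref{lm:leaderGranularity}.
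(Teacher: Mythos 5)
Your proposal is correct and follows essentially the same route as the paper: split each box of $G_{2x}$ into its four $G_x$ sub-boxes, cycle through the four relative positions, superimpose a dilution so that Proposition~\ref{prop:lead1} makes every transmission $\tfrac{2r}{c}$-successful (hence heard across the whole $G_{2x}$-box of diameter $2\sqrt2\,x$), and let each leader pick the winner locally by a fixed rule. The only difference is your sharper accounting of the dilution factor $m=\Theta\bigl(d_\alpha(n)^{1/\alpha}\bigr)$, which actually yields the stated $O(d_\alpha(n))$ round bound, whereas the paper applies Proposition~\ref{prop:lead1} with a $d_\alpha(n)$-fold dilution and thus really spends $O(d_\alpha^2(n))$ rounds per call --- the bound that is in fact carried into Lemma~\ref{lm:leaderGranularity} and Proposition~\ref{prop:leader}.
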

\begin{proof}
Note that each cell of $G_{2x}$ consists of four boxes of $G_x$. Let us fix some labeling of this four
boxes by the numbers $\{1,2,3,4\}$, the same in each box of $G_{2x}$.
Now, assign to each 
station from $A$
the label $l\in[1,4]$ corresponding
to its position it the box of $G_{2x}$ containing it. We ``elect'' leaders in $G_{2x}$ in four phases
$F_1,\ldots,F_4$. Phase $F_i$ is just the application of Proposition~\ref{prop:lead1} for $A$ equal to the
set of leaders with label $i$.
%
\tj{That is, we first have a general broadcast schedule $S$ of length $4$ such that
position $i$ of $S(v)$ is equal to $1$ iff label of $v$ is $1$. Then, $S$ is
$d$-diluted wrt $(N,x)$, where $d\gets d_{\alpha}(n)$ and $d_{\alpha}$ is the
constant from Proposition~\ref{prop:lead1}.}
Therefore, each leader from $A$ can hear messages of all other (at most)
three leaders located in the same box of $G_{2x}$. Then, for a box $C$ of $G_{2x}$, the leader with the
smallest label (if any) among leaders of the four sub-boxes of $C$ becomes the leader of $C$.
\end{proof}

Assume that granularity of a network is 
equal to $g$. Let $h=\min_{i\in\NAT}({2^i\,|\, 2^i\geq g})$.
Since $h\geq g$, each box of $G_{\gamma/h}$ is occupied by at most one station -- its leader. We choose the leader
of each box of the pivotal grid by the algorithm GranLeaderElection (Algorithm~\ref{alg:gran}), which starts from assuming that all (active) stations are leaders
of respective boxes of $G_{\gamma/h}$ (note that there is at most one station in each box of this grid). Then,
it repeatedly applies the technique from Proposition~\ref{prop:lead2} in order to gradually obtain
leaders of larger boxes.


\begin{algorithm}[H]
	\caption{GranLeaderElection($V,g$)}
	\label{alg:gran}
	\begin{algorithmic}[1]
    \State $h\gets\min_{i\in\NAT}({2^i\,|\, 2^i\geq g})$
	\State  $x\gets r/h$;
    \State Each station $v\in V$ gets status leader of the appropriate box of $G_x$.
    \For{$i=1,2,\ldots,\log h$}
        \State   Choose leaders of boxes of $G_{2x}$ from
  leaders of $G_x$, using Proposition~\ref{prop:lead2}.
        \State $x\leftarrow 2\cdot x$
    \EndFor
    \end{algorithmic}
\end{algorithm}
Finally, we summarize properties of Algorithm GranLeaderElection in the following
proposition.
\begin{proposition}\labell{prop:leader}
Algorithm GranLeaderElection chooses the leader in each box of the pivotal
grid containing at least one element of $V$ in time
$O(\log g d_{\alpha}^2(n))$, where $d_{\alpha}$ is a flat function, provided granularity
of $V$ is not larger than $g$.
\end{proposition}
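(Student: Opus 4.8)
The plan is to establish correctness and the running time of Algorithm~\ref{alg:gran} simultaneously, by induction on the scale of the grid it processes. Write $x_0=\gamma/h$ for the finest grid used in the initialization and $G_{2^i x_0}$ for the grid obtained after the $i$th iteration of the loop. The invariant I will carry is: after iteration $i$, every box of $G_{2^i x_0}$ that contains at least one element of $V$ holds exactly one station with status leader, and all surviving leaders (the current set $A$) inside that box agree on its identity. Because $h=\min\{2^i:2^i\ge g\}$ is a power of two, doubling $x_0$ exactly $\log h$ times reaches $G_{x_0 h}=G_{\gamma}$, the pivotal grid, so proving the invariant for $i=\log h$ gives the claimed leader election.

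For the base case I invoke the granularity bound. A box of $G_{x_0}$ has diagonal $x_0\sqrt2=\gamma\sqrt2/h=r/h$, and since $h\ge g$ this is at most the minimum inter-station distance implied by granularity $g$; hence every box of $G_{x_0}$ contains at most one element of $V$, which is trivially the unique leader of its box and is known (to itself). This is exactly the configuration set in the initialization step.

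For the inductive step I check the hypotheses of Proposition~\ref{prop:lead2} and use it as a black box. At the start of iteration $i$ the current side is $x=2^{i-1}x_0$; its largest value, attained in the final iteration, is $\gamma/2$, so the precondition $x\le\gamma/2$ holds throughout. By the invariant the current leader set $A$ has at most one station per box of $G_x$ and each station knows whether it lies in $A$. Proposition~\ref{prop:lead2} then elects, in every box of $G_{2x}$ that meets $A$, a unique leader known to all elements of $A$ in that box. A box of $G_{2x}$ contains an element of $V$ iff one of its four $G_x$-subboxes is nonempty iff (by the invariant) it contains an element of $A$, so the new leaders cover exactly the nonempty boxes of $G_{2x}$ and again form an at-most-one-per-box set, re-establishing the invariant at scale $2x$. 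If, beyond the leaders, every station of a pivotal box must learn the elected leader, one extra round in which the $G_\gamma$-leaders transmit after a single $d_{\alpha}(n)$-dilution informs all of them: by Corollary~\ref{cor:dilsuc} each such transmission is $r$-successful, and same-box stations are neighbors.

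It remains to bound the running time. The loop executes $\log h$ times, and since $g\le h<2g$ we have $\log h=O(\log g)$. For a single merge I unfold Proposition~\ref{prop:lead2}: its construction $d$-dilutes a length-four broadcast schedule with $d=d_{\alpha}(n)$, and by the definition of $\delta$-dilution each of the four rounds expands into $d^2=d_{\alpha}^2(n)$ physical rounds, so one merge costs $O(d_{\alpha}^2(n))$ rounds; multiplying by the $O(\log g)$ iterations yields the stated $O(\log g\, d_{\alpha}^2(n))$. The main obstacle is not this arithmetic but guaranteeing that the interference estimate of Proposition~\ref{prop:lead1} keeps applying at every scale: this is precisely why the leaders must be propagated upward as an at-most-one-per-box family and why a fresh $d_{\alpha}(n)$-dilution is reintroduced at each level, and it is the reason the per-level cost is $d_{\alpha}^2(n)$ rather than $d_{\alpha}(n)$.
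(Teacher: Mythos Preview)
Your proof is correct and follows the same route as the paper, which in fact states Proposition~\ref{prop:leader} only as a summary of the algorithm and Propositions~\ref{prop:lead1} and~\ref{prop:lead2} without giving a separate argument. Your inductive invariant, the base case via the granularity bound on $G_{\gamma/h}$, and the per-level application of Proposition~\ref{prop:lead2} are exactly what the paper intends; you also correctly unfold the cost of a single merge as $O(d_{\alpha}^2(n))$ rounds (four phases times a $d_{\alpha}(n)$-dilution, hence a $d_{\alpha}^2(n)$ blow-up), which matches the bound in the proposition even though the statement of Proposition~\ref{prop:lead2} itself reads $O(d_{\alpha}(n))$.
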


\subsubsection{Broadcasting Algorithm}
Given the algorithm electing the leaders in boxes of the pivotal grid,
we describe implementation of procedure {\Transmit}, called here Gran-{\Transmit}. In this way we obtain
algorithms {\AlgG}, which repeats Gran-{\Transmit} several times.

We say that a station $v$ is {\em $(d_1,d_2)$-connected}, for 
$(d_1,d_2)\in\DIR$ iff
$v\in C(i,j)$ for a box $C(i,j)$ of the pivotal grid and $v$
has a neighbor in the box $C(i+d_1,j+d_2)$ of the pivotal grid. Below, we formally
describe {\Transmit} procedure, which applies the leader election procedure
in order to transmit a message from each box containing stations in state
$\stwo$ to its neighbors. More precisely, for each direction $(d_1,d_2)\in\DIR$, 
the application
of leader election chooses one station $v$ in $C$ which has a neighbor in the box $C'$ located
in the direction $(d_1,d_2)$ from $C$ (if there is such a station in $C$)
and that station transmits successfully.
\tj{Then, the neighbor $u\in C'$ of $v$ with the smallest ID is chosen to broadcast
the message to all stations from $C$.}
In order to formalize this idea, assume that
$u,v$ are such stations that $u\in C'$ for a box $C'$ of the pivotal grid
and $u$ is in the range area of $v$. We say that $u$ {\em dominates box} $C'$ with respect to $v$
if $u=\min\{w\,|\, w\in C'\mbox{ and } w\mbox{ is in the range area of }v\}$.
\begin{algorithm}[H]
 \caption{Gran-\Transmit$(g)$}
 \label{alg:grantrans}
 \begin{algorithmic}[1]
  \For{$(d_1,d_2)\in\DIR$}
    \State $V_{(d_1,d_2)}\gets\{v\,|\, s(v)=\stwo\mbox{ and }v\mbox{ is } (d_1,d_2)\mbox{-connected}\}$
    \State GranLeaderElection$(V_{(d_1,d_2)},g)$ \Comment{\tj{leader of $(d_1,d_2)$-connected stations}}
	\State $d\gets d_{\alpha}(n)$, \Comment{$d_{\alpha}$ is a flat function from Corollary~\ref{cor:dilsuc}}
	  \For{$(j,k)\in[0,d-1]^2$}
	    \State \textbf{Round $1$}: a station $v$ transmits if it is elected
	    the leader
	    of its box (of the pivotal grid)
	    \State in step 3
	    during GranLeaderElection$(V_{(d_1,d_2)},g)$ and $v\in C(j',k')$
	    such that
	    \State $(j'\mod d,k'\mod d)=(j,k)$.
	    \State \textbf{Round $2$:} station $u$ transmits if: $s(u)=\sone$,
	    $u$ could hear $v$ in Round~$1$, $u\in C(j',k')$
	    \State such that
	    $((j'-d_1)\mod d,(k'-d_2)\mod d)=(j,k)$,
	    \State and $u$ dominates its box wrt $v$.
	  \EndFor
  \EndFor
  \State For each $v\in V$ such that $s(v)=\stwo$: $s(v)\gets \sthree$.
 \end{algorithmic}
\end{algorithm}


\begin{proposition}\labell{prop:gtrans}
Algorithm {Gran-\Transmit} works in time $O(d_{\alpha}^2(n)\log g)$ for a flat function
$d_{\alpha}:\NAT\to\NAT$ and it preserves properties (I) and (P).
\end{proposition}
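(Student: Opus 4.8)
The plan is to verify the two claims of Proposition~\ref{prop:gtrans} --- the time bound and the preservation of invariants (I) and (P) --- by tracing the structure of Algorithm~\ref{alg:grantrans}. For the time bound, I would simply add up the costs over the outer loop: the set $\DIR$ has constant size (at most $20$ directions, by the discussion around Figure~\ref{fig:adjacent}), so the total cost is $O(1)$ times the cost of one iteration. Inside one iteration, the call to \textsc{GranLeaderElection}$(V_{(d_1,d_2)},g)$ costs $O(d_\alpha^2(n)\log g)$ rounds by Proposition~\ref{prop:leader}, and the doubly-indexed loop over $(j,k)\in[0,d-1]^2$ with $d=d_\alpha(n)$ contributes $O(d_\alpha^2(n))$ rounds (two rounds per iteration). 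Since $d_\alpha^2(n)=O(d_\alpha^2(n)\log g)$, the per-direction cost is $O(d_\alpha^2(n)\log g)$, and hence so is the total, as claimed.

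For property (P): fix a box $C$ containing stations in state $\stwo$ and a neighbor box $C'$ of $C$, located in direction $(d_1,d_2)\in\DIR$. I want to show every station of $C'$ receives the broadcast message, and also that each station of $C$ does. First, since $C$ and $C'$ are neighbors, there is a station in $C$ that is $(d_1,d_2)$-connected, so $V_{(d_1,d_2)}$ contains at least one station of $C$; by Proposition~\ref{prop:leader} a leader $v\in C$ is elected for this set. In the $(j,k)$-iteration matching $v$'s grid coordinates mod $d$, in Round~1 the set of transmitting stations is exactly the set of $(d_1,d_2)$-leaders whose boxes agree mod $d$ in both coordinates --- this set is $d$-diluted wrt the pivotal grid with one station per box --- so by Corollary~\ref{cor:dilsuc} (with $\delta=d\ge d_\alpha(n)$) $v$ transmits successfully, i.e.\ reaches all its neighbors, which includes all of $C$ (same box) and its neighbor $u\in C'$ dominating $C'$ wrt $v$. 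In Round~2 the transmitting stations are the dominators $u$ of boxes one step in direction $-(d_1,d_2)$ from a leader's box, restricted again to boxes agreeing mod $d$; one checks this set is again $d$-diluted with at most one station per box (the dominators sit in distinct boxes because their "parent" leaders do), so by Corollary~\ref{cor:dilsuc} $u$ transmits successfully and in particular reaches all of $C'$. Ranging over all $(d_1,d_2)\in\DIR$ covers all neighbor boxes of $C$, giving (P). One subtlety to check carefully: the Round~2 dominator $u$ heard $v$ in Round~1 (guaranteed since $v$ was successful and $u$ is a neighbor), so $u$ knows it is the dominator; and because we only run Round~2 for the $(j,k)$ matching $u$'s box shifted by $-(d_1,d_2)$, the dominators active in a single Round~2 indeed form a diluted set with the right separation --- this is where the shift by $(d_1,d_2)$ in the index condition does its work.

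For property (I): I must show that after the state update (last line of Algorithm~\ref{alg:grantrans}), all stations in any fixed box $C$ end in the same state, assuming (I) held at the start. If all stations of $C$ started in $\stwo$, they all move to $\sthree$ together. If all started in $\sthree$, they stay $\sthree$. If all started in $\sone$, then by the state-transition rules from Section~\ref{s:generic}, a station $u\in C$ moves to $\stwo$ iff it received the message from some $v$ that was in state $\stwo$ or in box $C$; since (I) held, no station of $C$ was in $\stwo$, so $u$ moves to $\stwo$ iff it received from some $\stwo$-station $v$ outside $C$. But a $\stwo$-station $v$ lies in a neighbor box of $C$, and by property (P) --- just established for this execution --- every station of $C$ receives $v$'s message; conversely if no neighbor box of $C$ contains a $\stwo$-station then no station of $C$ receives from a $\stwo$-source and, by the rule, none changes state. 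Either way all stations of $C$ make the same transition, so (I) is preserved.

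The main obstacle I expect is the Round~2 dilution bookkeeping: one must confirm that the dominators simultaneously transmitting in a given $(j,k)$-iteration really do form a $d$-diluted set with at most one per box, so that Corollary~\ref{cor:dilsuc} applies. This hinges on the fact that each such dominator is attached to a distinct leader box (leaders are one per box of the pivotal grid), and the index condition $((j'-d_1)\bmod d,(k'-d_2)\bmod d)=(j,k)$ forces the leader boxes --- hence the dominator boxes --- to be $d$-separated in both coordinates. The rest is routine given Proposition~\ref{prop:leader}, Corollary~\ref{cor:dilsuc}, and the constant bound $|\DIR|=O(1)$.
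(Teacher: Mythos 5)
Your proof follows essentially the same route as the paper: time from Proposition~\ref{prop:leader} plus the constant-size $\DIR$ and the $d^2$ sub-rounds; (P) via the two-round mechanism (leader of $C$ reaches the dominator in $C'$ in Round~1, the dominator relays to all of $C'$ in Round~2, with Corollary~\ref{cor:dilsuc} justifying both rounds because the transmitting sets are $d$-diluted with one station per box). Your extra care about why the Round-2 dominators form a $d$-diluted set with at most one per box is detail the paper asserts without spelling out, and it is correct.

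There is, however, a slip in your argument for (I). You write that a station $u\in C$ (with all of $C$ initially \sone) ``moves to \stwo\ iff it received from some \stwo-station $v$ outside $C$.'' That drops the second clause of the transition rule --- $u$ also moves to \stwo\ if it receives from a station \emph{in the same box} $C$, regardless of that station's state --- and this dropped clause is precisely the mechanism the algorithm relies on. A station of $C$ far from the external leader $v$ need not hear $v$ at all; it hears only the Round-2 retransmission by the dominator $u^*\in C$, which is itself still \sone\ at that moment. Under your restricted ``iff'' such a station would not transition, and (I) would fail. The paper's proof of (I) is built exactly around this point: the dominator (determined locally, since every station of $C$ knows the positions of all stations in $C$ and $v$ sends its own position, so each can compute the minimum-ID station of $C$ in $v$'s range) retransmits in Round~2, its message reaches all of $C$ because the Round-2 set is $d$-diluted, and the same-box clause then forces every station of $C$ to move to \stwo\ together. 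Your (P) argument already contains all the ingredients for this, so the fix is local: restate the transition rule correctly and route the (I) conclusion through the same-box dominator rather than through direct reception from the external active station.
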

\begin{proof}
Time complexity bound follows directly from Proposition~\ref{prop:leader} and
Corollary~\ref{cor:dilsuc}.

In order to prove (I), it is sufficient to show that in each box $C$ of the pivotal
grid and each execution of {Gran-\Transmit}, either all stations in $C$ move from the state
{\sone} to {\stwo}, or none station in $C$ changes its state from {\sone} to {\stwo} during
that execution of {Gran-\Transmit}.
Here we benefit from the fact that stations know their neighborhood. If a station $u$
from a box $C$ and in state $\sone$ receives a message from a station $v$ in state {\sthree},
and $u$ knows that $v$ transmits successfully,
then $u$ is also able to determine which other stations
in box $C$ receive the same message in the current round
(since it knows positions of these stations and $v$ sends its position inside a message).
In this way, the unique station $u$ (with smallest ID) among stations from box $C$ that have received the message from $v$ can be determined,
and this station transmits a message.
This
message is successfully heard by
all other stations in $C$ in the appropriate Round~2 (see line 9 of the algorithm),
since the set of stations sending messages
in Round~2 is $d$-diluted.
Assuming that all stations located in $C$ are in the state {\sone} at
the beginning of {\Transmit}, they change their states to {\stwo} at the end
of this execution of {\Transmit}.

As for (P), we make use of the fact that (I) is satisfied at the beginning
of each {\Transmit}. Thus, either all stations in a box $C$ are in state
{\stwo} at the beginning of {\Transmit} or none is.
In the former case, the
correctness of GranLeaderElection (see Proposition~\ref{prop:leader})
guarantees that if $C'$ is a neighbor of $C$ in direction $(d_1,d_2)$,
then a unique station $v$ from $C$ is chosen in line $3$,
which has a neighbor in $C'$
and then $v$
transmits
successfully in line $6$ (i.e., in Round~1, see Corollary~\ref{cor:dilsuc} for justification).
\end{proof}

Finally, we obtain the following result.
\begin{theorem}
Algorithm {\AlgG} performs broadcasting in a $n$-node network of diameter
$D$ with granularity $g$ in time $O(Dd_{\alpha}^2(n)\log g)$, where $d_{\alpha}$ is
a flat function.
\end{theorem}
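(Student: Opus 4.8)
The plan is to obtain the theorem as a short composition of the two structural facts already established for \AlgG, namely Proposition~\ref{prop:invariants} and Proposition~\ref{prop:gtrans}. Recall that \AlgG\ consists of the source transmitting the broadcast message once and then repeatedly invoking Gran-\Transmit, with the state updates (between \sone, \stwo, \sthree) applied between consecutive invocations. So the whole argument reduces to three points: (i) that the invariants (I) and (P) hold initially and are maintained; (ii) that one invocation of Gran-\Transmit costs $O(d_\alpha^2(n)\log g)$ rounds; and (iii) that $O(D)$ invocations suffice to inform the whole network.

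First I would check the base case of invariant (I): by the initialization rule every station in the source's box of the pivotal grid is put in state \stwo\ and every other station is in state \sone, so inside each box either all stations are \stwo\ or all are \sone, i.e.\ (I) holds before the first Gran-\Transmit. Then Proposition~\ref{prop:gtrans} says each invocation of Gran-\Transmit preserves (I) (hence the invariant holds inductively before every subsequent invocation) and guarantees property (P) for that invocation, while running in $T(n)=O(d_\alpha^2(n)\log g)$ rounds for a flat function $d_\alpha$. Feeding this into Proposition~\ref{prop:invariants}, whose hypothesis is exactly that (I) and (P) are satisfied throughout, yields that the source message reaches every node within $O(D\cdot T(n))=O(D\,d_\alpha^2(n)\log g)$ rounds, which is the claimed bound. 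The only place the granularity enters is inside Gran-\Transmit: the parameter $g$ tells the stations at which resolution $G_{\gamma/h}$ (with $h$ the least power of two that is at least $g$) to start GranLeaderElection, so that each box of that fine grid holds at most one station, which is precisely what Proposition~\ref{prop:lead1} and Proposition~\ref{prop:leader} require.

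The substantive content is not in this wrap-up but in the two cited propositions, so if I had to point to the ``hard part'' it would be (re-)proving Proposition~\ref{prop:gtrans}. There, the delicate step for property (I) is the within-box re-broadcast: when a station $u\in C$ in state \sone\ hears a successful transmission from an informed station $v$, local knowledge lets $u$ compute exactly which stations of $C$ also heard $v$, pick the canonical one (smallest ID), and have it re-transmit in the appropriate Round~2 slot; this re-transmission must itself reach all of $C$, which is where the $d$-dilution of the Round~2 transmitters together with Corollary~\ref{cor:dilsuc} is invoked to defeat SINR interference from distant simultaneous senders. For property (P), the point is that, conditioned on (I) holding at the start of Gran-\Transmit, GranLeaderElection (Proposition~\ref{prop:leader}) correctly elects, for each direction $(d_1,d_2)\in\DIR$, a single $(d_1,d_2)$-connected leader per box, and Corollary~\ref{cor:dilsuc} makes that leader's Round~1 transmission successful, so a neighbour in the target box is reached. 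One should also double-check the counting behind the $O(D)$ iterations: a path of length $\ell$ in the communication graph visits boxes any two consecutive of which are neighbour boxes, so the ``box-hop'' distance from the source's box to any box is at most $D$, and since each Gran-\Transmit with property (P) advances the set of \stwo-boxes by one box-hop, $O(D)$ iterations inform every box --- this is exactly the argument packaged inside Proposition~\ref{prop:invariants}.
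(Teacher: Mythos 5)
Your proposal is correct and matches the paper's route exactly: the theorem is stated in the paper without a separate proof, as an immediate consequence of Proposition~\ref{prop:gtrans} (per-invocation cost $O(d_\alpha^2(n)\log g)$ and preservation of (I) and (P)) combined with Proposition~\ref{prop:invariants} (total time $O(D\cdot T(n))$ once (I) and (P) hold). Your additional checks --- the base case of (I) at initialization and the recap of where dilution and GranLeaderElection enter --- are consistent with the paper's arguments and add nothing that conflicts with them.
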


\subsection{General Algorithm}
\labell{s:diam-gen}

In order to deal with networks with unlimited granularity, we propose a method
of ``decreasing'' granularity to the level of $2^{O(\log^2n)}$ in
time $O(\log^2 n)$. When granularity is decreased, we apply protocols
designed for networks with bounded granularity.

Our method of decreasing granularity applies a technique of simulating collision
detection in radio networks without collision detection, called Echo, c.f.,~\cite{KP04}.
Using a modified Echo procedure, we can choose ``representatives'' of dense areas
of a (box of a) network, which will work ``on behalf'' of whole such areas.
In this way we decrease granularity of the network. Importantly, this procedure
does not harm connectivity of the network nor changes its eccentricity more than by
a constant multiplicative factor. We describe this technique in Section~\ref{sub:echo}.

The above mentioned method of choosing representatives (of ``dense'' areas) works correctly when applied
to one set of stations such that each of them is in the range area of each other.
However, when one tries to apply it simultaneously to several remote groups of stations,
interferences incurred in the SINR model can disrupt these executions.
Therefore,
before applying the above method of decreasing granularity,
we first design an offline procedure --- based on the local views of stations ---
that partitions the set of stations
in a box of the pivotal grid into $\log n$ families of sets. (Note that each station
knows all elements of its box of the pivotal grid, since these stations are in its
range area.) The key property of this partition is that the sets in one family
$F$ (called {\em color}) are located
in such a way that one can execute the leader election procedure (i.e.,
the choice of representatives) based on Echo simultaneously on all sets from $F$.
Since each set in each family covers a square with side's length at least
$r/2^{O(\log^2 n)}$,
the leaders (representatives) elected in separated sets form
subnetworks with granularity $2^{O(\log^2 n)}$. This
local pre-processing
procedure is described in Section~\ref{sub:dilution}.

Finally, in Section~\ref{sub:genalg}, we provide algorithm {\AlgD}.
This algorithm follows the generic scheme described in Section~\ref{s:generic},
with additional local pre-processing (c.f., Section~\ref{sub:dilution})
and with specific implementations of {\Election} and {\Transmit}
based on the method of decreasing granularity described in Section~\ref{sub:echo}.

\subsubsection{Partition into collision avoiding families}
\labell{sub:dilution}

In the following, a {\em square} in the grid $G_a$ is a square
whose vertices belong to $G_a$ (thus the length of the side of each such
square is a multiplicity of $a$).
We associate such squares with stations of a network located in them in the following way:
\begin{itemize}
 \item[(a)]\labell{item:a}
 a square (box) $R$ of size $a\times a$ is associated with all stations located in it;
 \item[(b)]\labell{item:b}
 any larger square $R$ contains some subset of stations of the network located inside $R$; however,
for each square $R'$ of size $a\times a$ included in $R$, either $R$ contains all stations
of $R'$ or none of them.
\end{itemize}
Let $\m{S}$ be a set of squares in a grid $G_a$, each $R\in\m{S}$ has associated
a set of stations $V_R$ located inside $R$.
We say that $\m{S}$ is {\em collision
avoiding} if for each $R\in\m{S}$ and each $v\in V_R$, the following condition
is satisfied:
\begin{quote}
if the set of transmitting stations in a round is equal to
$\{v\}\cup\bigcup_{R'\in\m{S}\setminus \{R\}}V_{R'}$ \\
then the message of $v$
is received by each station from
$V_R\setminus\{v\}$.
\end{quote}
In other words, transmissions in squares different from $R$ cannot disrupt
communication in $R$ (even if all elements of other squares are transmitting
simultaneously), provided exactly one station from $R$ is transmitting.

Assume that
there are given
an upper bound $d\cdot a$ on the length of the side of a square
and
an upper bound $y$ on the number of stations associated with a square.
As we show in the following proposition, in order a set $\m{S}$ of squares
satisfying these bounds be collisions avoiding, it is sufficient that the
box-distance
between each two elements of $\m{S}$ is at least $d_{\alpha}(n)dy$, where
$d_{\alpha}$ is a flat function.


\begin{proposition}
\labell{prop:avoid}
For each $\alpha\geq 2$, there exists a 
flat function $d_{\alpha}$
satisfying the following property.
Let $\m{S}$ be a set of squares in a grid $G_a$, where $a=\gamma/c$ for some $c\in\NAT$,
such that
\begin{itemize}
\item
each square $R\in\m{S}$ has associated at most $y$ stations located inside $R$,
\item
the length of the side of each $R\in\m{S}$ is at most $d\cdot a$,
\item
for each $R_1,R_2\in\m{S}$, the box-distance between
$R_1$ and $R_2$ is not smaller than $x\cdot a$,
\item
the number of stations associated to all squares is equal to $n$,
\end{itemize}
for some $y,d,x\in\NAT_+$ such that $c>2d$.
If $x\geq d_{\alpha}(n)dy$ then
$\m{S}$ is collision avoiding.
\end{proposition}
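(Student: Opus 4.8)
The plan is to mimic the interference‑summation argument of Proposition~\ref{prop:lead1}, but carried out at the level of \emph{squares} rather than single boxes, and accounting for the fact that each square may hold up to $y$ simultaneously transmitting stations and may have side length up to $da$. Fix a square $R\in\m{S}$ and a transmitting station $v\in V_R$, and fix a station $u\in V_R\setminus\{v\}$; since $R$ has side at most $da$ and all of $u,v$ lie in $R$, we have $\dist(u,v)\le \sqrt2\,da$, so the signal received by $u$ from $v$ is at least $(\sqrt2\,da)^{-\alpha}$. The goal is to show that the total interference at $u$ from $\bigcup_{R'\neq R}V_{R'}$ is small enough that $SINR(v,u,\cT)\ge\beta=1$ and the strength condition holds, which will follow once we establish a bound of the form $\cN+I \le (\sqrt2\,da)^{-\alpha}$ (up to the constant slack hidden in the choice of $d_{\alpha}$), exactly as in~\eqref{eq:signal}.

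First I would set up the geometric counting. Since the squares in $\m{S}$ are pairwise at box‑distance at least $xa$, and each has side at most $da$, the squares are "well separated"; grouping the other squares by their box‑distance from $R$, the number of squares $R'$ with $\distM(R,R')\in[ixa,(i+1)xa)$ is $O(i)$ (the same frame‑counting picture as Figure~\ref{fig:gran}, now with frames of width governed by $xa$ rather than $d$). For each such square, all of its at most $y$ stations lie at Euclidean distance at least $ixa$ from $u$ (here I use that $u\in R$, so the distance from $u$ to any point of $R'$ is at least the box‑distance minus the diameter of $R$, and $c>2d$ guarantees $xa$ dominates $da$, keeping this lower bound $\Theta(ixa)$). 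Hence the total interference is at most
\[
I \;\le\; \sum_{i\ge 1} O(i)\cdot y \cdot \frac{1}{(ixa)^{\alpha}}
\;=\; \frac{O(y)}{(xa)^{\alpha}}\sum_{i\ge 1} i^{1-\alpha}
\;=\; \frac{O(y)\,d_{\alpha}(n)}{(xa)^{\alpha}},
\]
where again $\sum_{i\ge1} i^{1-\alpha}$ is $O(1)$ for $\alpha>2$ and $O(\log n)$ for $\alpha=2$, i.e. a flat function. Comparing with the signal lower bound $(\sqrt2\,da)^{-\alpha}$, the reception condition is implied by $O(y)\,d_{\alpha}(n)(xa)^{-\alpha} + \cN \le c_0(\sqrt2\,da)^{-\alpha}$ for a suitable constant; since $x\ge d_{\alpha}(n)\,d\,y$ the interference term is at most $O(1/(d_{\alpha}(n)^{\alpha-1} y^{\alpha-1} d^{\alpha} a^{\alpha}))$, which is $\le \tfrac12 (\sqrt2\,da)^{-\alpha}$ once the hidden constant inside $d_{\alpha}$ is chosen large enough, and the additive $\cN=1$ term is absorbed using $a\le\gamma<1$ together with enlarging $d_{\alpha}$ (exactly as the "$1+\dots$" term was handled in the proof of Proposition~\ref{prop:lead1}). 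The second, "signal above noise" requirement $\dist(u,v)^{-\alpha}\ge(1+\eps)\beta\cN$ follows from the same computation since $\dist(u,v)\le\sqrt2\,da$ and $a$ is a small fraction of $\gamma=r/\sqrt2$ with $r=(1+\eps)^{-1/\alpha}$.

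The main obstacle is getting the geometry of the square‑to‑square distances right: one must be careful that "box‑distance at least $xa$ between squares" really does translate into "Euclidean distance at least $\Theta(ixa)$ between $u$ and the stations of the $i$‑th frame of squares," because $u$ sits inside a square of positive size $da$ and the squares themselves are extended objects, so the naive per‑box counting of Proposition~\ref{prop:lead1} has to be replaced by a per‑square count with the $y$ factor and the condition $c>2d$ invoked to ensure the separation parameter $x$ dominates the square size $d$. Once that bookkeeping is done, the rest is the routine zeta‑function estimate already performed in Proposition~\ref{prop:lead1}, and the statement follows. $\qed$
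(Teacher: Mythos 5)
Your proposal is correct and follows essentially the same route as the paper's proof: lower-bound the signal at a fellow member of $R$ by $(\sqrt{2}\,da)^{-\alpha}$, upper-bound the interference by counting $O(i)$ squares (each with at most $y$ transmitters) at box-distance roughly $i\cdot xa$, sum via the zeta-function/flat-function estimate, and use $x\ge d_{\alpha}(n)dy$ together with $c>2d$ to absorb the noise term. The only cosmetic difference is that the paper packages the final comparison as the single inequality $x^{\alpha}\ge d'_{\alpha}(n)y\big/\big(d^{-\alpha}-(r/c)^{\alpha}\big)$ instead of splitting the budget between the noise and interference halves.
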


\begin{proof}
Let $c\in\NAT$, $d,x,y\in\NAT_+$ be such that $c>2d$ and $x\geq d$
\tj{(note that the proposition concerns $x\geq d_{\alpha}(n)dy$ only)}.
Recall that $r=(1+\eps)^{-1/\alpha}$, $\gamma=r/\sqrt{2}$, $a=\gamma/c$.

Let $R\in\m{S}$ and $v\in R$. Since the side of $R$ is at most
$da=d\gamma/c$, 
the distance from $v$ to any other station $w\in R$
is at most $\sqrt{2}\gamma/c=dr/c$. Therefore the power of signal from $v$ received
by $w$ is at least
$$\frac1{(dr/c)^{\alpha}}=\left(\frac{c}{rd}\right)^{\alpha}.$$
On the other hand, $I$,
the total noise plu interference received by $w$ and caused by all elements of
$\bigcup_{R'\in\m{S}}V_{R'} \setminus V_R$
is at most
$$\cN+\sum_{j=1}^{n}4\cdot 5j\cdot\frac{y}{(j\cdot xa)^{\alpha}}\leq
1+d'_{\alpha}(n)\cdot\frac{y}{x^{\alpha}}\cdot\left(\frac{c}{r}\right)^{\alpha},$$
where $c'_{\alpha}=\max(1,20\cdot 2^{\alpha/2}\cdot \zeta(\alpha-1))$, $\zeta$
is the Riemann zeta function  
and $\cN=1$.
The above formula follows from the fact that there are at most
$20j$ squares such that the box-distance of each of them to
$R$ is in the interval $[j\cdot xa,(j+1)\cdot xa)$ and the
box-distance between each two of them is not smaller than $x\cdot a$
(see Figure~\ref{fig:election}).
\begin{figure}
\begin{center}
\epsfig{file=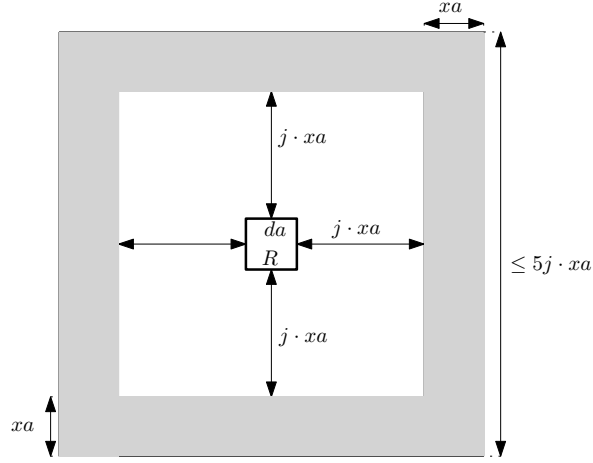, scale=0.8}
\end{center}
\caption{Illustration to the proof of Proposition~\ref{prop:avoid}. Each square whose
distance to $R$ is in $[j\cdot xa,(j+1)\cdot xa)$ has a nonempty intersection with
the gray frame. Moreover, the box-distance between any two such squares is at
least $xa$, the ``width'' of the frame.}
\label{fig:election}
\end{figure}%
Therefore,
$$\left(\frac{c}{rd}\right)^{\alpha}\geq 1+c'_{\alpha}\frac{y}{x^{\alpha}}\cdot\left(\frac{c}{r}\right)^{\alpha},$$
then the message from $v$ is received by $w$ if
This implies that the constraint
\begin{equation}\label{eq:xalpha}
x^{\alpha}\geq\frac{d'_{\alpha}(n)\cdot y}{\frac{1}{d^{\alpha}}-\left(\frac{r}{c}\right)^{\alpha}}
\
\end{equation}
gurantees that $w$ receives a message from $v$.
By the assumption
$c>2d$ and $r<1$, we see that
$$
{\frac{1}{d^{\alpha}}-\left(\frac{r}{c}\right)^{\alpha}}
>
\frac{2^{\alpha}-r^{\alpha}}{2^{\alpha}d^{\alpha}}
>
\frac{1}{2^{\alpha}d^{\alpha}}
\ ,
$$
and therefore
$$
\frac{d'_{\alpha}(n)\cdot y}{\frac{1}{d^{\alpha}}-\left(\frac{r}{c}\right)^{\alpha}}
<
d'_{\alpha}(n)\cdot 2^{\alpha}\cdot d^{\alpha}\cdot y
\ .
$$
Thus, if $x\geq d_{\alpha}(n)dy$ where $d_{\alpha}(n)=2\cdot (d'_{\alpha}(n))^{1/\alpha}$, then
the condition~(\ref{eq:xalpha}) for collision avoidance
is satisfied. 
%
\end{proof}

Below, we present algorithm {\NOGRAN} which splits a set of stations in $O(\log n)$
collision avoiding families of squares. More precisely, for each box $C$ of the pivotal
grid, the algorithm builds $\log n$ collision avoiding families of squares in $C$,
such that each station from $C$ belongs to some square in those families.

Let $C$ be a box of the pivotal grid.
We start with the set of squares of size $a\times a$
of the grid $G_a$ included in $C$ and containing at least one station, for some sufficiently small $a$
(line 2).
The goal is to build such a set of squares in each box of the pivotal grid
that subset of squares with similar ---
up to the multiplicative factor $2$ --- number of associated stations
is collision avoiding. In stages $i\in[0,\log n]$, we consider squares
with the number of associated stations in the interval $(2^{i-1},2^i]$
(see line 6)
and we keep an upper bound $d_i a$ on the length of the side of (so far unconsidered)
squares.
In each stage, we choose greedily as large as possible subsets of squares such
that each two squares of a subset are in large distance (to avoid interferences),
%
see lines 8-10
(c.f., Proposition~\ref{prop:avoid}).
These squares form the $i$th family of squares (color $i$).
The remaining squares are combined into larger
squares containing more than $2^i$ elements each
(see lines 7, and 11-13).
As we show, it is possible
to ensure that the upper bound on the lengths of the side of a square increases sufficiently
slow to guarantee that eventually each station belongs to some square and
the set of squares is split into $\log n$ collision avoiding families,
assuming $a=\gamma/2^{O(\log^2n)}$ (or $c=O(\log^n)$).

The key issue is that our ultimate goal is to guarantee that the set of squares with a fixed color
in all boxes (not only in one fixed box) are collision
avoiding, since the algorithm has to perform further computation in various
boxes simultaneously.
(By the way, if we restrict to one box of the pivotal grid, it is
sufficient to associate the same color to all stations. 
%
On the other hand, {\NOGRAN} is executed locally (in one box) since stations
should be able to perform this procedure without communication, on the basis of their
knowledge about neighborhood. One cannot exclude
that squares $R_1,R_2$ with the same color which belong to two adjacent boxes of the pivotal grid
are very close to each other.
Therefore, we refine our coloring in order to avoid the situation
that two squares from adjacent boxes have the same color (line 9).


\begin{algorithm}[H]
	\caption{\NOGRAN($C(j,k),c$)}
	\label{alg:nogran}
	\begin{algorithmic}[1]
    \State $a\gets \gamma/c\ \ (=r/(\sqrt{2}c))$
    \State $\m{S}\gets$ all nonempty boxes of $G_a$ inside the box $C$ of the pivotal grid 
    \State for each $R\in \m{S}$: $V_R\gets $ all stations located in $R$;
    \State $d_{0}\gets 1$
    \For{$i=0,1,\ldots,\log n$}
	\Comment{Iteration of phases}
        \State $x_i\gets c_{\alpha}d_{i}2^i$
        \State $W_i\gets\{R\in\m{S}\,|\,\, 2^{i-1}<|V_R|\leq 2^i\}$
        \State $E_i\gets\{(R_1,R_2)\,|\, R_1,R_2\in W_i, \distM(R_1,R_2)\leq x_i\cdot a\}$
        \For{each separated vertex $R$ of the graph $G_i(W_i,E_i)$}
            \State $\colour(R)\gets (i,j\mod 2,k\mod 2)$
            \State delete $R$ from $W_i$
        \EndFor
        \For{each connected component $W'\subseteq W_i$}
            \State Form a smallest square $R'$ containing all elements of $W'$, and add $R'$ to $\m{S}$
            \State Remove all elements of $W'$ from $\m{S}$
        \EndFor
        \State $d_{i+1}\gets 4(x_i+d_{i})$
    \EndFor
	\end{algorithmic}
\end{algorithm}
Now, we formally analyze algorithm {\NOGRAN}.
Let {\em phase $i$} denote the execution of the body of the main loop, i.e., lines 5-14,
of the algorithm {\NOGRAN} for the corresponding $i$.
Let $\side(R)$, for a square $R$, denote the length of the side of $R$.
We will show that the following invariants are satisfied at the beginning of the phase $i$, for every $i\geq 0$:
\begin{description}
\item[(A1)]
Each square $R\in\m{S}$ has more than $2^{i-1}$ stations (i.e., $|V_R|> 2^{i-1}$);
\item[(A2)]
For each $R\in\m{S}$, the length of the side of $R$ is not larger than
$\frac{|V_R|d_{i}}{2^i}\cdot a$.
\end{description}
\begin{proposition}
The algorithm {\NOGRAN} satisfies the invariants (A1) and (A2) at the beginning of
each phase.
\end{proposition}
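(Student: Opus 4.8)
The plan is to prove both invariants by induction on the phase index $i$, since they are intertwined: (A2) bounds the side length in terms of the number of associated stations, and that bound controls how much the side length can grow when squares are merged in the next phase.

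\textbf{Base case.} At the beginning of phase $0$ we have $\m{S}$ consisting of the nonempty $a\times a$ boxes of $G_a$ inside $C$, each with at least one associated station, so $|V_R|\geq 1 > 2^{-1}$, giving (A1). For (A2), each such $R$ has side $a$, and with $|V_R|\geq 1$, $d_0=1$, $i=0$ we need $a \leq \frac{|V_R|\cdot 1}{1}\cdot a = |V_R|\cdot a$, which holds. Also one should observe the associated-set property (a)/(b) is maintained: at the start only the atomic $a\times a$ boxes appear (so (a) holds), and the merge step in lines 12--14 forms the smallest square $R'$ containing a connected component $W'$; I would note that by choosing $d_{i+1}$ a multiple of $2$ and keeping all sides as multiples of $a$, any such smallest enclosing square can be taken to respect grid-alignment on $G_a$, and since each merged piece already satisfied (b), the union does too.

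\textbf{Inductive step.} Assume (A1) and (A2) hold at the start of phase $i$. In phase $i$ we only touch squares in $W_i = \{R : 2^{i-1} < |V_R| \le 2^i\}$; squares with $|V_R| > 2^i$ are untouched and, by (A1) for phase $i$, they already have $|V_R| > 2^i = 2^{(i+1)-1}$, so they satisfy (A1) for phase $i+1$; for (A2) at $i+1$ note $d_{i+1} = 4(x_i+d_i) \ge d_i$ so the bound only weakens, and the side is unchanged — hence (A2) persists for these. The separated vertices of $G_i(W_i,E_i)$ are colored and deleted from $\m{S}$ entirely, so they are irrelevant to the next phase. The remaining squares of $W_i$ are partitioned into connected components $W'$ of $G_i$, each merged into a smallest enclosing square $R'$. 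Then $|V_{R'}| = \sum_{R\in W'}|V_R| > 2^{i-1}|W'| \ge 2^{i-1}\cdot 1$, but I need the stronger $|V_{R'}| > 2^{i}$, i.e.\ $|W'|\ge 2$ — this holds because any $R$ that is its own connected component (i.e.\ $|W'|=1$) is precisely a separated vertex and was already removed in lines 9--11. So every surviving merged square has $|W'|\ge 2$, hence $|V_{R'}| > 2^i = 2^{(i+1)-1}$, giving (A1) at phase $i+1$.

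\textbf{The side-length bound — the main obstacle.} The crux is (A2) for a merged square $R'$. Since $W'$ is connected in $G_i$, any two of its squares are joined by a path of edges, each edge meaning box-distance at most $x_i a$; combined with the inductive bound (A2) that each $R\in W'$ has side at most $\frac{|V_R|d_i}{2^i}a \le \frac{2^i d_i}{2^i}a = d_i a$, a chain of $|W'|-1$ edges through $|W'|$ squares spans a region of diameter at most (roughly) $|W'|\cdot d_i a + (|W'|-1)\cdot x_i a \le |W'|(x_i + d_i)a$ in each coordinate, so the smallest enclosing square has side at most $|W'|(x_i+d_i)a$. I would make this precise by projecting onto each axis and summing interval lengths along the spanning path. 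Now using $|W'| \le |V_{R'}|/2^{i-1} = 2|V_{R'}|/2^i$ and $d_{i+1} = 4(x_i+d_i)$, we get $\side(R') \le 2|V_{R'}|(x_i+d_i)a/2^i = \frac{|V_{R'}|\cdot d_{i+1}}{2^i}\cdot \frac{a}{2} \le \frac{|V_{R'}|\, d_{i+1}}{2^{i+1}}\cdot a$, which is exactly (A2) at phase $i+1$. The delicate point I expect to have to argue carefully is the spanning-tree diameter estimate: a connected graph on $|W'|$ vertices need not have small diameter, but each vertex contributes a bounded-size square and each edge a bounded box-distance, so walking a spanning tree (or even just a spanning path obtained by a DFS traversal, which uses each edge at most twice) bounds the total coordinate-extent by $O(|W'|)$ times $(x_i+d_i)a$; one must check the constant is at most the $4$ hidden in the definition $d_{i+1}=4(x_i+d_i)$, which is why a factor-$2$ slack appears and is absorbed.
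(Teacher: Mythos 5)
Your induction is structured the same way as the paper's proof and the final inequalities are the right ones, but the one step you yourself flag as delicate is resolved incorrectly, and as stated it would break the bound. You claim that a DFS traversal of a spanning tree of $W'$ (using each edge at most twice) suffices because "a factor-2 slack appears and is absorbed." There is no such slack: the computation $\side(R')\le |W'|(x_i+d_i)a = |W'|\,d_{i+1}a/4 \le \frac{|V_{R'}|}{2^{i-1}}\cdot\frac{d_{i+1}}{4}\,a = \frac{|V_{R'}|\,d_{i+1}}{2^{i+1}}\,a$ meets the target of (A2) exactly, so a doubled edge count would give $\frac{|V_{R'}|\,d_{i+1}}{2^{i}}\,a$, a factor $2$ too large, and the induction would fail. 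The correct way to make the estimate precise — and what the paper does — is to work only in the coordinate direction in which $R'$ is widest and take a \emph{simple} path $(R_1,\dots,R_p)$ in $W'$ between the two squares achieving the extreme coordinates in that direction. Such a path always exists in a connected graph, has $p\le |W'|$ distinct vertices and $p-1$ edges, and projecting onto that axis gives $\side(R')\le\bigl(p\,d_i+(p-1)x_i\bigr)a\le p(x_i+d_i)a$, while $|V_{R'}|>p\cdot 2^{i-1}$; no Hamiltonian path and no edge-doubling is needed. (The paper in fact keeps $p$ rather than replacing it by $|W'|$, but both work.)

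Two smaller points. For squares that survive phase $i$ untouched, the inequality you cite, $d_{i+1}\ge d_i$, is not enough, since the denominator in (A2) also doubles; you need $d_{i+1}\ge 2d_i$, which does hold because $d_{i+1}=4(x_i+d_i)\ge 4d_i$. Your argument for (A1) — that every connected component surviving the deletion of separated vertices has at least two squares, each contributing more than $2^{i-1}$ stations — is correct and matches the paper.
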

\begin{proof}
The proof goes by induction. One can easily verify that the invariants are satisfied
at the beginning of phase $0$.
Next, assuming that the invariants are satisfied at the beginning of phase $i$, we show
that they are satisfied at the beginning of phase $i+1$ as well.

As for the invariant (A1), observe that each square having at most $2^i$ elements
is removed from $\m{S}$ during phase $i$ (in line 10 or 13). Moreover, each new square added to $\m{S}$
during phase $i$ contains stations of at least two removed squares
(see line 12 and the fact that each separated vertex/square $R$ is deleted in line 10).
Since (A1) is satisfied
at the beginning of phase $i$, the number of station in such a new square is larger than
$2^{i-1}+2^{i-1}=2^i$.

Concerning (A2), observe that a square that is in $\m{S}$ at the beginning of phase $i$
and is {\em not} removed from $\m{S}$ during phase $i$ satisfies the condition
$$
\side(R)\leq\frac{|V_R|d_{i}}{2^i}
\leq
\frac{|V_R|d_{i+1}}{2^{i+1}}
$$
at the
beginning of phase $i+1$, because
$d_{i}<d_{i+1}/2$ (line 14.).
Now, consider a square $R'$ added to $\m{S}$
during phase $i$.
Let $W'$ be the connected component of $W_i$ whose elements form $R'$.
Let $x_1,x_2$ ($y_1,y_2$, respectively) be the smallest and largest values
of the first (second, respectively) coordinate of vertices of squares from $W_i$.
W.l.o.g. assume that $x_2-x_1\geq y_2-y_1$. Thus, $\side(R')=x_2-x_1$. Then,
there exists a path $(R_1,\ldots,R_p)$ in $W'$ such that $x_1$ is the first
coordinate of some vertex of $R_1$, $x_2$ is the first coordinate of some vertex
of $R_p$. Our inductive assumptions imply that:
\begin{itemize}
\item
$2^i\geq|V_{R_j}|>2^{i-1}$ for each $j\in[p]$;
\item
$\side(R_j)\leq\frac{|V_{R_j}|d_{i}}{2^i}\cdot a\leq d_{i}\cdot a$
for each $i\in[p]$;
\item
$\distM(R_j,R_{j+1})\leq x_j\cdot a$ for each $j\in[p-1]$;
\item
$\sum_{j=1}^p\side(R_j)+\sum_{j=1}^{p-1}\distM(R_j,R_{j+1})\geq x_2-x_1=\side(R')$.
\end{itemize}
Thus,
$$
\side(R')\leq \left(pd_{i}+(p-1)x_i\right)\cdot a
\leq
p(x_i+d_{i})a
=
pd_{i+1}a/4
$$
and
$$|V_{R'}|>p\cdot 2^{i-1}.$$
Therefore,
$\side(R')\leq pd_{i+1}a/4
<
\frac{|V_{R'}|}{2^{i-1}}\cdot\frac{d_{i+1}}4\cdot a
=
\frac{|V_{R'}|d_{i+1}}{2^{i+1}}\cdot a$,
which confirms that the invariant (A2) is satisfied at the beginning of phase $i+1$.
\end{proof}

\remove{

\begin{proposition}\labell{prop:simgram}
There exists a constant $c$ which depends only on $\alpha$ such that, if $c\geq 2^{c_1\log^2n}$
then the set of squares which have assigned color $(i,j,k)$ by {\NOGRAN} is collision avoiding for
each $(i,j,k)\in[\log n]\times[0,1]^2$.
\end{proposition}

}

\begin{proposition}
\labell{prop:simgram}
There exists a constant $c_1$, which depends only on $\alpha$, such that:
if $c\geq 2^{c_1\log^2n}$
then the set of stations with assigned color $i$ by \NOGRAN($\cdot,c$) is collision avoiding, for each $i\in[\log n]$.
\end{proposition}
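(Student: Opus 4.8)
The plan is to apply Proposition~\ref{prop:avoid} to the family $\m{S}_i$ consisting of all squares that receive a colour of the form $(i,\cdot,\cdot)$, collected over all boxes of the pivotal grid, with the grid parameter of Proposition~\ref{prop:avoid} taken to be $c$ (so its $a$ equals $\gamma/c$), its side bound $d:=d_i$, its station bound $y:=2^i$, and its separation parameter $x:=\min(x_i,c)$. Thus the work splits into: (i) checking the structural hypotheses of Proposition~\ref{prop:avoid}; (ii) proving the separation bound $\distM(R_1,R_2)\ge\min(x_i,c)\,a$ for distinct $R_1,R_2\in\m{S}_i$; and (iii) a growth estimate on $d_i,x_i$ that lets us choose $c_1$ so that $c\ge 2^{c_1\log^2 n}$ forces both $c>2d_i$ and $\min(x_i,c)\ge d_{\alpha}(n)\,d_i\,2^i$.

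For (i): every square coloured $(i,\cdot,\cdot)$ is a separated vertex of the graph $G_i(W_i,E_i)$ of phase~$i$, hence lies in $W_i$, so it has at most $2^i$ associated stations; and by invariant (A2) at the start of phase~$i$ (the preceding proposition) its side is at most $\frac{|V_R|d_i}{2^i}a\le d_i a$. Every such square is a grid-$G_a$ square built inside one box of the pivotal grid by unioning $a\times a$ cells, so the association of stations to squares satisfies properties (a)--(b), distinct squares of $\m{S}_i$ are disjoint, and the total number of associated stations is at most $n$; the last point suffices because in the proof of Proposition~\ref{prop:avoid} the hypothesis ``$=n$'' is only used as an upper bound on the number of interfering squares in the sum over distance shells.

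For (ii): if $R_1,R_2$ lie in the same box $C$ of the pivotal grid, they are distinct separated vertices of $G_i$, hence $(R_1,R_2)\notin E_i$, i.e.\ $\distM(R_1,R_2)>x_i\,a$. If they lie in different boxes $C(j_1,k_1)\ne C(j_2,k_2)$, then equality of their colours forces $j_1\equiv j_2$ and $k_1\equiv k_2\pmod 2$ --- this is exactly the purpose of the parity refinement in line~9 --- so one coordinate differs by an even number $\ge 2$; the projections of the two pivotal boxes on that axis are then separated by at least one whole pivotal box, i.e.\ by $\gamma=c\,a$, and since each square has side $\le d_i a<c\,a$ it is contained in its box, so $\distM(R_1,R_2)\ge c\,a$. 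Combining the two cases gives $\distM(R_1,R_2)\ge\min(x_i,c)\,a$, so $x:=\min(x_i,c)$ is a valid separation parameter.

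For (iii): from line~14, $d_{i+1}=4(x_i+d_i)$ with $x_i=c_{\alpha}d_i2^i$, where $c_{\alpha}$ may be taken to be the flat function $d_{\alpha}$ of Proposition~\ref{prop:avoid} (for $\alpha=2$ this is $O(\log n)$, which is harmless); hence $d_{i+1}\le 8c_{\alpha}2^i d_i$ and $d_i\le (8c_{\alpha})^i2^{i(i-1)/2}$, which is $2^{O(\log^2 n)}$ for all $i\le\log n$. Consequently $x_i$, $d_{\alpha}(n)d_i2^i$ and $2d_i+1$ are all $2^{O(\log^2 n)}$ for $i\le\log n$, so one may fix $c_1$ (depending only on $\alpha$) with $2^{c_1\log^2 n}>\max_{i\le\log n}\{\,2d_i+1,\ d_{\alpha}(n)d_i2^i\,\}$ for every $n$. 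Then $c\ge 2^{c_1\log^2 n}$ yields $c>2d_i$ and $c\ge d_{\alpha}(n)d_i2^i$, and since $c_{\alpha}\ge d_{\alpha}$ we also have $x_i\ge d_{\alpha}(n)d_i2^i$, hence $x=\min(x_i,c)\ge d_{\alpha}(n)\,d\,y$. All hypotheses of Proposition~\ref{prop:avoid} are met, so $\m{S}_i$ --- equivalently, the set of stations with colour $i$ --- is collision avoiding. I expect the only genuinely delicate point to be the growth bound on $d_i$ and its interaction with the flat function when $\alpha=2$; everything else is bookkeeping over Proposition~\ref{prop:avoid}, invariant (A2), and the parity refinement.
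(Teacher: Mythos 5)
Your proposal is correct and follows essentially the same route as the paper's own proof: invoke Proposition~\ref{prop:avoid} with side bound $d_i$ and occupancy bound $2^i$ from invariant (A2), get intra-box separation from the separated-vertex condition on $G_i(W_i,E_i)$ and inter-box separation from the parity refinement, and close with the recurrence $d_{i+1}\le 8c_{\alpha}2^i d_i$ giving $d_{\log n}=2^{O(\log^2 n)}$. If anything you are slightly more explicit than the paper (the unified separation parameter $\min(x_i,c)\,a$ and the remark that the ``$=n$'' hypothesis is only used as an upper bound), but there is no substantive difference in approach.
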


\begin{proof}
First, assume that all stations are located in one box of the pivotal grid.
The choice of $x_i$ in Algorithm {\NOGRAN} (line 5) guarantees that the set of squares
with color $(i,j,k)$ is collision avoiding due to Proposition~\ref{prop:avoid}, provided
$c>2d_i$. Since
$d_{0}=1$, $d_{i+1}=4(x_i+d_i)$ and $x_i=c_{\alpha}d_i2^i$, the relationship
$$
d_{i+1}
=
4(d_{i}+x_i)
=
4d_{i}(1+c_{\alpha}2^i)
\leq
8c_{\alpha}2^i\cdot d_{i}
$$
holds for $i\geq 0$, where the last inequality follows from the fact that
$c_{\alpha}\geq 1$ (see Proposition~\ref{prop:avoid}).
Thus,
$$d_{i+1}
\leq
(8c_{\alpha})^{i+1}\prod_{j=0}^i2^j
=
2^{i(i+1)/2+(i+1)\log(8c_{\alpha})}
\ .
$$
Therefore $d_{\log n}=2^{O(\log^2 n)}$ and the appropriate choice of $c_1$ guarantees
that $d_{\log n}<2^{c_1\log^2n}/2$.
So, the proposition holds for $c=2^{c_1\log^2 n}$,
since $c>2d_i$ for each $i\in[\log n]$ and squares with each color are
collision avoiding by Proposition~\ref{prop:gtrans}.

Now, consider the case when stations are located in various boxes of the pivotal grid.
The choice of colors guarantees that $\distM(R_1,R_2)\geq r$ for any two squares
$R_1,R_2$ with color $(i,j,k)$ such that $R_1\in C$, $R_2\not\in C$, where $C$
is a box of the pivotal grid (the method of assigning $j,k$ guarantees that
$R_1$ and $R_2$ are not in adjacent boxes). In order to guarantee the correctness
of the proposition, it is sufficient that $\distM(R_1,R_2)\geq c_{\alpha}d_{\log n}2^{\log n}$
(see Proposition~\ref{prop:avoid}).
Since 
$\distM(R_1,R_2)\geq r=(1+\eps)^{-1/\alpha}$,
it is enough to assure that $c_{\alpha}d_{\log n}2^{\log n}<r$,
which can also be guaranteed for $c=2^{O(\log^2n)}$.
\end{proof}
\tj{
Finally, we can state the key property of the algorithm {\NOGRAN}.
\begin{lemma}\labell{l:nogran}
Algorithm {\NOGRAN} forms the set of $O(\log n)$ collision avoiding
families of squares such that each station belongs to (exactly) one
square in these families.
\end{lemma}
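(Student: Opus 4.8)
The plan is to combine the two previously-established facts about \NOGRAN: the structural invariants (A1)--(A2), which control the sizes and side lengths of the squares in $\m{S}$ across phases, and Proposition~\ref{prop:simgram}, which says that each color class is collision avoiding once $c=2^{c_1\log^2 n}$. So I would fix $c=2^{c_1\log^2 n}$ throughout (equivalently $a=\gamma/2^{O(\log^2 n)}$), which is exactly the granularity level announced in Section~\ref{s:diam-gen}. With this choice of $c$, Proposition~\ref{prop:simgram} immediately gives that each of the color classes produced by \NOGRAN\ is collision avoiding, so the only remaining thing to verify is that (i) the number of distinct colors is $O(\log n)$, and (ii) every station ends up assigned to exactly one square, i.e.\ every square is eventually colored.

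For (i): colors are triples $(i,j\bmod 2,k\bmod 2)$ with $i\in[0,\log n]$, so there are at most $4(\log n+1)=O(\log n)$ of them; this is immediate from line~10 of the algorithm. For (ii): I would argue that phase $\log n$ empties $\m{S}$. First, by invariant (A1) applied at the beginning of phase $\log n$, every square $R\in\m{S}$ satisfies $|V_R|>2^{\log n -1}=n/2$, so there can be at most one such square; moreover $|V_R|\le n$ trivially, so in phase $\log n$ the set $W_{\log n}$ equals all of $\m{S}$ (the window $(2^{i-1},2^i]$ at $i=\log n$ is $(n/2,n]$ and captures it). Then either this single square is a separated vertex of $G_{\log n}$ and gets colored in line~10, or $\m{S}$ was already empty --- in both cases nothing survives past phase $\log n$. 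Combined with the fact that in every phase $i$ each square leaves $\m{S}$ either by being colored (line~10) or by being merged into a strictly larger square (lines~12--13), and that invariant (A2) guarantees a colored square is an honest square of $G_a$ containing its stations, this shows each station belongs to exactly one colored square. Finally, a station cannot be placed in two colored squares, because once a square is colored it is deleted from $W_i$ and never re-added to $\m{S}$, and squares are merged (not overlapped) by property~(b) of the square--station association.

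The only subtlety --- and the step I would treat most carefully --- is making sure that ``each station belongs to some square in these families'' really follows, i.e.\ that no station is lost and $\m{S}$ is exhausted exactly at $i=\log n$. This is where invariant (A1) does the essential work: it forces the square counts to at least double each phase, so after $\log n$ phases a surviving square would need more than $n/2$ stations and in fact all of $\m{S}$ collapses to a single colorable square. One should also note that the merging step (lines~11--13) never discards stations and, by invariant (A2) together with the choice $c>2d_i$ used in Proposition~\ref{prop:simgram}, every merged square still fits inside its box of the pivotal grid, so the process is well defined in every box simultaneously. Assembling these observations yields the lemma. $\qed$
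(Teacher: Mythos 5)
Your proposal is correct and follows essentially the same route as the paper's proof: it derives collision avoidance of each color class from Proposition~\ref{prop:simgram} with $c=2^{c_1\log^2 n}$, counts the $O(\log n)$ colors directly from line~10, and uses the bound $|V_R|\le n$ together with invariant (A1) to conclude that every square is colored by phase $\log n$ and hence every station lands in exactly one square. The paper states these last two points in one line each, so your write-up is simply a more detailed unpacking of the same argument.
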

\begin{proof}
Since there are $n$ stations overall, $|V_R|\leq n$ and therefore each station
is assigned to a square. Proposition~\ref{prop:simgram} implies that
those families are collision avoiding. Finally, it follows directly from the
algorithm that each station belongs to exactly one square from those families.
\end{proof}
}

\subsubsection{Election by Echo}
\labell{sub:echo}

Before we specify exactly how our application of the procedure Echo \cite{KP04} works,
let us explain what is the task we would like to solve by using this technique. During {\Transmit},
if stations in a box $C$ of the pivotal grid are in state {\stwo}, the goal is to send a message
to at least one station
in each box $C'$ of $C$ which is a neighbor of $C$. To inform a station in $C'$,
it is sufficient that exactly
one station from $C$ that has a neighbor in $C'$ is transmitting in some step successfully.
We are going to assure this property by guaranteeing that exactly one station
is transmitting among stations having neighbors in $C'$.
However, although each station
from $C$ knows whether it has a neighbor in $C'$, it does not necessarily know which
other stations from $C$ have also neighbors in $C'$.
%

The goal of the algorithm {\AlgCRBE} is as follows. We are given a set $V_1$ of stations
such that $(v,w)$ is an edge in the communication graph, for each $v,w\in V_1$ and the set
$V_1$ is known to each $v\in V_1$. Moreover, $V_2\subseteq V_1$ is defined such that each
$v\in V_1$ knows whether
it belongs to $V_2$ (i.e., whether $v\in V_2$),
but 
it may not have a knowledge
which of the remiaining elements
of $V_1$ belong to $V_2$. As a result, a unique {\em representative} $w$ of $V_2$ should be chosen
and all elements of $V_1$ should be aware of $w$;
in case of
$V_2=\emptyset$, all elements of $V_1$
should be aware of that fact.
\begin{algorithm}[H]
	\caption{\AlgCRBE($V_1, V_2$)}
	\begin{algorithmic}[1]
    \State $\psi\gets\min_{v\in V_1}(v)$
    \State $\psi$ transmits a message with information whether $\psi\in V_2$
    \If{$\psi\in V_2$} return $\psi$ and finish \EndIf
    \State Let $\varphi\in V_1$ be $\min_{w\in V_1\setminus\{\psi\}}\{w\,|\,\dist(\psi,v)\leq\dist(\psi,w)\}$
    every $v\in V_1$
    \State $\varphi$ transmits a message with information whether $\varphi\in V_2$
    \If{$\varphi\in V_2$} return $\varphi$ and finish \EndIf
    \State Assign unique temporary IDs (TIDs) in $[|V_1|]$ to all elements of $V_1$:
		TID$(v)\gets |\{u\in V_1\,|\,u\leq v\}|$
    \State $bot\gets 1$; $top\gets |V_1|$
    \While{$bot\leq top$}
        \State $mid\gets \lfloor (bot+top)/2\rfloor$
        \State $T\gets\{v\in V_2\,|\, bot\leq\text{TID}(v)\leq mid\}$
        \State {\bf Round $R_1$:} each $v\in T$ transmits the message $m_v$ encoding $v$
        \State {\bf Round $R_2$:} each $v\in T\cup\{\varphi\}$ transmits the message $m_v$ encoding $v$
        \State {\bf Round $R_3$:}
            \State\hspace{\algorithmicindent} \textbf{if} $\psi$ can hear $m_v$ in $R_1$ for $v\in V_1$ \textbf{then}
		$\psi$ transmits $m_v$ 
            \State\hspace{\algorithmicindent} \textbf{else} \textbf{if} {$\psi$ can hear $m_{\varphi}$ in $R_2$}
			\textbf{ then } $\psi$ transmits $m_{\varphi}$
        \State\textbf{if} {$m_{v}$ is heard in $R_3$ for $v\in V_1\setminus\{\varphi\}$}\textbf{ then } 
		return $v$ and finish the algorithm's execution
	\State\textbf{if} {$m_{\varphi}$ is heard in $R_3$}\textbf{ then }
		$bot\gets mid+1$
        \State\textbf{else} {$top\gets mid$}
    \EndWhile
    \end{algorithmic}
\end{algorithm}

Just for further consideration we would like to point out that  $\dist(\psi,\varphi)$ is the
largest among distances between between elements of $V_1$. This implies that $\psi$ can hear
$\varphi$ only when no other $x\in V_1$ is transmitting a message.
\begin{proposition}\labell{prop:chooserepr}
Assume that the algorithm {\AlgCRBE} is executed in parallel on a family of collision avoiding squares. 
Then, each execution of \AlgCRBE($V_1,V_2$)
finishes in $O(\log n)$ rounds and it
gives the following result:
\begin{itemize}
\item
if $V_2=\emptyset$: each station of $V_1$ knows that $V_2$ is empty;
\item
otherwise, each $v\in V_1$ knows a fixed station $w\in V_2$ called
a {\em representative} of $V_2$.
\end{itemize}
\end{proposition}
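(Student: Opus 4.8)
The proof has two independent parts: correctness (the returned value is a well-defined representative, or all of $V_1$ learns that $V_2=\emptyset$) and the $O(\log n)$ round bound. I would handle the round bound first, since it is essentially immediate: lines 1--6 are a constant number of rounds, the TID assignment in line 7 is local (each $v\in V_1$ knows all of $V_1$, so it can compute $\mathrm{TID}(v)$ without communication), and the while-loop is a binary search on the interval $[bot,top]\subseteq[1,|V_1|]\subseteq[1,n]$, each iteration of which uses exactly three rounds $R_1,R_2,R_3$. Hence the total is $O(\log n)$ rounds. The only subtlety is that all of this must work when many executions run in parallel on a collision-avoiding family of squares; but that is exactly what the collision-avoiding property buys us — whenever a single station of a square $R$ transmits, its message is received by every other station of $R$ regardless of what all other squares do — so each execution behaves as if it were running in isolation, and the per-execution analysis carries over verbatim.

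For correctness, the key observations are about who can hear whom. The station $\psi=\min V_1$ acts as the ``echo'' hub; because $\dist(\psi,\varphi)$ is the maximum pairwise distance in $V_1$ (line 4 picks $\varphi$ as the farthest station from $\psi$), whenever $\psi$ can hear $m_\varphi$ it must be that $\varphi$ was the \emph{only} transmitter in that round — this is the collision-detection-simulation trick from \cite{KP04}. I would argue the loop invariant: at the start of each iteration, if $V_2\cap\{v: \mathrm{TID}(v)\le |V_1|\}=V_2$ is nonempty then the smallest TID of an element of $V_2$ lies in $[bot,top]$. In iteration with $mid=\lfloor(bot+top)/2\rfloor$ and $T=\{v\in V_2: bot\le\mathrm{TID}(v)\le mid\}$: if $|T|=1$, the unique $v\in T$ transmits alone in $R_1$, $\psi$ hears it and re-broadcasts $m_v$ in $R_3$ (the first sub-case of line 13), every station of $V_1$ hears it, and since $v\ne\varphi$ (as $\varphi\notin V_2$, having returned earlier otherwise) the algorithm returns $v$; if $|T|\ge 2$, then in $R_1$ there is a collision at $\psi$, so $\psi$ falls through to the else-branch, hears $m_\varphi$ transmitted alone by $\varphi$ in $R_2$, re-broadcasts $m_\varphi$ in $R_3$, everyone sees $m_\varphi$, and sets $bot\gets mid+1$, correctly discarding the lower half (which we now know contains $\ge 2$ elements of $V_2$, hence certainly the minimum-TID element is not needed from there... ) — here I need to be careful: actually the discarded-half logic is the reverse, let me restate. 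When $|T|\ge 2$ we learn $T\ne\emptyset$, so I would instead have the search keep the \emph{lower} half; re-reading the code, $m_\varphi$ heard $\Rightarrow bot\gets mid+1$ means $m_\varphi$ heard signals $T=\emptyset$. So the correct reading is: $|T|=0\Rightarrow$ nobody transmits in $R_1$, $\psi$ hears $m_\varphi$ in $R_2$, re-broadcasts it, $bot\gets mid+1$; $|T|=1\Rightarrow$ return; $|T|\ge2\Rightarrow$ collision at $\psi$ in $R_1$ \emph{and} in $R_2$ (since $T\cup\{\varphi\}$ has $\ge2$ elements and $\psi$ is farthest from $\varphi$... ) so $\psi$ transmits nothing in $R_3$, nothing is heard, and $top\gets mid$, keeping the half that contains $\ge2$ — hence eventually $\ge1$ — element of $V_2$. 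The invariant is thus maintained, and when $bot=top$ the set $T$ is a singleton and we return.

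The main obstacle I expect is precisely pinning down the three-way case analysis inside the loop and the ``echo'' semantics of $R_3$: one must verify that (i) $|T|=1$ is correctly distinguished from $|T|\ne1$ via what $\psi$ rebroadcasts, using that a single transmitter in $R_1$ is always heard by $\psi$ (same square, collision-avoiding) while $\ge2$ transmitters in $R_1$ produce silence at $\psi$ — and crucially that when $\ge2$ transmit in $R_1$, also $\ge2$ transmit in $R_2$ so $\psi$ hears nothing there either, which is where ``$\dist(\psi,\varphi)$ maximal'' is used; and (ii) the termination/emptiness case: if $V_2=\emptyset$ the loop runs its full $\lceil\log|V_1|\rceil$ iterations always in the $|T|=0$ branch, $bot$ marches up to exceed $top$, the loop exits without returning, and every station — having observed the same sequence of rebroadcasts of $m_\varphi$ — concludes $V_2=\emptyset$. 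Once these hearability facts are isolated as sub-claims (each a one-line consequence of the collision-avoiding property and the choice of $\varphi$), the rest is bookkeeping on the binary search.
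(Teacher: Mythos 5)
Your proof is correct and follows essentially the same route as the paper's: the same two hearability facts (a lone transmitter in a collision-avoiding square is heard by everyone in it, and $\psi$ can decode $m_\varphi$ only when $\varphi$ transmits alone, since $\dist(\psi,\varphi)$ is maximal over $V_1$) drive the same three-way case analysis of the binary search, with the same handling of the $V_2=\emptyset$ case. One small overreach: you assert that two or more transmitters in $R_1$ necessarily produce silence at $\psi$, which the model does not guarantee (collision avoidance gives guaranteed reception for a single transmitter, not guaranteed non-reception for several) --- but this is harmless, since if $\psi$ happens to decode some $m_v$ with $v\in T$ it rebroadcasts it and the algorithm returns that $v$, which is still a valid representative of $V_2$.
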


\begin{proof}
As for time complexity, note that $top-bot$ becomes roughly twice smaller in each execution
of the loop 9-19 (see lines 18-19).

The assumption that the algorithm is executed on collision avoiding squares
implies that we can assume that each execution of \AlgCRBE($V_1,V_2$)
satisfies the following condition: if exactly one element of $V_1$ (different from
$\varphi$) transmits
a message in a round, then this message is received by all elements of $V_1$.
Moreover, since $\dist(\psi,\varphi)\geq\dist(\psi,v)$ for each $v\in V_1$,
$\psi$ cannot receive a message from $\varphi$ if any element of $V_1\setminus\{\varphi\}$ transmits
a message at the same round. These observations imply that, after round $R_3$,
all stations from $V_1$ can determine whether the subset of $V_2$, which consists
of stations with TIDs in the range $[bot,mid]$, contains $0$, $1$, or more than one
element. Thanks to this fact, an execution of lines 10-19 gives each element
of $V_1$ information whether  $X=V_2\cap\{v\,|\, TID(v)\in[bot,mid]\}$
is empty.
Using this property, the while-loop 9-19 applies
binary search in order to choose
a representative of $V_2$, if $V_2\neq\emptyset$.
More precisely, if $m_{\varphi}$ is heard in $R_3$ then $X$ is empty
(and searching is restricted to the range $[mid+1,top]$), and it is
not empty otherwise.
%
%
\end{proof}




\subsubsection{Broadcasting Algorithm}
\labell{sub:genalg}

Finally, we define a broadcasting algorithm {\AlgD}, which repeats several times
the algorithm {Gen-\Transmit} given below. Algorithm {Gen-\Transmit} resembles the algorithm
{Gran-\Transmit} from Section~\ref{s:granularity-unknown}.
However {Gen-\Transmit}, first applies the technique of ``decreasing'' granularity
introduced in Sections~\ref{sub:dilution} and~\ref{sub:echo}.


%

Recall the following definitions.
A station $v$ is {\em $(d_1,d_2)$-connected} for $d_1,d_2\in\{0,1,2\}$ iff
$v\in C(i,j)$ for a box $C(i,j)$ of the pivotal grid and $v$
has a neighbor in the box $C'(i+d_1,j+d_2)$ of the pivotal grid.
Let $u,v$ be such stations that $u\in C$, for a box $C$ of the pivotal grid,
and $u$ is in the range area of $v$. We say that $u$ dominates $C$ with respect to $v$
if $u=\min\{w\,|\, w\in C\mbox{ and } w\mbox{ is in the range area of }v\}$.
We also set $g_{\alpha}=2^{c_1\log^2 n}$, where $c_1$ is the constant
from Proposition~\ref{prop:simgram}.
\begin{algorithm}[H]
	\caption{Gen-\Transmit}
	\begin{algorithmic}[1]
    \State Each station $v$ in state {\stwo}  executes
    \NOGRAN($C, c_{\alpha}$), where $C$ is the box of the pivotal grid
    \tj{containing $v$, $c_{\alpha}\gets d_{\alpha}(n)$ and $d_{\alpha}$ is a
    flat function satisfying properties stated in Proposition~\ref{prop:avoid}} 
    \For{each $(d_1,d_2)\in\DIR$}
        \For{$(i,j,k)\in[\log n]\times\{0,1\}\times\{0,1\}$}
            \For{each square $R$ of color $(i,j,k)$ {\bf in-parallel}}
                \State \AlgCRBE($V_R, V_R\cap \{v\,|\, v\mbox{ is }(d_1,d_2)\mbox{-connected}\}$)
            \EndFor
        \EndFor
        \State GranLeaderElection($\{v\,|\, v\mbox{ is a representative chosen in line 5}\}, g_{\alpha}$)
	\State 
            $d\gets$ parameter from Corollary~\ref{cor:dilsuc} applied for the set of leaders of boxes of $G_{\gamma}$.
	  \For{$(j,k)\in[0,d-1]^2$}
	    \State \textbf{Round $1$}: A station $v$ transmits if:
	    \State $v$ is elected
	    the leader
	    of its box of the pivotal grid
	    in line 6
	    during GranLeaderElection,
	    \State and $v\in C(j',k')$
	    such that
	    $(j'\mod d,k'\mod d)=(j,k)$
	    \State \textbf{Round $2$:} A station $u$ transmits if:
	    \State $s(u)=\sone$,
	    \State $u$ heard $v$ in Round $1$, 
	    \State $u\in C(j',k')$ such that
	    $((j'-d_1)\mod d,(k'-d_2)\mod d)=(j,k)$,
	    \State and $u$ dominates its box wrt $v$
	  \EndFor
    \EndFor
\end{algorithmic}
\end{algorithm}

\begin{proposition}\labell{prop:gentrans}
Algorithm {Gen-\Transmit} works in time $O(d_{\alpha}^2(n)\log^2 n)$ for a flat function $d_{\alpha}$
and it preserves the properties
(I) and (P) from page~\pageref{i:P}.
\end{proposition}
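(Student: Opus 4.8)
The plan is to prove Proposition~\ref{prop:gentrans} by establishing the time bound and the two invariants separately, following the template already set by the proof of Proposition~\ref{prop:gtrans} for \mbox{Gran-\Transmit}, with the extra bookkeeping needed because of the \NOGRAN{} preprocessing and the \AlgCRBE{} calls.

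\textbf{Time complexity.} First I would account for the cost of each line. Line~1 (\NOGRAN) is a purely local computation based on each station's knowledge of its box, so it costs $O(1)$ communication rounds (or zero, if we count only transmissions). The outer loop over $\DIR$ has $|\DIR| = O(1)$ iterations. Inside, the loop over colors $(i,j,k) \in [\log n]\times\{0,1\}\times\{0,1\}$ has $O(\log n)$ iterations, and in each one we run \AlgCRBE{} \emph{in parallel} over all squares of that color; by Lemma~\ref{l:nogran} the squares of a fixed color form a collision avoiding family, so Proposition~\ref{prop:chooserepr} applies and each parallel batch costs $O(\log n)$ rounds, for $O(\log^2 n)$ total per direction. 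The call to GranLeaderElection in line~8 is on the set of representatives, whose granularity is at most $g_{\alpha}=2^{c_1\log^2 n}$ by the discussion in Section~\ref{sub:dilution}; by Proposition~\ref{prop:leader} this costs $O(d_{\alpha}^2(n)\log g_{\alpha}) = O(d_{\alpha}^2(n)\log^2 n)$ rounds. The final doubly-nested loop over $(j,k)\in[0,d-1]^2$ has $d^2 = O(d_{\alpha}^2(n))$ iterations of $O(1)$ rounds each. Summing over the $O(1)$ directions gives $O(d_{\alpha}^2(n)\log^2 n)$ overall.

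\textbf{Invariant (I).} As in the proof of Proposition~\ref{prop:gtrans}, I would argue that within any box $C$ of the pivotal grid, either every station in state \sone{} moves to \stwo{} during this execution, or none does. Suppose some station $u\in C$ with $s(u)=\sone$ receives the broadcast message from a station $v$. If $v$ transmitted successfully as a leader chosen in line~6 and $u$ knows this (it can determine it from its local view, since $v$'s position is in the message and $u$ knows all stations in $C$ as they lie in its range area), then $u$ can compute the full set of stations in $C$ that heard $v$, in particular the minimum-ID such station, which is the unique dominator of $C$ with respect to $v$. That dominator transmits in the appropriate Round~2; since the Round~2 senders across all boxes are $d$-diluted with $d = d_{\alpha}(n)$ from Corollary~\ref{cor:dilsuc}, this transmission is successful, so every station of $C$ hears it and all of $C$ moves to \stwo. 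Using that (I) holds at the start of the execution, the only alternative is that all messages $u$ receives come from senders in state \sone{} outside $C$, in which case by the state-change rules no station of $C$ changes state. Hence (I) is preserved.

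\textbf{Invariant (P).} Here I would use that (I) holds at the start, so a box $C$ either has all stations in \stwo{} or none. In the former case, fix a neighbor $C'$ of $C$ in direction $(d_1,d_2)$. By Lemma~\ref{l:nogran} each station of $C$ belongs to exactly one square in the collision avoiding families, and running \AlgCRBE($V_R, V_R \cap \{v : v \text{ is } (d_1,d_2)\text{-connected}\}$) in parallel over each color class (Proposition~\ref{prop:chooserepr}) yields, for every square containing a $(d_1,d_2)$-connected station, a representative that \emph{is} $(d_1,d_2)$-connected, and all stations of that square learn it; the representatives then form a set of granularity $\le g_\alpha$. GranLeaderElection on these representatives (Proposition~\ref{prop:leader}) elects one leader per box of the pivotal grid, and since $C$ still contains at least one $(d_1,d_2)$-connected station (some station of $C$ has a neighbor in $C'$), that leader is $(d_1,d_2)$-connected. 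By Corollary~\ref{cor:dilsuc}, in the Round~1 slot matching its grid coordinates mod $d$, the leader $v$ transmits successfully, so its message reaches all stations of $C'$ — which establishes (P) for the pair $(C,C')$. Quantifying over all neighbors and all directions finishes the argument.

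\textbf{Main obstacle.} The delicate point is not any single estimate but the correctness of running the \AlgCRBE{} executions \emph{in parallel across different boxes of the pivotal grid}: Proposition~\ref{prop:chooserepr} only guarantees correctness for a collision avoiding family, and Lemma~\ref{l:nogran}/Proposition~\ref{prop:simgram} give this only because \NOGRAN{} refines the coloring so that squares of a common color in adjacent boxes are never too close (the $j\bmod 2,\, k\bmod 2$ component of the color), together with the choice $c_\alpha = \Theta(d_\alpha(n))$ and, implicitly, a sufficiently small side parameter $a = \gamma/2^{O(\log^2 n)}$ hidden in $g_\alpha$. I would need to make explicit that $c_\alpha$ here is taken large enough that the conclusion of Proposition~\ref{prop:avoid} applies with the side bound $d_{\log n}\cdot a$ and the population bound $2^{\log n} = n$, so that all color classes — viewed globally over all boxes — are collision avoiding; the rest is routine plumbing of the earlier lemmas.
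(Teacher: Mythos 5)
Your proposal is correct and follows essentially the same route as the paper: the same line-by-line time accounting, the same reduction of (I) and (P) to the argument for Proposition~\ref{prop:gtrans}, and the same key claim that every $(d_1,d_2)$-connected box acquires a leader after GranLeaderElection via the chain \NOGRAN{} $\rightarrow$ \AlgCRBE{} $\rightarrow$ GranLeaderElection, justified by Propositions~\ref{prop:simgram}, \ref{prop:chooserepr} and \ref{prop:leader}. Your treatment is in fact somewhat more explicit than the paper's (which compresses the time bound and defers (I)/(P) to the earlier proof), and your closing remark about cross-box collision avoidance is precisely the point the paper delegates to Proposition~\ref{prop:simgram}.
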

\begin{proof}
As for time complexity, the execution of ChooseReprByEcho in line 5 requires $O(\log n)$ rounds,
and the execution of GranLeaderElection in line 6 requires $O(\log^2n)$ rounds. Since
$d$ and the size od $\DIR$ are constant, {Gen-\Transmit} works in time $O(\log^2 n)$.

As algorithm {Gen-\Transmit} follows the structure of {Gran-\Transmit},
the fact that it preserves
(I) and (P) can be proved similarly \tj{as Proposition~\ref{prop:gtrans}}.
In fact, it is sufficient to prove that if
there is $v\in C$ in state active for a box $C$ which is $(d_1,d_2)$-connected,
then $C$ has the leader after step 6. This claim is a consequence of the
following facts:
\begin{itemize}
\item
$\NOGRAN(C,g_{\alpha})$ in line 1 guarantees that each active station
which is $(d_1,d_2)$-connected is associated with some square which
has assigned a color in $[\log n]\times\{0,1\}^2$; moreover, squares
with the same color are collision avoiding (Proposition~\ref{prop:simgram});
\item
{\AlgCRBE} (line 5) chooses a representative of
$V_R\cap \{v\,|\, v\mbox{ is }(d_1,d_2)\mbox{-connected}\}$ for each square
$R$, provided  $V_R\cap \{v\,|\, v\mbox{ is }(d_1,d_2)\mbox{-connected}\}\neq\emptyset$
thanks to the fact that squares with a fixed color are collision avoiding
(Proposition~\ref{prop:chooserepr});
\item
Granularity of the set of representatives in line 6 of the algorithm is at most
$g_{\alpha}$ by Proposition~\ref{prop:chooserepr} and item (b) on page~\pageref{item:b}
defining restrictions on associations of squares with stations. Therefore,
GranLeaderElection in line 6 chooses the leader in the box $C$, if
the set of station from $\{v\,|\, v\mbox{ is a representative chosen in line 5}\}$
located in $C$ is nonempty (Proposition~\ref{prop:leader}).
\end{itemize}
%
\end{proof}

Below, we state a theorem which follows directly
from the specification of Algorithm {\AlgD} (i.e., repeating algorithm {Gen-\Transmit})
and from Proposition~\ref{prop:gentrans}.

\begin{theorem}\labell{t:brodcast:known:gen}
Algorithm {\AlgD} performs broadcasting in a $n$-node network of diameter
$D$ in time $O(Dd_{\alpha}^2(n)\log^2 n)$, where $d_{\alpha}$ is a flat function.
\end{theorem}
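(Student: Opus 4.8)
The plan is to assemble Theorem~\ref{t:brodcast:known:gen} from the generic broadcasting scheme together with the analysis of {Gen-\Transmit}. By Proposition~\ref{prop:gentrans}, a single execution of {Gen-\Transmit} runs in time $O(d_{\alpha}^2(n)\log^2 n)$ for a flat function $d_{\alpha}$, and it preserves invariants (I) and (P). Algorithm {\AlgD} is exactly the instantiation of the generic scheme of Section~\ref{s:generic} in which each round of {\Transmit} is implemented by {Gen-\Transmit}: the source first broadcasts its message, all stations in its box of the pivotal grid become \stwo, and then {Gen-\Transmit} is repeated. So the proof is essentially a bookkeeping argument plugging Proposition~\ref{prop:gentrans} into Proposition~\ref{prop:invariants}.

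First I would invoke Proposition~\ref{prop:invariants}: since (I) and (P) are maintained throughout (by Proposition~\ref{prop:gentrans}, each call to {Gen-\Transmit} preserves them, and they hold initially by the setup of the generic scheme), the source message reaches the entire network in $O(D \cdot T(n))$ rounds, where $T(n)$ is the cost of one {Gen-\Transmit}. The point is that property (P) guarantees that in each phase the informed region expands by (at least) one hop in the pivotal grid in every direction where there is a communication-graph neighbor, so after $O(D)$ phases every box with an informed neighbor has become informed; since the communication graph has eccentricity $D$ from the source, and graph distance dominates pivotal-grid box distance up to a constant, $O(D)$ repetitions suffice. Then I would substitute $T(n) = O(d_{\alpha}^2(n)\log^2 n)$ from Proposition~\ref{prop:gentrans}, yielding the claimed bound $O(D\, d_{\alpha}^2(n)\log^2 n)$.

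One subtlety worth spelling out is termination detection, i.e.\ that "each station is aware" broadcasting is complete, as demanded by the complexity definition. Since $n$ is known to all stations, one can simply run the scheme for the a priori worst-case number $\Theta(n)$ of {\Transmit} repetitions, or more cheaply observe that $D \le n$ so running $n$ phases (each node counting rounds locally) certainly suffices; this only affects the hidden constants and the dependence is still $O(D\, d_{\alpha}^2(n)\log^2 n)$ when $D$ is known to be the relevant parameter, and $O(n\, d_{\alpha}^2(n)\log^2 n)$ otherwise, which is still within the stated asymptotics once one notes the theorem is phrased in terms of $D$. I would state this as a remark rather than belabor it.

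The main obstacle — already discharged by Proposition~\ref{prop:gentrans}, so here it is only a matter of citing correctly — is verifying that {Gen-\Transmit} genuinely achieves property (P) despite the three-layered implementation ({\NOGRAN} partitioning into collision-avoiding families, {\AlgCRBE} choosing representatives inside each square, and GranLeaderElection promoting representatives to box leaders), and despite the fact that all boxes run these subroutines simultaneously. The collision-avoidance property established in Proposition~\ref{prop:simgram} (with $c = g_{\alpha} = 2^{c_1\log^2 n}$) is what makes the parallel executions non-interfering, and Corollary~\ref{cor:dilsuc} together with the $d$-dilution over $[0,d-1]^2$ in Rounds~1 and~2 is what makes the final leader-to-neighbor and neighbor-to-box transmissions successful. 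Since all of this is encapsulated in Proposition~\ref{prop:gentrans}, the theorem itself follows in a couple of lines.
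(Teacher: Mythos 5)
Your proposal is correct and follows essentially the same route as the paper, which likewise derives the theorem directly by combining Proposition~\ref{prop:gentrans} (per-phase cost $O(d_{\alpha}^2(n)\log^2 n)$ and preservation of (I) and (P)) with Proposition~\ref{prop:invariants} to obtain the $O(D\cdot T(n))$ bound. The additional remarks on termination and on why (P) holds are fine but not needed beyond what those two propositions already encapsulate.
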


\section{Size Dependent Algorithm for Anonymous Networks}
\labell{s:anonymous}

In this section we consider fully anonymous ad hoc networks in which, at the
beginning of a protocol, execution each station knows only $n$, $N$, its
own ID and its position in the Euclidean space (i.e., its
coordinates). We develop a deterministic broadcasting algorithm
{\AlgN}, which matches the lower bound $\Omega(n\log N)$ (see Theorem~\ref{t:lower:log}).

\subsection{High-Level Idea of Algorithm {\AlgN}}

Our algorithm executes repeatedly two threads.

The first thread keeps
combining stations into groups in such a way that eventually, for any box $C$ of
the pivotal grid, all stations located in $C$ form one group.
Moreover, each group should have the leader, and each station should be aware of
(i) which group it belongs to, (ii) which station is the leader of that group, and
(iii) which stations belong to that group (i.e., a station should know the set of IDs
and positions\footnote{\tj{It is sufficient that $O(\log^2n)$ bits of coordinates
of stations are stored.}} of all stations in the group).
These properties are achieved as follows.

Upon waking up, each station forms a group with a single element (itself),
and then the groups increase gradually by merging.
The merging process builds upon the following observation.
Let $\sigma$ be the smallest distance between two
stations taking part in the first thread, and let $u,v$ be two closest
stations. Thus, there is at most one transmitting
station in each box of the grid $G_{\sigma/\sqrt{2}}$.
Then, if $u$ ($v$, resp.) transmits a message
and no other station in distance $d\cdot \sigma$, for some constant $d$,
transmits at the same time,
then $v$ ($u$, resp.) can hear that message (see Proposition~\ref{prop:avoid}).
Using combinatorial structure called {\em strongly-selective family} (ssf) as
a broadcast schedule, one can assure that a round satisfying
these properties occurs in $O(\log n)$ rounds. If $u$ can hear $v$
and $v$ can hear $u$ during such a schedule, the groups of $u$ and $v$
can be merged into one larger group.

The second thread, on the other hand, is supposed to guarantee that in each round $t$
of the algorithm and for each group of stations $H$, exactly one station
from $H$ is transmitting a message in round $t$. This property will be satisfied
provided each station knows its group, so it can determine its temporary
ID (TID) as the rank of its ID in the sequence of IDs of stations from
the group, taken in a nondecreasing order. Using these TIDs, the stations of the group
apply round-robin strategy. Thus, if each group
corresponds to all stations in the appropriate box, transmissions in
the second thread are successful (see Corollary~\ref{cor:dilsuc}, Proposition~\ref{prop:avoid}
for $y=1$, $a=1$ and $d=1$),
and therefore they guarantee that all neighbors of the box will have
informed stations, provided 
the second thread is executed for sufficiently long time.

In order to apply the above described ideas for global broadcasting, it is
necessary to repeat Threads~1 and 2 several times.
The main problem with implementation
and its analysis is
that there is no simple way to determine whether group(s) already
covers the whole box of the pivotal grid. Moreover, as long as there are many groups inside
a box, transmissions in the second thread may cause unwanted interferences.
Another problem is that the set of stations attending the protocol changes gradually,
when new stations become informed and can
join
the execution of the protocol.
Therefore we modify the above described ideas in the following way:
\begin{itemize}
\item
The two threads --- one forming groups and the other transmitting in a round-robin fashion ---
are interleaved such that one round of the former is followed by one
round of the latter. This will be conceptually implemented in a form of two parallel threads.
\item
In order to tackle the lack of knowledge about the progress in computation,
each station participates in the protocol for $T(n)$ rounds, where $T(n)$
is the upper bound on the round complexity of accomplishing our broadcasting
algorithm derived in the analysis.
\item
Finally, our proof of complexity bound is based on measuring the progress of computation at round $t$
by using amortized analysis, in a way reflecting the advancement of the process
of merging groups and receiving the broadcast message by consecutive
stations.
%
\end{itemize}

\subsection{Formal Implementation of Algorithm {\AlgN}}

Each station $v$ keeps in its local memory a boolean variable $L(v)$ indicating
whether $v$ has the status of the {\em leader} of its group, and local variables
$M(v)\in V$ and $G(v)\subseteq V$. Let us think of a directed graph defined by edges
$(v,M(v))$.
Our goal is to preserve the invariant that
the graph is a forest $F$ and each edge $(v,M(v))$ is directed \tj{from a child to its parent
in the appropriate tree of $F$}.
Provided this invariant is preserved, we define $master(v)$ as the
transitive closure of $M(v)$, i.e., $master(v)=v$ if $M(v)=v$ and $master(v)=master(M(v))$ otherwise.
Moreover, $group(v)=G(master(v))$.
The fact that pointers $M(v)$ define a forest gives a partition of the set of stations in the following way:
\begin{itemize}
\item
each tree of this graph forms one group;
\item
each group has the leader which is equal to the root of the appropriate tree; that is, the leader
of the group to which $v$ belongs is equal to $master(v)$.
\end{itemize}

We say that a station $v$ is {\em consistent} if $M(v)=master(v)$ and $G(v)=group(v)$.
Initial values of the local variables of stations are as follows:
$L(v)\gets true$, $M(v)\gets v$, $G(v)\gets \{v\}$. Thus, all
stations are consistent at the beginning.
%
A {\em leader} is each station $v$ such that $L(v)=$true.

We say that a network satisfies {\em integrity at time $t$} iff
\begin{enumerate}
\item[(a)]
groups $G(v)$ known by leaders 
at the end of round $t$ form a partition of the set of all stations $V$ (i.e., $V=\bigcup_{\{v\,|\,L(v)\}}G(v)$
and $G(v)\cap G(u)=\emptyset$ for each $v\neq u$ such that $L(v)=L(u)=true$);
\item[(b)]
$G(v)\subseteq G(M(v))$ for each station $v$;
\item[(c)]
$M(v)\in\boxx(v)$ and $G(v)$ contains only stations located in $\boxx(v)$.
\end{enumerate}
One of invariants which we are going
to \tj{be preserved along executions} of {\AlgN} is that all leaders are consistent, and the network
satisfies integrity.
%
Ideally, we would also like to achieve consistency of stations which are not leaders ---
unfortunately this property will not be guaranteed by our solution,
however our algorithm will be able to achieve it at some crucial stages of the broadcasting task.
%

The algorithm proceeds in two parallel threads:
Thread~1 and Thread 2. We assume
that Thread~1 is executed in odd rounds (i.e., in rounds $t$ such that $t\mod 2=0$) and Thread
2 in even rounds. In order to simplify presentation, we assume that rounds of Thread~1/Thread~2
have consecutive numbers $1,2,3,\ldots$ Below, we describe both threads in more detail.

\paragraph{Thread 1.}
The main goal of Thread~1 is to merge groups such that \tj{consistency of leaders
and integrity of network} are preserved. The following technical proposition is the key for guaranteeing
process of merging groups is fast enough.

\begin{proposition}
\labell{prop:for:sel}
For each $\alpha>2$, there exists a constant $d$, which depends
only on the parameters $\eps,\beta$ and $\alpha$ of the model,
satisfying the following property.
Let $W$ be a set of stations such that there
is at most one station from $W$ in each box
of the grid $G_x$, for some $x\leq \gamma$, and
$\min_{u,v\in W}\{\dist(u,v)\}=x\cdot\sqrt{2}\}$. 
If station $u\in C$ for a box $C$ of $G_x$ is transmitting in a round $t$ and no other station
in any box $C'$ of $G_x$ in the box-distance at most $d$ from $C$ is transmitting at that round,
then $v$ can hear the message from $u$ at round $t$.
\end{proposition}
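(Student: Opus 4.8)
The plan is to mimic the interference-bounding computation in Proposition~\ref{prop:avoid} (and Proposition~\ref{prop:lead1}), specialized to the situation here. The setup is: $W$ is a set of stations with at most one per box of $G_x$, the minimum pairwise distance is exactly $x\sqrt{2}$ (so neighboring occupied boxes can be adjacent, i.e.\ the instance is ``$1$-diluted'' in the worst case), and $u,v$ are a closest pair, hence $\dist(u,v)=x\sqrt{2}\le \gamma\sqrt{2}=r$, so $(u,v)$ is an edge of the communication graph. We want: if $u$ transmits and every other station of $W$ lying in a box within box-distance $d$ of $u$'s box stays silent, then $v$ hears $u$.

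First I would lower-bound the received signal: since $\dist(u,v)=x\sqrt2$ and powers are $1$, the signal at $v$ is $(x\sqrt2)^{-\alpha}$; also the second reception condition $P_u\dist(u,v)^{-\alpha}\ge(1+\eps)\beta\cN$ holds because $\dist(u,v)=x\sqrt2\le r=(1+\eps)^{-1/\alpha}$ and $\beta=\cN=1$ (this is exactly why the hypothesis $x\le\gamma$ matters). Next I would upper-bound the interference at $v$ coming from the transmitting stations of $W$, all of which lie in boxes of $G_x$ at box-distance $>d$ from $u$'s box $C$. Because there is at most one station of $W$ per box, the number of occupied boxes whose box-distance from $C$ lies in $[j,j+1)$ is $O(j)$ (a frame of width one box, cf.\ Figure~\ref{fig:gran}), and each such station is at Euclidean distance $\ge (j-O(1))x$ from $v$ (since $v$ is within box-distance $\sqrt2$ of $C$). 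Hence the interference is at most $\cN+\sum_{j\ge d} O(j)\cdot (jx/2)^{-\alpha} = 1 + O(x^{-\alpha})\sum_{j\ge d} j^{1-\alpha}$. For $\alpha>2$ the tail $\sum_{j\ge d} j^{1-\alpha}$ is finite and tends to $0$ as $d\to\infty$, so it can be made smaller than any fixed constant by choosing $d$ large. Then the SINR at $v$ is at least
\[
\frac{(x\sqrt2)^{-\alpha}}{1+O(x^{-\alpha})\sum_{j\ge d}j^{1-\alpha}},
\]
and I would choose $d$ large enough (depending only on $\alpha,\eps,\beta$) that this exceeds $\beta=1$; concretely one needs $\sum_{j\ge d}j^{1-\alpha}$ small relative to $(\sqrt2)^{-\alpha}$ minus the $\cN x^\alpha$ term, and since $x\le\gamma\le r<1$ this is a clean constant comparison. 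That yields $SINR(u,v,\cT)\ge\beta$, and together with the second condition this means $v$ receives $u$'s message, as required.

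The main obstacle — really the only subtlety — is bookkeeping the geometry: the interference bound must be at $v$, not at $u$, and $v$ need not sit in $C$; it can be in a box adjacent to $C$. So one must be slightly careful that silencing boxes within box-distance $d$ of $C$ also silences everything within box-distance $\approx d$ of $v$, losing only an additive constant, and that the ``number of occupied boxes in a box-distance shell'' counting (the $O(j)$ bound, valid because each box holds at most one station of $W$) is applied relative to the right center. This is exactly the same loss-of-a-constant maneuver used in the proof of Proposition~\ref{prop:lead1} (there $u$ was within box-distance $2$ of $C$), so I would essentially cite that computation with $c=1$ — i.e.\ $x=\gamma$ in the extreme case, or $x=\gamma/c$ with $c\ge1$ in general — and the dilution parameter $\delta=1$, absorbing the mild generalization that the minimum distance is $x\sqrt2$ rather than assuming full dilution. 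Note the restriction to $\alpha>2$ (rather than $\alpha\ge2$) is essential here: for $\alpha=2$ the series $\sum j^{1-\alpha}=\sum 1/j$ diverges, so no constant $d$ works, which is consistent with the statement quantifying only over $\alpha>2$.
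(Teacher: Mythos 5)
Your proposal is correct and follows essentially the same route as the paper: lower-bound the signal at $v$ by $(\sqrt{2}x)^{-\alpha}\ge(1+\eps)\cN$, upper-bound interference by summing $O(j)$ occupied boxes per box-distance shell $j>d$ at distance $\ge \Omega(jx)$, and use convergence of $\sum_{j\ge d}j^{1-\alpha}$ for $\alpha>2$ to pick a constant $d$. The only cosmetic difference is that the paper splits the final comparison into two cases (small versus large $\cN(\sqrt2 x)^{\alpha}$), whereas you fold it into the single observation that $1-\cN(\sqrt2x)^{\alpha}\ge\eps/(1+\eps)$ whenever $x\le\gamma$; both yield the same constant-$d$ conclusion, and your handling of the off-by-a-constant issue (interference measured at $v$'s box rather than $u$'s) matches the paper's implicit treatment.
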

\begin{proof}
Let $u,v$ satisfy properties stated in the proposition.
If $u$ is transmitting in round $t$ then the power of the signal of $u$ arriving
at $v$ is
\begin{equation}\label{e:tr}
\frac{1}{(\sqrt{2}x)^{\alpha}}
\geq
(1+\eps)\cN
\ ,
\end{equation}
where the inequality follows from the fact that $\sqrt{2}x\leq r=(1+\eps)^{-1/\alpha}$
(recall that we assume $\beta=1$).
Observe that, under the assumptions of the proposition,
the number of stations whose distance to $v$ is in the interval $[ix,(i+1)x)$ is
not larger than the number of boxes of $G_x$ in box-distance $i$ from the box
containing $v$, which in turn is equal to $8(i+1)$.
Assuming that no station
in any box $C'$ in the box-distance at most $d$ from $C$ is transmitting,
the amount of interference and noise at $v$ is smaller than
$$
\cN+\sum_{i=d}^{\infty}8(i+1)\cdot\frac{1}{(ix)^{\alpha}}
=
\cN+\frac{8}{x^{\alpha}}\cdot c_d
\ ,
$$
where $c_d=\sum_{i=d+1}^{\infty}i^{1-\alpha}$. 
Thus, by (\ref{e:tr}) it is sufficient to show that
there exists $d$ which guarantees that
$$\cN+\frac{8}{x^{\alpha}}c_d\leq(1+\eps)\cN\mbox{ or } \cN+\frac{8}{x^{\alpha}}c_d\leq\frac{1}{2^{\alpha/2}x^{\alpha}}$$
for each $x>0$, which is equivalent to:
\begin{equation}\label{e:cases}
c_d\leq \frac{1-\cN(\sqrt{2}x)^{\alpha}}{8\cdot 2^{\alpha/2}}\mbox{ or } c_d\leq \frac{\eps\cN x^{\alpha}}{8}.
\end{equation}
Consider two cases:

\noindent Case A: $\cN(\sqrt{2}x)^{\alpha}\leq \frac12$

This case reduces the first inequality of (\ref{e:cases}) to $c_d\leq \frac{1}{16\cdot 2^{\alpha/2}}$ which
is satisfied for sufficiently large $d$, due to convergence of $\sum_{i} i^{1-\alpha}$.

\noindent Case B: $\cN(\sqrt{2}x)^{\alpha}> \frac12$

In this case, the second inequality of (\ref{e:cases}) reduces to
$c_d\leq \frac{\eps}{16\cdot 2^{\alpha/2}}$ which
is also satisfied for sufficiently large $d$, due to convergence of $\sum_{i} i^{1-\alpha}$.

\comment{ 
$$
\cN+\frac{8}{x^{\alpha}}\cdot c_d
<
(1+\eps)\cN
\ ,\mbox{ or }
$$
then $v$ receives the message from $u$. Since $x\leq \sqrt{2}r<\sqrt{2}$, the above
inequality is satisfied if $c_d<\frac{\eps\cN2^{\alpha/2}}{8}$, which in turn
is satisfied for sufficiently large $d$ depending on $\alpha$
because $\sum_{i=1}^{\infty}i^{1-\alpha}=\zeta(\alpha-1)$
is bounded by a constant, for $\alpha>2$.
} 
\end{proof}


A family $S=(S_0,\ldots,S_{s-1})$ of subsets of $[N]$ is a {\em $(N,k)$-ssf (strongly-selective family)} of length $s$
if, for every non empty subset $Z$ of $[N]$ such that
$|Z|\leq k$ and for every element $z\in Z$, there is a set $S_i$ in $S$ such that
$S_i\cap Z=\{z\}$.
It is known that there exists $(N,k)$-ssf of size $O(k^2\log N)$ for every $k\leq N$,
c.f.,~\cite{ClementiMS01}.
Let $k=(2d+1)^2$, let $S$ be a $(N,k)$-ssf,
and let $s=|S|=O(\log N)$.
%
The sets $S_0,\ldots,S_{s-1}$ of the family $S$ define a broadcast schedule in such a way
that station $v$ transmits in round $t$ iff $v\in S_{t\mod s}$ (formally, the bit $t$
of $S(v)$ is equal to 1 iff $v\in S_t$).
\begin{corollary}\labell{cor:selector}
For each $\alpha>2$, there exists a constant $d$, which depends
only on the parameters $\eps,\beta$ and $\alpha$ of the model,
satisfying the following property.
Let $W$ be a set of stations such that
$\min_{u,v\in W, \boxx(u)=\boxx(v)}\{\dist(u,v)\}=x$ and let $\dist(u,v)=x$
for some $u,v\in W$ such that $\boxx(u)=\boxx(v)$ and $W$ is \tj{$d$-diluted for
$d\geq 2$.}
Then, $v$ can hear the message from $u$ during an execution of a $(N,k)$-ssf on $W$.
\end{corollary}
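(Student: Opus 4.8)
The plan is to derive Corollary~\ref{cor:selector} from Proposition~\ref{prop:for:sel} by checking that, inside a suitable $(N,k)$-ssf schedule executed on the $d$-diluted set $W$, there is a round in which $u$ transmits while the entire box-distance-$d$ neighborhood of $\boxx(u)$ is silent, so that the conclusion of the proposition applies verbatim. First I would fix the constant $d$ to be the one furnished by Proposition~\ref{prop:for:sel} (enlarging it by an additive constant if needed so that $d\ge 2$, matching the dilution hypothesis), and set $k=(2d+1)^2$, $S$ an $(N,k)$-ssf, $s=|S|=O(k^2\log N)=O(\log N)$. The pair $u,v$ with $\dist(u,v)=x=\min_{\boxx(u)=\boxx(v)}\dist(u,v)$ and $\boxx(u)=\boxx(v)$ plays the role of the two closest stations; note $x\le\gamma\sqrt 2\le r\sqrt 2$ is not quite what the proposition wants (it wants $\dist=x'\sqrt2$ with $x'\le\gamma$), so I would rescale: put $x'=x/\sqrt2$ and observe that because $W$ is $d$-diluted and, within a box, $u,v$ realize the minimum intra-box distance $x$, there is at most one element of $W$ in each box of the grid $G_{x'}$ — this uses both the dilution (which spreads out stations in different boxes of the pivotal grid) and the minimality of $x$ (which prevents two $W$-stations in the same pivotal box from being closer than $x$ apart, hence from sharing a $G_{x'}$-box). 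That is the one genuinely fiddly geometric point and I expect it to be the main obstacle: one has to argue carefully that no pivotal box contains two $W$-stations within distance $x'\sqrt2$, combining the ``closest pair lies in a common box'' hypothesis with the dilution to rule out near-collisions across box boundaries.

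Granting that $W$ meets the hypotheses of Proposition~\ref{prop:for:sel} with parameter $x'$, it remains to exhibit the good round. Consider the set $Z$ of identifiers of stations of $W$ lying in $\boxx(u)$ together with all stations in boxes of the pivotal grid within box-distance $d$ of $\boxx(u)$; since $W$ is $d$-diluted, each box of the pivotal grid at box-distance $\le d$ from $\boxx(u)$ that contains a $W$-station is one of at most $(2d+1)^2=k$ such boxes, and dilution together with the at-most-one-per-$G_{x'}$-box property bounds $|Z|$ — more precisely the relevant transmitters are at most $k$ in number after accounting for dilution, so $|Z|\le k$. By the defining property of the $(N,k)$-ssf there is an index $i$ with $S_i\cap Z=\{u\}$; in round $i$ of the schedule station $u$ transmits and every other station whose box is within box-distance $d$ of $\boxx(u)$ is silent. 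Proposition~\ref{prop:for:sel} (applied with $C=\boxx(u)$ in $G_{x'}$, the roles of $u,v$ as there) then yields that $v$ hears $u$ in that round, which is exactly the assertion of the corollary.

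I would present this as a short paragraph: state the choice of $d$ and $k$, verify the one-station-per-$G_{x'}$-box claim, identify $Z$ and bound $|Z|\le k$, invoke the ssf property to get the clean round, and conclude by Proposition~\ref{prop:for:sel}. The only subtlety worth spelling out in the write-up is the box-counting: a $d$-diluted set has at most a constant number ($\le(2d+1)^2$) of occupied pivotal boxes within box-distance $d$ of any fixed box, and $k$ was chosen precisely to dominate this count, so the ssf guarantees isolation of $u$ among all potential interferers covered by Proposition~\ref{prop:for:sel}'s silence requirement. Everything else is a direct quotation of the already-proved proposition, so no further estimation is needed.
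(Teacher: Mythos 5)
Your overall route is the intended one: set $x'=x/\sqrt2$, check that $W$ has at most one station per box of $G_{x'}$, and then use the $(N,k)$-ssf with $k=(2d+1)^2$ to produce a round in which $u$ transmits while every interferer covered by the silence hypothesis of Proposition~\ref{prop:for:sel} is quiet. Your verification of the one-station-per-$G_{x'}$-box property is correct and is indeed the one geometric point needing care: minimality of $x$ handles pairs inside a common pivotal box, and dilution (occupied pivotal boxes differ by at least $2$ in some coordinate, hence by more than $\gamma\ge x'$ in that coordinate) rules out two stations from distinct pivotal boxes sharing a $G_{x'}$-box.

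The step that fails as written is the definition of $Z$ and the claim $|Z|\le k$. You take $Z$ to be all stations of $W$ lying in pivotal-grid boxes within box-distance $d$ of $\boxx(u)$. But a single pivotal box can contain up to $\Theta\bigl((\gamma/x')^2\bigr)$ stations of $W$ at pairwise distance at least $x$ --- for instance all $n$ stations could sit in $\boxx(u)$ --- so $|Z|$ is not bounded by $k=(2d+1)^2$, the ssf property cannot be invoked, and no round with $S_i\cap Z=\{u\}$ is guaranteed. The set to which the ssf must be applied is $W$ restricted to the boxes of $G_{x'}$ (not of the pivotal grid) at $G_{x'}$-box-distance at most $d$ from the $G_{x'}$-box containing $u$: there are $(2d+1)^2$ such boxes, each holding at most one station of $W$ by the property you already established, so this set has size at most $k$ and consists of exactly the stations whose silence Proposition~\ref{prop:for:sel} requires. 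This is precisely how the paper runs the argument inside the proof of Lemma~\ref{l:merge}. With $Z$ redefined this way your proof goes through; everything else, including the concluding application of the proposition in $G_{x'}$, is fine.
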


\tj{Now, we are ready to describe Thread 1 in detail. Given a $(N,k)$-ssf $S$ of length $s$,}
Thread 1 consists of blocks of $2s$ rounds, each block split in two
stages of length $s$. Importantly, a station which becomes informed
during a block, starts participating in the execution of the protocol
in the next block of Thread 1. Algorithm~\ref{alg:un:n:ini} describes behavior
of a station $v$ in step $t$. Note that the initial value of $X_v$ is equal to the
empty set for each $v$ at the beginning of a block \tj{(see Algorithm~\ref{alg:un:n:mod})}
and then, it is equal to the set of station
which transmitted successfully to $v$ during the block.
\begin{algorithm}[H]
	\caption{Thread1($v,t$)}
	\label{alg:un:n:ini}
	\begin{algorithmic}[1]
    \State $t'\gets t\mod 2s$
\If{$v$ informed before step $t-t'$}\Comment{$v$ informed before the current block}
    \If{$t'<t$ } \Comment{(Stage 1 of a block)}
    \If{$L(v)$ and $v\in S_{t\mod s}$}
        \State $v$ transmits a message including $v$ and $G(v)$
    \Else
        \If{$L(v)$}
            \If{$v$ can hear $u$} $X_v\gets X_v\cup \{u\}$
	    \EndIf
        \Else
            \If{$v$ can hear $u$ such that $G(v)\subset G(u)$} $M(v)\gets u$; $G(v)\gets G(u)$
	    \EndIf
        \EndIf
    \EndIf
    \Else \Comment{(Stage 2 of a block)}
    \If{$L(v)$ and $v\in S_{t\mod s}$}
        \State $v$ transmits a message including $v$ and $X_v$
    \EndIf
    \EndIf
\State Modify($v,t$)
\EndIf
    \end{algorithmic}
\end{algorithm}
In a single block of Thread~1, the $(N,k)$-ssf $S$ is executed twice: once in Stage~1
and once in Stage~2. At the end of the block, the procedure Modify is executed, whose
goal is to merge groups using information gathered in Stages~1 and 2 of the current
block.
In Stage~1, each station $v$ determines $X_v$, the set of stations $u$ such that $v$ can hear $u$ during
the execution of $S$ (on the set of stations active at the beginning of Stage~1 of the block).
In Stage~2, each station $v$ sends $X_v$, and in this way, at the end of Stage~2, it also collects information about $X_u$ for each $u\in X_v$.


For a fixed block of computation, let $G'(V,E')$ be a symmetric graph
which consists of such edges $(u,v)$ that \tj{$u$ and $v$ have the status of leaders,}
$u$ can hear $v$ and
$v$ can hear $u$ during the block of computation. Note that $(u,v)\in E'$ iff
$v\in X_u$ and $u\in X_v$. Thus, each station can determine its
neighbors in $G'$ at the end of each block (since $v$ knows $X_v$
after Stage~1,
and it learns $X_u$, for each $u\in X_v$, during Stage~2).

At the end of each block of Thread~1, each station modifies
its local variables appropriately, by executing procedure Modify, c.f., the pseudo-code
of Algorithm~\ref{alg:un:n:ini}.
The goal is to make
at least one merge of two groups.  In order to achieve this goal,
we implement an algorithm which builds
(in distributed way) a matching in $G'$ such that the matching
is nonempty iff the set of edges of $G'$ is nonempty as well.
\tj{(Actually, our algorithm builds such a matching that each station $v$ satisfying the following
properties chooses its ``partner'' in the matching: $v$ can
hear another station during a block and $v$ is smaller than IDs of stations which transmitted
successfully a message to $v$ in the block.)}
Then,
the groups of the pairs of stations in the matching are merged.
\begin{algorithm}[H]
	\caption{Modify($v,t$)}
	\label{alg:un:n:mod}
	\begin{algorithmic}[1]
\If{$t\mod 2s=0$} \Comment{Execute at the end of round $t$ such that $t\mod 2s=0$}
    \State $match(v)\gets nil$
    \If{$L(v)$ and $X_v\neq\emptyset$}
        \State $u\gets\min(X_v)$
        \If{$v=\min(X_u)$}
            \State $match(v)\gets u$
            \If{$v>u$}
                \State $M(v)\gets u$; $L(v)\gets false$
            \EndIf
            \State $G(v)\gets G(v)\cup G(u)$
        \EndIf
    \EndIf
    \State $X_v\gets\emptyset$
\EndIf
    \end{algorithmic}
\end{algorithm}

\paragraph{Thread 2.}
In Thread~2, each station applies round-robin algorithm inside its group.
This is done successfully provided the stations possess up to date information about their groups ---
which is the goal of the previously described Thread 1.
\begin{algorithm}[H]
	\caption{Thread2($v,t$)}
	\label{alg:un:n:th2}
	\begin{algorithmic}[1]
    \State $\Delta\gets |G(v)|$
    \State $TID(v)\gets |\{u\,|\,u\in G(v)\mbox{ and }u< v\}|$
    \State if $t\mod\Delta=TID(v)$: $v$ transmits a message.
    \end{algorithmic}
\end{algorithm}

\subsection{Analysis}
Recall that we make a simplifying assumption that, if at most one station
from each box of the pivotal grid transmits in a round $t$, then
each such transmission is successful.
Due to Corollary~\ref{cor:dilsuc}, one can achieve this property
using dilution with constant parameter $d$ (provided $\alpha>2$),
which does not change the asymptotic complexity
of our algorithm.

First, we prove some basic properties of Thread~1.

\begin{proposition}
\labell{prop:cons}
Thread~1 preserves consistency of leaders and integrity of network at any round.
\end{proposition}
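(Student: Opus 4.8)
# Proof Proposal for Proposition~\ref{prop:cons}

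The plan is to prove both claims (consistency of leaders and integrity of the network) simultaneously by induction on the block number of Thread~1, since the two properties are intertwined: integrity clause (b) relies on groups being merged consistently, while consistency of leaders relies on the forest structure maintained by integrity clause (c). The base case is immediate from the initialization ($L(v)=\mathit{true}$, $M(v)=v$, $G(v)=\{v\}$ for all $v$), which makes every station a singleton group whose own root it is; clauses (a), (b), (c) and leader-consistency all hold trivially, and newly informed stations entering at later blocks also start in this state.

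For the inductive step, I would assume the invariants hold at the start of a block and trace through what Thread~1 and the procedure Modify do. The key observation is that the only place local variables $M(v)$, $G(v)$, $L(v)$ change during a block is in Modify (and in line~10 of Thread1 for non-leaders, which only copies a strict superset group from a station in the same box). In Modify, a leader $v$ with $X_v\neq\emptyset$ picks $u=\min(X_v)$ and acts only if $v=\min(X_u)$ — this is exactly the symmetric matching condition, so the set of merge operations in a block forms a matching in the graph $G'$. I would argue: (i) because $X_v$ contains only stations $v$ could hear during the block, and by Corollary~\ref{cor:selector} / the diluted-ssf argument such stations lie in the same box as $v$, clause (c) of integrity is preserved — $M(v)$ stays within $\boxx(v)$ and merged groups stay within the box; (ii) in each matched pair the larger-ID station sets $M(v)\gets u$ and $L(v)\gets\mathit{false}$ while the smaller keeps leadership and both set $G\gets G(v)\cup G(u)$, so the edge set $(w,M(w))$ remains a forest (we only ever attach a former root to another root, never creating a cycle since the matching is a set of vertex-disjoint edges among current roots), giving clause (b) and the forest invariant; (iii) the groups of leaders still partition $V$ because merges replace two disjoint parts by their union and demote exactly one root, so clause (a) is maintained. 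Leader-consistency — $M(v)=\mathit{master}(v)$ and $G(v)=\mathit{group}(v)$ for every leader — follows because after Modify every remaining leader $v$ has $M(v)=v$ (it never changed its own $M$) and $G(v)$ equals the union of the old groups of all stations now in its tree, which is precisely $\mathit{group}(v)$ under the updated forest.

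The main obstacle I anticipate is the bookkeeping around stations becoming informed mid-block and around non-leader stations whose $M$, $G$ pointers may lag (the paper explicitly warns that non-leaders need not be consistent). I would handle the first by noting the guard in Thread1 that a station informed before step $t-t'$ — i.e., before the current block — is the only one acting, so the set of participants is fixed within a block and newly informed stations are spectators until the next block, where they enter as fresh singletons consistent with the forest. For the second, I would emphasize that integrity clauses (a), (b), (c) and leader-consistency only constrain leaders' views and the pointer structure; a stale non-leader $v$ still satisfies $G(v)\subseteq G(M(v))$ because group updates only ever enlarge groups along parent pointers, and line~10 of Thread1 only overwrites $M(v),G(v)$ when it hears a strictly larger group from a same-box station, which keeps (b) intact and never breaks the forest (it re-parents $v$ to a station already in $v$'s box, and one checks this does not create a cycle because the new parent's group strictly contains $v$'s old group). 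Assembling these observations closes the induction.
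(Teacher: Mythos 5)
Your proposal is correct and follows essentially the same route as the paper's own proof: induction over blocks, the symmetry of the matching ($u=match(v)$ iff $v=match(u)$), the fact that each matched pair replaces its two groups by their union with exactly one surviving leader (preserving the partition and the forest of $M$-pointers), and the observation that matched leaders exchanged messages during the block, which yields leader consistency; your version is simply more explicit about the base case, newly informed stations, and stale non-leaders. One small caution: your appeal to Corollary~\ref{cor:selector} to argue that members of $X_v$ lie in $\boxx(v)$ is not what that corollary says (a station can perfectly well hear a transmitter in an adjacent box), so to preserve integrity clause (c) you must instead invoke the intended convention — stated explicitly only in Section~\ref{s:algdeg} — that $X_v$ is restricted to same-box stations, a point the paper's own proof also glosses over.
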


\begin{proof}
Assume that consistency of leaders and integrity of network are satisfied
at the beginning of a block of Thread 1.
Since variables determining integrity of the network and consistency
of stations change only at the end of blocks (i.e., during the execution of algorithm
Modify), let us
consider round $t$ at the end of a block.
Note that $u=match(v)$ iff
$v=match(u)$ at the end of $Modify(v,t)$. Moreover, if $u=match(v)$ and
$v=match(u)$, then exactly one of $u,v$ becomes non-leader and one of
them remains the leader. Thus, as a result, the groups $G(v), G(u)$
are replaced by $G(v)\cup G(u)$ after step $t$, which proves integrity. Since the
group of the station $v$ changes only in case $u=match(v)$,
$v=match(u)$ and $L(v)=L(u)=true$ for some $u$,
it preserves consistency thanks to the fact that
such $u$ and $v$ exchange messages with $u$ during the analyzed block of
Thread~1.
\end{proof}

We say that {\em station $u$ joins the group of station $v$} during the block of Thread 1
if $L(u)=L(v)=true$ at the beginning of the block, while
$L(u)=false$, $L(v)=true$, and $M(u)=v$ at the end of that block.

\begin{lemma}
\labell{l:merge}
Assume that the set $W$ of leaders at the beginning of a block of
Thread~1 contains at least two elements, 
which are located in the same box of the pivotal grid. Then, there
exist $u,v\in W$ such that $u$ joins the group of $v$ during the
block.
\end{lemma}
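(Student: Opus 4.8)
The plan is to show that within one block of Thread~1, the auxiliary graph $G'(V,E')$ restricted to leaders has at least one edge, and then to argue that the matching procedure in \texttt{Modify} selects at least one such edge, yielding the desired merge. Concretely, I would proceed as follows.

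First I would locate the critical pair of leaders. Let $W$ be the set of leaders at the beginning of the block; by hypothesis there is a box $C$ of the pivotal grid containing at least two elements of $W$. Among all pairs of leaders lying in a common box of the pivotal grid, pick $u,v$ minimizing $\dist(u,v)$, and set $x=\dist(u,v)$; these $u,v$ lie in the same box. I would then invoke Corollary~\ref{cor:selector}: since $W$ is $d$-diluted (this is the simplifying dilution assumption built into the analysis, so I may assume $d\geq 2$) and $x$ is the minimum in-box distance realized by the pair $u,v$, during the execution of the $(N,k)$-ssf $S$ in Stage~1 there is a round in which $u$ transmits and no other station in any box within box-distance $d$ transmits, so $v$ hears $u$; symmetrically $u$ hears $v$. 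Here I use the defining property of the ssf: with $k=(2d+1)^2$ chosen as the number of boxes within box-distance $d$, and the transmitting leaders within that window forming a set of size at most $k$, some $S_i$ isolates $u$ among them, and likewise some $S_j$ isolates $v$. Hence $(u,v)\in E'$, so $E'\neq\emptyset$.

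Next I would trace Stage~2 and \texttt{Modify} to see that a merge actually happens. After Stage~1, $v\in X_u$ and $u\in X_v$ (by the observation in the text that $(u,v)\in E'$ iff $v\in X_u$ and $u\in X_v$). In Stage~2 each leader retransmits its set $X_\bullet$ via the same ssf, so by the same isolation argument every leader learns $X_w$ for each $w$ it heard in Stage~1; in particular $u$ and $v$ learn each other's $X$-sets. Now consider the leader $w^\star$ with the smallest ID among all leaders $w$ with $X_w\neq\emptyset$. In \texttt{Modify}, $w^\star$ sets $u_0\gets\min(X_{w^\star})$; I would argue $w^\star=\min(X_{u_0})$, because $X_{u_0}$ is nonempty (it contains $w^\star$ since hearing is symmetric on the edge set of $G'$... more precisely $w^\star\in X_{u_0}$ iff $u_0\in X_{w^\star}$, which holds) and $w^\star$ is the global minimum over all $X$-sets, hence in particular the minimum of $X_{u_0}$. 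Therefore $match(w^\star)=u_0$ and $match(u_0)=w^\star$, and since $w^\star$ is smaller than $u_0$ (it is the global minimum), the branch $w^\star>u_0$ fails, so actually $u_0>w^\star$ implies $M(u_0)\gets w^\star$, $L(u_0)\gets false$: that is, $u_0$ joins the group of $w^\star$. This gives the required $u,v$ in the statement (with $v=w^\star$, $u=u_0$). I would note that existence of such a $w^\star$ is guaranteed because $E'\neq\emptyset$ forces some $X_w\neq\emptyset$.

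The main obstacle I anticipate is the bookkeeping around the ssf isolation argument in the presence of stations waking up mid-block and in the presence of the $d$-dilution: one must be careful that the set of \emph{leaders transmitting within the box-distance-$d$ window} around $C$ has size at most $k=(2d+1)^2$ so that the strong selectivity applies, and that only stations active (informed before the block) participate, matching the convention stated just before Algorithm~\ref{alg:un:n:ini}. A secondary subtlety is confirming the symmetry $w\in X_{w'} \iff w'\in X_w$ only for \emph{leaders} (Thread~1 restricts transmissions and hearing-updates to leaders in the relevant branches), and that \texttt{Modify} is invoked with the $X$-sets fully populated, i.e.\ at the round $t$ with $t\bmod 2s=0$. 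Once these are pinned down, the conclusion that the globally-minimal-ID leader with a nonempty $X$-set is matched, and hence some group strictly grows, is essentially immediate.
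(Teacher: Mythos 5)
Your overall route is the paper's: pick the closest pair of leaders sharing a pivotal box, use the strong selectivity of $S$ together with Proposition~\ref{prop:for:sel} (packaged as Corollary~\ref{cor:selector}) to exhibit a mutually-hearing pair, conclude $E'\neq\emptyset$, and then argue that \texttt{Modify} realizes at least one match. The first half of your argument is sound and matches the paper essentially verbatim.

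The gap is in the second half. Your key sub-claim, ``$w^\star\in X_{u_0}$ iff $u_0\in X_{w^\star}$,'' is false: $X_v$ records \emph{one-directional} hearing (the stations $v$ happened to hear during the block), and the SINR interference pattern together with the ID-dependent ssf schedule makes this relation asymmetric in general. Symmetry holds by definition only for pairs already in $E'$, which is exactly what you are trying to establish for $(w^\star,u_0)$, so the justification is circular. Moreover, ``$w^\star$ is the global minimum over all $X$-sets'' does not follow from $w^\star$ being the smallest ID among leaders with a \emph{nonempty} $X$-set: $X_{u_0}$ may contain a leader $z<w^\star$ with $X_z=\emptyset$ (heard by $u_0$ but hearing nobody itself), in which case $\min(X_{u_0})=z\neq w^\star$, the test $v=\min(X_u)$ in \texttt{Modify} fails for $v=u_0$, and your designated pair is not matched. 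The paper instead takes $u$ to be the smallest ID of positive degree in the \emph{mutual} graph $G'$ and $v$ its smallest-ID $G'$-neighbor, asserting the merge ``from the construction''; whichever selection one uses, what actually has to be verified is that the minima computed from the raw one-directional $X$-sets inside \texttt{Modify} land on a reciprocating pair, and your choice of $w^\star$ does not guarantee this. You need either to repair the selection argument so that it tolerates small-ID leaders with empty or non-reciprocating $X$-sets, or to argue that such leaders cannot occur in the relevant configuration.
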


\begin{proof}
\tj{Let $y$ be equal to the smallest distance between a pair of stations
$u,v\in W$ such that $u$ and $v$ belong to the same box of the pivotal grid.
Let $u,v$ be the elements of $W$ such that $\dist(u,v)=y$ and $\boxx(u)=\boxx(v)$.
Let $x=y/\sqrt{2}$.}
Let $u\in C$ for a box $C$ of the grid $G_x$ and let
$A$ be the set of elements of $W$ located in boxes of $G_x$ which are in
box-distance at most $d$ from $C$, where $d$ is the constant from
Proposition~\ref{prop:for:sel}.
The set $A$ contains at most $(2d+1)^2$ elements, since each box of $G_x$ contains
at most one element of $W$.
Therefore, there
exists a round
$t\le s$
in the ssf $S$ such that $v$ is transmitting
a message at round $t$ and no other element of $A$ is transmitting at that round.
Proposition~\ref{prop:for:sel}
implies that $u$ can hear $v$ in such a round. Similarly, $v$ can
hear $u$ during an execution of $S$. Therefore, there exists at least
one pair $(u,v)$ such that $u\in X_v$ and $v\in X_u$ at round
$2s$
of the block, which is equivalent to the fact that
%
$E'=\{(u,v)\,|\,u\in X_v \mbox{ and } v\in X_u\}$, the set
of edges of a graph $G'(V,E')$, is nonempty.
Now, let $u$ be the smallest ID of a
node
whose degree in $G'$ is larger than zero.
Let $v$ be its neighbor in $G'$ with the smallest ID.
It is clear from the construction that $v$ joins the group of $u$ in such case
(see algorithm Modify$(v,t^*)$, for $t^*$ being the last round of the block).
\end{proof}

In general, it might happen that a station which is not a leader
is not consistent.
Such a situation occurs, for example, when $u$ joins the group of $v$
and then $v$ joins the group of $w$. Simultaneously, while $v$ can hear $w$
when it joins the group of $w$, it is possible that $u$ cannot
hear $w$.
The following lemma states that eventually, when there
is at most one leader in each box at the beginning of a block of Thread~1,
then for each leader, all stations in its box correctly update the information
about their masters and groups
during the considered block and become consistent.

\begin{lemma}
\labell{l:masters}
Assume that there is 
\tj{at most} one leader in each box of the pivotal grid containing active stations,
at the beginning of a block of Thread~1. Then,
for each box $C$ containing a leader and each $v\in C$ that is informed at the beginning
of the block, $v$ is consistent at the end of the block.
\end{lemma}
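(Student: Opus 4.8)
\textbf{Setup and the key structural claim.} Fix a box $C$ of the pivotal grid containing a leader $\ell$ at the beginning of the block, and fix a station $v\in C$ that is informed at the start of the block; I want to show $M(v)=master(v)=\ell$ and $G(v)=group(v)=G(\ell)$ at the end of the block. By the integrity hypothesis (clause (c) on page~\pageref{item:b}, preserved by Proposition~\ref{prop:cons}), $master(v)\in\boxx(v)=C$, and since by assumption there is at most one leader in $C$, necessarily $master(v)=\ell$ already at the start of the block; consistency of the leader (Proposition~\ref{prop:cons}) gives $group(v)=G(\ell)$. So the real content is: every non-leader $v\in C$ updates $M(v)$ and $G(v)$ during Stage~1 of the block so that at the end $M(v)=\ell$ and $G(v)=G(\ell)$. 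The subtlety, as the paragraph before the lemma notes, is that a station may have a stale pointer $M(v)=u$ where $u$ has since been absorbed into a longer chain, so a \emph{single} hop of "hear your master, copy its group" need not suffice --- but note Thread~1 as written has $v$ adopt \emph{any} $u$ it can hear with $G(v)\subsetneq G(u)$, not only its recorded master.

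\textbf{The argument by distance to the leader.} The plan is to order the non-leader stations of $C$ by their graph-distance-within-$C$ to $\ell$ along the forest $F$ restricted to $C$ --- i.e. by the length of the chain $v\to M(v)\to M(M(v))\to\cdots\to\ell$ at the start of the block --- and induct on this length $k$. For $k=1$: $M(v)=\ell$, so when $\ell$ transmits during the ssf $S$ in Stage~1 (and $\ell$ is a leader, so it does transmit in round $t\mod s$ for $t$ with $\ell\in S_{t\bmod s}$), I need $v$ to hear $\ell$ in some round of $S$ in which no other transmitter disrupts it. Since all stations of $C$ lie in one box of the pivotal grid they are mutually in range, and all \emph{other} transmitters are leaders of other boxes; using that $W$ (leaders) is $d$-diluted, Corollary~\ref{cor:selector} / Proposition~\ref{prop:for:sel} gives a round in $S$ where $\ell$'s message reaches $v$. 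Then $G(\ell)=G(v)\cup\cdots$ is a proper superset of $v$'s (stale) $G(v)$ --- here I use that $G(\ell)$ literally contains $G(v)$ by integrity clause (b) and strictly, since $\ell\neq v$ and $\ell\in G(\ell)\setminus G(v)$ --- so line 11 of Thread1 fires and $v$ sets $M(v)\gets\ell$, $G(v)\gets G(\ell)$. For the inductive step $k>1$: let $w=M(v)$, which is at chain-distance $k-1$; by the induction hypothesis $w$ will, during this same block, come to hold $M(w)=\ell$, $G(w)=G(\ell)$ --- but I must be careful about \emph{when} within Stage~1 this happens relative to when $v$ hears $w$. The clean fix is to not route $v$ through $w$ at all: argue directly that $v$ hears $\ell$ itself in some round of $S$ (same disruption-free-round argument, using $v,\ell\in C$ hence in range, and $d$-dilution of the leader set for interference control), and that at that moment $G(\ell)\supsetneq G(v)$ regardless of intermediate updates, since $G(v)$ can only have grown to a subset of $G(\ell)$ meanwhile (each update $v$ makes replaces $G(v)$ by some $G(u)$ with $u$'s master $=\ell$, hence $G(u)\subseteq G(\ell)$, and it stays proper until $G(v)=G(\ell)$). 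So line 11 fires with $u=\ell$ and $v$ becomes consistent.

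\textbf{Wrapping up and the main obstacle.} Finally, a leader $v=\ell$ of $C$ is consistent at the end of the block by Proposition~\ref{prop:cons} directly (the Modify step only updates leaders consistently). Combining, every informed $v\in C$ is consistent at the end of the block, which is the claim. The main obstacle I anticipate is the interference/timing bookkeeping in the inductive step: one must be sure that "$v$ hears $\ell$ undisrupted in some round of the length-$s$ ssf execution" holds \emph{simultaneously} with the analogous requirement for every other station of $C$ and every other box's internal traffic --- this is exactly what the $d$-dilution of the leader set plus the strong-selectivity of $S$ buys (at most $(2d+1)^2$ leaders within box-distance $d$, and $S$ is an $(N,(2d+1)^2)$-ssf), but the statement must be assembled carefully so that no two of these "good rounds" are conflated and no non-leader's transmission in Stage~1 (there is none --- only leaders transmit in Stage~1) spoils things. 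A secondary subtlety is confirming that $G(v)$ never overshoots $G(\ell)$: this follows because every group $v$ could adopt belongs to a station whose master is in $\boxx(v)=C$, and the only leader in $C$ is $\ell$, so every adopted group is $\subseteq G(\ell)$.
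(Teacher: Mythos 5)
Your proposal is correct and, after you discard the induction on chain length in favor of the ``clean fix,'' it coincides with the paper's argument: by integrity and consistency of leaders the unique leader $\ell$ of $C$ \emph{is} $master(v)$, only leaders transmit in Stage~1 so there is at most one transmitter per box and $\ell$'s transmission is heard by all of $C$ (the paper invokes its blanket simplifying assumption via Corollary~\ref{cor:dilsuc} rather than the ssf machinery, but this is the same dilution fact), and then the update rule in Thread~1 makes $v$ consistent in one hop. Your explicit check that $G(v)\subsetneq G(\ell)$ via integrity clause (b) is a detail the paper glosses over, but it does not change the approach.
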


\begin{proof}
Let $v\in C$ be informed and let $u\in C$ be
the only leader in $C$ at the beginning of a block.
Integrity of the network and consistency of leaders
\tj{(Proposition~\ref{prop:cons})}
guarantee that $u=master(v)$.
The station $v$ can hear $u$ during the block,
which follows from the fact that each leader broadcasts successfully during the
block (due to our simplifying assumption concerning situation that at most one
station in each box of the pivotal grid is transmitting). 
Thus, since $v$ receives a message from $u=master(v)$, it updates its
local variables in line 10 of 
pseudo-code of Thread~1 and becomes consistent.
\end{proof}
We say that a block $j$ of Thread~2 is {\em partially stable} if the following
conditions are satisfied:
\begin{itemize}
\item
each box of the pivotal grid contains at most one leader;
\item
at least one informed station is not consistent;
\end{itemize}
at the beginning of the block $j$.
Formally, we define progress of algorithm {\AlgN} at 
the end of block $j$
as $\pi(j)$, equal
to the sum of the following four components:
\begin{itemize}
\item[(a)]
the number of informed stations;
\item[(b)]
$n$ minus the number of groups;
\item[(c)]
the number of tuples $(v,d_1,d_2)$
such that $v$ is an informed station, $d_1,d_2\in\DIR$,
$v$ belongs to $C(i,l)$ for some $i,l\in\INT$,
and there is an informed station in the box $C'=C(i+d_1,l+d_2)$,
where $C,C'$ are boxes of the pivotal grid;
\item[(d)]
\tj{the number of partially stable blocks
of Thread~1 up to round $t$.}
\end{itemize}
It is clear that the \tj{expressions} described in the above items (a)--(c) have
always values in $O(n)$. We show that (d) is also in $O(n)$, which directly implies
that $\pi(j)=O(n)$ for every $j$.

\begin{proposition}
For each network with $n$ stations, the number of \tj{partially stable} blocks of
Thread 1
\comment{ 
satisfying the following conditions
\begin{enumerate}
\item[(i)]
each box contains at most one leader at the beginning of the block;
\item[(ii)]
at least one informed station is not consistent at the beginning of the block;
\end{enumerate}
} 
is smaller than $n$.
\end{proposition}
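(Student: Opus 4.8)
The plan is to show that every partially stable block, as defined, forces a strict increase in the number of consistent stations among the informed ones, and that this quantity never decreases as the algorithm proceeds; since it is bounded above by $n$, there can be at most $n$ such blocks.

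First I would recall the relevant structural facts established earlier. A block of Thread~1 is partially stable if, at its beginning, each box of the pivotal grid contains at most one leader, yet some informed station is not consistent. Under the first condition, Lemma~\ref{l:masters} applies: for every box $C$ containing a leader, every informed $v\in C$ becomes consistent by the end of that block. Combined with Proposition~\ref{prop:cons}, which guarantees that consistency of leaders and integrity of the network are preserved at every round, I would define the potential $\chi(j)$ to be the number of informed stations that are consistent at the beginning of block $j$ of Thread~1. Integrity part~(a) ensures that every informed station lies in a box that has exactly one leader (once each box has at most one leader and all informed stations belong to boxes of leaders — note informed stations only ever appear in boxes reachable via successful transmissions from leaders), so Lemma~\ref{l:masters} covers all informed stations in a partially stable block.

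The key steps, in order, are: (1) argue that $\chi$ is non-decreasing in $j$ — a consistent informed station can only lose consistency if its group changes without it learning the change, but once every box has a single leader this cannot happen, and before that point we are not in a partially stable block, so the relevant blocks for the counting argument are exactly those in the regime where $\chi$ only grows; more carefully, I would observe that a station is counted in progress component (d) only for partially stable blocks, and handle the earlier non-partially-stable blocks by noting they do not contribute to the bound we need. (2) Show that in a partially stable block, $\chi$ strictly increases: by definition there is at least one informed inconsistent station $v$; $v$ lies in some box $C$; by the single-leader hypothesis $C$ has a (unique) leader $u$; Lemma~\ref{l:masters} makes $v$ consistent by block's end, so $\chi(j+1)\ge\chi(j)+1$. (3) Conclude: since $0\le\chi(j)\le n$ for all $j$ and $\chi$ strictly increases across each partially stable block while never decreasing across any block (in the relevant regime), the number of partially stable blocks is strictly less than $n$.

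The main obstacle I anticipate is step~(1): making precise the claim that consistency, once attained, is not lost in the blocks that matter. An informed non-leader station $v$ with $M(v)=master(v)$ and $G(v)=group(v)$ could in principle become stale if its master's group is merged into a larger one and $v$ fails to hear the relevant message in the next block. I would need to argue that this can only occur during blocks where some box still contains two or more leaders (so a merge within a box is still happening), and that such blocks are, by definition, not partially stable; hence along any maximal run of partially stable blocks $\chi$ is monotone, and since each partially stable block is either in such a run or isolated, a clean global counting argument still gives the bound $n$. An alternative, possibly cleaner route is to bound (d) directly by charging each partially stable block to the station $v$ that becomes consistent and never reverts afterward in any partially stable block — but establishing "never reverts in a later partially stable block" again reduces to the same monotonicity observation, so that is where the real work lies.
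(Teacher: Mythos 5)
Your approach has a genuine gap, and it is exactly where you suspected it would be. The potential $\chi(j)$, the number of consistent informed stations, is not monotone, and the total amount by which it can \emph{decrease} over the whole execution is not bounded by $n$. When a single new station $w$ becomes informed in a box $C$, it enters as the leader of the singleton group $\{w\}$; once the existing leader of $C$ merges with it, every non-leader station $v$ in $C$ has $G(v)\neq group(v)$ and loses consistency. So one newly informed station can destroy the consistency of up to $n-1$ stations, and over the execution $\chi$ can drop by a total of $\Theta(n^2)$. Your telescoping argument (strict increase of $\chi$ in each partially stable block, bounded total decrease elsewhere) therefore only yields a quadratic bound on the number of partially stable blocks, not the claimed bound of $n$. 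Your two proposed repairs do not close this: monotonicity of $\chi$ along maximal runs of partially stable blocks says nothing about the drops between runs, and charging a block to ``the station that becomes consistent'' fails because the same station can be charged many times (it can lose and regain consistency once per newly informed neighbor in its box).

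The paper's proof uses a different, genuinely monotone quantity: the number of \emph{informed} stations. It takes two \emph{consecutive} partially stable blocks $j_1<j_2$; by Lemma~\ref{l:masters} every station informed at the start of $j_1$ is consistent at the end of $j_1$; and the only event that can make a consistent station in a single-leader box inconsistent again is that a new station in its box becomes informed. Since block $j_2$ is partially stable, some informed station is inconsistent at its start, so at least one new station must have become informed strictly between the starts of $j_1$ and $j_2$. Hence the number of informed stations, which never decreases and is at most $n$, strictly increases between consecutive partially stable blocks, giving the bound directly. The missing idea in your write-up is precisely this re-targeting of the charge: each partially stable block after the first must be paid for by a newly informed station, not by a newly consistent one.
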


\begin{proof}
Consider two consecutive blocks $j_1<j_2$ of
Thread 1
satisfying (i) and (ii).
Lemma~\ref{l:masters} implies that all stations informed at the beginning of
block $j_1$ are consistent at the end of this block. \tj{Note that} an informed station
located in a box $C$ of the pivotal grid with one leader may
loose its
consistency only in the case
when a new station from box $C$ becomes informed. Since there is an informed station
that is not consistent at the beginning of block $j_2$ (c.f., (ii)), the number
of informed stations at the beginning of block $j_2$ is larger than the number
of informed stations at the beginning of block $j_1$. Therefore the number
of blocks of
Thread~1
satisfying (i) and (ii) is smaller than $n$.
\end{proof}

Now, we show that
the amortized increase of cost $\pi$ during each {\em block of Thread~1} ---
defined as the time period including block of Thread~1 and rounds of Thread~2 interleaved with
the block of Thread~1 --- is at least one.


In the following, we analyze progress of computation during blocks of Thread~1, however
we take into account also rounds of Thread~2 occurring during the time span of the analyzed
block of Thread~1 (recall that the executions of the two threads are interleaved).

\begin{lemma}
\labell{l:progress}
Assume that some stations are not yet informed at the beginning of some
block $j$ of Thread~1. Then, there exists a block $k\geq j$ 
such that the total increase
of progress function in blocks $j,j+1,\ldots,k$ is at least $k-j+1$.
\end{lemma}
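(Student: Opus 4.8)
The plan is to exhibit a block $k\geq j$ such that progress grows by at least $k-j+1$ over the blocks $j,j+1,\ldots,k$, by amortizing one unit of progress to each of these blocks. I would start from block $j$ and scan forward, distinguishing two regimes according to whether some box of the pivotal grid contains two or more leaders at the beginning of the current block. While in the first regime, Lemma~\ref{l:merge} guarantees that at least one merge happens during the block, so component (b) of $\pi$ (namely $n$ minus the number of groups) strictly increases; this charges one unit to that block. While in the second regime --- at most one leader per box --- either all informed stations in every box with a leader are already consistent at the beginning of the block (by Lemma~\ref{l:masters} applied to the previous block, or by direct inspection), or the block is \emph{partially stable} and component (d) increases by one, again charging one unit. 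The remaining case to handle is when we are in the second regime and no informed station is inconsistent: I would argue that then the round-robin transmissions of Thread~2, executed during the span of this block, successfully deliver the broadcast message across every box boundary, so that (provided not all stations are yet informed) either a new station becomes informed --- increasing component (a) --- or a new tuple $(v,d_1,d_2)$ with an informed station in the target box appears, increasing component (c). Here I would invoke Corollary~\ref{cor:dilsuc} / Proposition~\ref{prop:avoid} with $y=1$, $a=1$, $d=1$ together with the simplifying assumption about single transmitters per box, to see that once each group coincides with the set of all stations in its box, Thread~2 makes the relevant transmissions successful within one round-robin cycle of length $O(n)$, which fits inside a block span (or within a bounded number of consecutive blocks, whose lengths I would absorb into the choice of $k$).

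The order of the argument: first fix $j$ and let $k$ be the first block at or after $j$ for which the cumulative increase over $j,\ldots,k$ reaches $k-j+1$; the task is to show such $k$ exists, equivalently that the increase cannot stay strictly below the block count forever. Second, establish the per-block charging in the two-leaders-in-a-box regime using Lemma~\ref{l:merge}: since there are at most $n$ groups overall, this regime can persist for only finitely many consecutive blocks before component (b) is exhausted, and each such block contributes one unit. Third, in the at-most-one-leader-per-box regime, use Lemma~\ref{l:masters} to conclude that after at most one more block all informed stations become consistent, so the regime splits into (i) partially stable blocks, each contributing one unit via component (d) (and there are fewer than $n$ of these overall, by the preceding Proposition), and (ii) fully consistent configurations, which I treat next. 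Fourth, in a fully consistent configuration with some station still uninformed: by connectivity of the communication graph there is a box $C$ with an informed leader adjacent to a box $C'$ containing an uninformed station (or an informed station whose presence is not yet recorded in a tuple of component (c)); Thread~2's round-robin over $C$ guarantees the dominating/connecting station of $C$ transmits successfully within $|G| = O(n)$ rounds, informing a station of $C'$ and increasing component (a) or (c). Bounding the number of blocks needed for one round-robin cycle by a constant number of blocks (since each block of Thread~1 has length $2s = O(\log N)$ and Thread~2 runs interleaved; one may need to wait $O(n/\log N)$ blocks, but this is still finite and can be folded into $k$), I get at least one unit of progress within a bounded stretch of blocks following $j$.

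The main obstacle I anticipate is the bookkeeping in the fully-consistent case: I must be careful that (a) a newly informed station does not immediately destroy consistency in its own box in a way that prevents further progress --- but this is precisely what component (d) and the ``fewer than $n$ partially stable blocks'' proposition are designed to absorb --- and (b) that the round-robin cycle in Thread~2 genuinely completes and delivers a message across the target box boundary before the amortization window closes. For (b) I would note that the window length $k-j+1$ is not fixed in advance: I am free to take $k$ as large as needed, so as long as \emph{some} progress component eventually increases while we remain in this configuration (which it must, since the configuration is stable and Thread~2 deterministically cycles through all senders of the relevant box), the cumulative count catches up. The only genuine worry is an infinite stretch of blocks with \emph{no} progress at all; this is ruled out because in every configuration type one of the four components is forced to increase within a bounded number of blocks --- merges in the multi-leader regime, component (d) in the partially stable regime, and components (a) or (c) in the consistent regime --- and all four components are bounded by $O(n)$, so the process cannot stall. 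Assembling these pieces yields the desired block $k$.
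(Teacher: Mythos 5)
Your handling of the multi-leader regime (via Lemma~\ref{l:merge}), of the inconsistent regime (via component (d) and partial stability), and the overall case split match the paper's proof. The genuine gap is in the fully-consistent, at-most-one-leader-per-box case, which is exactly where the lemma has real content. There you concede that one may have to wait many blocks (you say $O(n/\log N)$, or more generally up to the length of a round-robin cycle) before \emph{any} component of $\pi$ increases, and you argue this is acceptable because ``$k$ can be taken as large as needed'' and ``the cumulative count catches up.'' It does not: the lemma demands a total increase of at least $k-j+1$ over blocks $j,\ldots,k$, so if you wait $m$ blocks and then gain only one unit of progress, you are short by $m$. Your argument establishes a progress rate of $1/m$ per block in this case, not the amortized rate of $1$ per block that the statement requires, and nothing in your write-up closes that deficit.

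The missing idea is the quantitative coupling between the waiting time and the size of the eventual payoff. In the paper's Case 1, if the box $C'$ of the newly informed station $v$ had no informed station at the start of block $j$, and $u\in C$ is the station whose round-robin turn finally informs $v$ in block $k$, then the fact that $u$ did not get its turn during blocks $j,\ldots,k-1$ forces $|G| \ge k-j+1$ for the group $G$ of box $C$ (one distinct member of $C$ transmits per block span of Thread~2, and none of them was $u$). When $v$ becomes informed in block $k$, \emph{every} informed station $x\in C$ simultaneously acquires a new tuple $(x,d_1,d_2)$ in component (c), because the target box $C'$ now contains an informed station for the first time. Hence component (c) jumps by at least $k-j+1$ in a single block, which is precisely what pays for the wait. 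Without this observation (or an equivalent one), the amortization fails and the lemma does not follow from your argument.
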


\begin{proof}
If there are two informed stations $u,v\in C$, for a box $C$ of the pivotal grid,
such that $L(u)=L(v)=true$
(i.e., $u,v$ are leaders)
at the beginning of block $j$, 
progress increase is guaranteed in block $j$
by Lemma~\ref{l:merge},
since at least one merge of two groups takes place.

If there is at most one leader in each box at the beginning of block $j$,
then we consider two cases:
\begin{description}
\item[Case 1.]
All informed stations are consistent at the beginning of block $j$.\\
In this case all transmissions in both Threads are successful, as long as the
number of informed stations does not change.
Therefore, each informed station can transmit successfully. And, since not all stations
in the network
are informed and the network is connected, a new station becomes informed eventually.
%
Let $k\geq j$ be the
smallest number of a block 
in which a new station $v$ becomes informed.

If this station $v$ belongs
to a box which has an informed leader at the beginning of block $j$, then
$v$ becomes informed in block $j$ and the progress increase is $1$ in round $j$, which
certifies the claimed result for $k=j$. 

If the box $C'$ containing station $v$ does not have an informed leader at the beginning
of block $j$, then $C'$ does not have any informed station at the beginning of block $j$
either (due to integrity of the network).
Let $u\in C$ be a station that informed $v$ and $k\geq j$ be
the number of the block in which $v$ becomes informed. Since each transmission
of Thread~2 is successful in this case, and Thread 2 applies a round-robin protocol
on stations from $C$,
$u$ does not transmit in blocks $j,j+1,\ldots,k-1$ implies that
the number of stations in box $C$ is at least $(k-j)+1$ (since at 
least one station from box $C$ transmits during the time span of one block in Thread~2).
\tj{Moreover, there are $k-j$ various stations in $C$ such that each of them
transmits successfully in blocks $j,\ldots,k-1$. Let $C=C(i,j)$, $C'=C(i+d_1,j+d_2)$.
}
Therefore, the
number of tuples
$(v,d_1,d_2)$
such that $v$ is an informed station and belongs to $C(i,j)$ for some $i,j\in\INT$,
$d_1,d_2\in\DIR$
and there is an informed station in the box $C'(i+d_1,j+d_2)$,
increases by at least
$k-j+1$
throughout blocks $j,\ldots,k$. Therefore, the progress $\pi$ increases by at least $k-j+1$.

\item[Case 2.]
There is a station which is {\em not} consistent at the beginning of block $j$.\\
Then, the part (d) of the potential function $\pi$ increases until the end
of block $j$, according to Lemma~\ref{l:masters}.

%
%
\end{description}
\end{proof}

Finally, we obtain the following theorem as a direct consequence of Lemma~\ref{l:progress}.
\begin{theorem}
Algorithm {\AlgN} performs broadcasting in each $n$-node network in time $O(n\log N)$.
\end{theorem}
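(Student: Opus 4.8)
The plan is to convert the structural facts already established---that the progress function $\pi$ is bounded by $O(n)$, and that progress accrues at an amortized rate of at least one unit per block (Lemma~\ref{l:progress})---into a bound of $O(n)$ on the number of blocks of Thread~1, and then multiply by the per-block cost. First I would observe that $\pi$ is nondecreasing from block to block: components (a) and (c) depend monotonically on the (only growing) set of informed stations; component (d) is by definition a count of blocks up to the current round; and component (b) only grows because the group partition is only ever coarsened---$Modify$ performs unions and nothing ever splits a group (with an uninformed station regarded as a singleton). Next I would bound $\pi$ from above: (a) $\le n$, (b) $\le n$, (c) $\le |\DIR|\cdot n \le 20n$ since each informed station is counted at most once per direction in $\DIR$, and (d) $< n$ by the preceding proposition on the number of partially stable blocks. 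Hence $\pi(j)=O(n)$ for every block index $j$, while $\pi\ge 0$ always.

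The core of the argument is then a tiling of the block timeline by the windows supplied by Lemma~\ref{l:progress}. Starting at the first block $j_1=1$: as long as some station is still uninformed at the beginning of $j_1$, the lemma produces an index $k_1\ge j_1$ such that $\pi$ increases by at least $k_1-j_1+1$ over blocks $j_1,\dots,k_1$; set $j_2=k_1+1$ and repeat. This yields consecutive, disjoint windows, each contributing at least as many units of increase of $\pi$ as it contains blocks, for as long as uninformed stations remain. Since $\pi$ is nonnegative, nondecreasing, and bounded by $O(n)$, the total number of blocks across all these windows is $O(n)$; thus after $O(n)$ blocks of Thread~1 every station of the connected network is informed. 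Converting to rounds, a block of Thread~1 consists of $2s$ rounds with $s=|S|=O(\log N)$ (the $(N,k)$-ssf with $k=(2d+1)^2$ a constant), and Thread~2 rounds are interleaved one-for-one, so a block spans $O(\log N)$ real rounds; multiplying gives $O(n)\cdot O(\log N)=O(n\log N)$. Finally, since each station runs the protocol for the fixed a~priori number of rounds $T(n)=O(n\log N)$ dictated by this analysis, every station is aware of termination by round $T(n)$.

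The step I expect to be the main obstacle is the clean, repeated application of Lemma~\ref{l:progress} to cover the whole timeline: one must verify that the windows it produces are genuinely consecutive and disjoint and that the ``one unit per block'' accounting does not double-count increases of $\pi$ at window boundaries. This is precisely where the monotonicity of $\pi$ and the fact that the lemma measures the increase \emph{within} the window $j,\dots,k$ are both essential; any subtlety here (e.g.\ a window of length zero, or progress attributed to a boundary block twice) would have to be ruled out explicitly.
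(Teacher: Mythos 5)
Your proposal is correct and follows essentially the same route as the paper, which states the theorem as a direct consequence of Lemma~\ref{l:progress} together with the $O(n)$ bound on the progress function $\pi$ and the $O(\log N)$ length of a block; you have simply spelled out the tiling/amortization argument that the paper leaves implicit. The details you supply (monotonicity of $\pi$, the disjoint-window accounting, and the per-block round count) are exactly the ones needed to make that one-line deduction rigorous.
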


\section{Degree Dependent Algorithm for Anonymous Networks}\labell{s:algdeg}

In this section we present a broadcasting algorithm which achieves complexity $O(D\Delta\log^2N)$
in anonymous networks, i.e., when neighborhood is not known.

The core of the algorithm is a leader election procedure which, given a set of stations $V$,
chooses exactly one station (the leader) in each box $C$ of the pivotal grid which contains
at least one element of $V$. This procedure works in $O(\log n\cdot \log N)$ rounds \tj{and it is
executed several times}.
The set of stations attending a particular leader election execution consists of all stations
which received the broadcast message and have not bo chosen leaders of their boxes in previous
executions of the leader election procedure.
Moreover, at the end of each execution of the leader election procedure, each leader chosen
in that execution transmits a message successfully (see Corollary~\ref{cor:dilsuc}). In this way, each station receives
the broadcast message after $O(D\Delta\log^2 N)$ rounds.

\subsection{Leader Election}
In the following, we describe the leader election algorithm. We are given a set of stations $V$
of size at most $n$. The set $V$ is not known to stations, each station knows merely whether it
belongs to $V$ or it does not belong to $V$. In the algorithm, we use  $(N,d)$-ssf $S$ of size
$s=O(\log N)$, where $d$ is the constant from Proposition~\ref{prop:for:sel}. As before, $X_v$ for a given
execution of $S$ is defined as the set of stations which belong to $\boxx(v)$ and
$v$ can hear them during that execution. The key observation for our construction is in fact
a consequence of Corollary~\ref{cor:selector}.

\begin{proposition}\labell{prop:closer}
For each $\alpha>2$, there exists a constant $k$, which depends
only on the parameters $\eps,\beta$ and $\alpha$ of the model,
satisfying the following property.
Let $W$ be a $3$-diluted (wrt the pivotal grid) set of stations and let
$C$ be a box of the pivotal grid. If
$\min_{u,v\in C\cap W}
%
%
=x\leq 1/n$ and $\dist(u,v)=x$
for some $u,v\in W$ such that $\boxx(u)=\boxx(v)=C$,
then $v$ can hear the message from $u$ during an execution of a $(N,k)$-ssf on $W$.
\comment{
For each $\alpha>2$, there exists a constant $k$, which depends
only on the parameters $\eps,\beta$ and $\alpha$ of the model,
satisfying the following property.
Let $W$ be a $3$-diluted (wrt the pivotal grid) set of stations and let
$C$ be a box of the pivotal grid. If
$\min_{u,v\in W, \boxx(u)=\boxx(v)=C}\{\dist(u,v)\}=x\geq 1/n$ and $\dist(u,v)=x$
for some $u,v\in W$ such that $\boxx(u)=\boxx(v)=C$,
then $v$ can hear the message from $u$ during an execution of a $(N,k)$-ssf on $W$.
}
\end{proposition}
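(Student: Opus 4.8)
The plan is to reduce this statement to Corollary~\ref{cor:selector} (equivalently Proposition~\ref{prop:for:sel}) by observing that the hypotheses here are a special case of the hypotheses there, up to renaming of parameters. First I would recall that Corollary~\ref{cor:selector} asserts: for each $\alpha>2$ there is a constant $d$ such that if $W$ is a set of stations which is $d'$-diluted for some $d'\geq 2$, and $u,v\in W$ lie in the same box of the pivotal grid at the minimal in-box distance $x=\min\{\dist(u,v): \boxx(u)=\boxx(v)\}$, then $v$ hears $u$ during an execution of a $(N,k)$-ssf on $W$, where $k=(2d+1)^2$. The only gap between that statement and what is needed here is that Corollary~\ref{cor:selector} is phrased with the global in-box minimum over all boxes, whereas Proposition~\ref{prop:closer} singles out one box $C$ and asks only that the minimum \emph{restricted to $C$} be realized by $u,v$ and be at most $1/n$. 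So the core of the argument is: localize the interference estimate from Proposition~\ref{prop:for:sel} to the box $C$, noting that the $3$-dilution hypothesis (which is $\geq 2$) gives exactly the diluted structure needed there, and the bound $x\le 1/n$ replaces the role of $x\leq\gamma$ in placing the grid $G_x$ fine enough that each of its boxes holds at most one station of $W$.

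Concretely, the steps I would carry out are as follows. Set $x$ as in the statement and consider the auxiliary grid $G_{x/\sqrt2}$; since $x\le 1/n\le \gamma$ and $W$ is $3$-diluted with respect to the pivotal grid, each box of $G_{x/\sqrt2}$ that meets $C$ contains at most one element of $W$ — this is the combinatorial input that Proposition~\ref{prop:for:sel} requires. Let $u,v\in C\cap W$ be the pair realizing $\dist(u,v)=x$. By the $(N,k)$-ssf property with $k=(2d+1)^2$, where $d$ is the constant of Proposition~\ref{prop:for:sel}, there is a round of $S$ in which $v$ transmits and no other station of $W$ within box-distance $d$ of $v$'s box (in $G_{x/\sqrt2}$) transmits: indeed the number of such nearby stations is at most $(2d+1)^2=k$, so the selectivity guarantee applies. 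In that round, Proposition~\ref{prop:for:sel} shows the signal of $v$ at $u$ exceeds the threshold while the total noise-plus-interference from the remaining stations of $W$ (all of which lie at distance $\ge (x/\sqrt2)$ times a growing multiple, summed against $\sum_i i^{1-\alpha}<\infty$ since $\alpha>2$) stays below it. Hence $u$ hears $v$; set $k$ in the proposition equal to $(2d+1)^2$.

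The main obstacle — really the only subtlety — is making sure the interference bound of Proposition~\ref{prop:for:sel} genuinely applies when we restrict attention to one box $C$: stations of $W$ that lie \emph{outside} $C$ but close to the common boundary could in principle be nearer to $u$ than $x$. Here the $3$-dilution with respect to the \emph{pivotal} grid is what saves us, because it forces any station of $W$ in a neighboring pivotal box to be at Euclidean distance $\Omega(\gamma)$ from $u$, hence at distance $\gg x$ since $x\le 1/n$; so such stations contribute only to the convergent tail of the interference sum and do not affect the constant $d$. Everything else is the same convergent-series estimate already done in Proposition~\ref{prop:for:sel}, so no new calculation is needed beyond checking these distance inequalities and choosing $k=(2d+1)^2$ with $d$ the constant supplied there.
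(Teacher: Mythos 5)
Your overall strategy --- reduce to Proposition~\ref{prop:for:sel}/Corollary~\ref{cor:selector} via the fine grid $G_{x/\sqrt 2}$, with the $3$-dilution handling stations outside $C$ --- is the right starting point, but the way you dispose of the stations outside $C$ has a genuine gap. The ring estimate ``at most $8(i+1)$ transmitters at box-distance $i$'' in Proposition~\ref{prop:for:sel} relies on there being at most one station of $W$ per box of the fine grid. Inside $C$ this holds because $x$ is the minimum distance over pairs \emph{in $C$}; outside $C$ it need not hold at all, since the hypothesis places no lower bound on distances between stations lying in the same remote pivotal box. So the stations of $W\setminus C$ cannot be ``folded into the convergent tail'': in the worst case all $n-O(1)$ of them sit at Euclidean distance roughly $2\gamma$ from $u$ and contribute total interference $\Theta(n)$, which no choice of the ssf parameter $k$ can prevent (the ssf only silences the $\le k$ stations near $v$ in the fine grid, not the far ones). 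Note that your argument never really uses $x\le 1/n$ --- you invoke it only to get $2\gamma\gg x$, which already follows from $x\le\gamma$ --- and the statement is false without that hypothesis (take $x=\gamma/2$ and $n$ stations packed at box-distance $3$ from $C$), so something essential must be missing. The missing step is the comparison that makes the outside interference harmless: the received signal is $x^{-\alpha}\ge n^{\alpha}$ while the outside stations contribute at most $n\cdot(2\gamma)^{-\alpha}=O(n)$ in total, and $n^{\alpha}\gg n$; adding this one inequality closes your proof.

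For contrast, the paper avoids estimating the outside interference directly: it conceptually relocates every station of $W\setminus C$ into a box adjacent to $C$ while keeping all same-box pairwise distances at least $x$ (feasible precisely because $x\le 1/n$ and a pivotal box has side greater than $1/2$, so $n$ points fit with spacing $\ge 1/n$). By $3$-dilution this relocation only decreases distances to $u$, hence only increases interference, and the relocated configuration satisfies the global hypothesis of Corollary~\ref{cor:selector}; hearing in the worse configuration implies hearing in the original one. Either route works once the role of $x\le 1/n$ is made explicit; your direct route just needs the signal-versus-$O(n)$ comparison spelled out in place of the ``convergent tail'' claim.
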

\begin{proof}
Let $u,v$ and $x$ be as specified in the proposition and let $C=\boxx(u)=\boxx(v)$.
Let $S$ be a $(N,k)$-ssf.
If all stations
from $W$ are located in $C$, then the claim follows directly from Corollary~\ref{cor:selector}.
So, let $W'$ be the set of all elements of $W$ which are {\em not} located in $C$.
Let us (conceptually) ``move'' all stations from $W'$ to boxes adjacent to $C$,
preserving the invariant that $\min_{u,v\in W, \boxx(u)=\boxx(v)=C}\{\dist(u,v)\}=x$. Note that such
a movement is possible, since there are at most $n$ stations in $W'$ and the side of a box of the
pivotal grid is larger
than $1/2$. Since $W$ is $3$-diluted, the distance from $w\in C$ to any station $w'\in W'$ before
movement of $w'$ is larger than the distance from $w$ to $w'$ after movement.
Let $W''$ define $W$ with new locations of stations (after movements).
Therefore, if $u$ can
hear $v$ in the execution of $S$ on $W''$ (i.e., after movements of stations), it can hear $v$
in the execution of $S$ on $W$ (i.e., with original placements of stations).
However, the fact that $u$ can hear $v$ on $W''$ follows directly from the fact that
$\min_{u,v\in W''}\{\dist(u,v)\}=x$ by Corollary~\ref{cor:selector}.
\end{proof}

\comment{
The leader selection algorithm consists of two stages. The first stage gradually eliminates
the set of candidates for the leader in consecutive executions of a selector $S$ in the first
for loop. Therefore, we call this stage {\em Elimination}.
Let {\em block} $l$ of Elimination stage denote the executions of $S$ for $i=l$.
Each ``eliminated'' station $v$ has
assigned the value $ph(v)$ which is equal to the number of the block in which it is eliminated.
Let $V(l)=\{v\,|\, ph(v)>l\}$ and $V_C(l)=\{v\,|\, ph(v)>l\mbox{ and } \boxx(v)=C\}$ for $l\in\NAT$
and $C$ which is a box of the pivotal grid. The key property of sets $V_C(l)$
is that $|V_C(l+1)|\leq |V_C(l)|/2$ and the granularity of $V_C(l_C^{\star})$ is smaller than $n$
for each box $C$ and $l\in\NAT$, where $l_C^{\star}$ is the largest $l\in\NAT$ such that $V_C(l)$
is not empty (see Lemma~\ref{l:lead:empty} below).
Thus, in particular, $V_C(l)=\emptyset$ for each $l\geq \log n$.
Therefore, we can try to choose the leader of
each box $C$ applying (simultaneously in each box) the granularity dependent leader election algorithm
on $V_C(\log n)$, $V_C(\log n-1)$, $V_C(\log n-2)$ and so on, until the leader of $C$ is chosen. This
idea is implemented in the second part of the algorithm, called {\em Selection}. Now, we provide the
pseudo-code of the leader election algorithm and then its correctness and complexity are formally
analyzed.
}

The leader election algorithm consists of two stages. The first stage gradually eliminates
elements from the set of candidates for the leader in consecutive executions of a selector $S$ in the first
for loop. Therefore, we call this stage {\em Elimination}.
Let {\em block} $l$ of Elimination stage denote the executions of $S$ for $i=l$.
Each station $v$ ``eliminated'' in block $l$ has
assigned the value $ph(v)=l$.
Let $V(l)=\{v\,|\, ph(v)>l\}$ and $V_C(l)=\{v\,|\, ph(v)>l\mbox{ and } \boxx(v)=C\}$ for $l\in\NAT$
and $C$ which is a box of the pivotal grid. The key property of sets $V_C(l)$
is that $|V_C(l+1)|\leq |V_C(l)|/2$ and the granularity of $V_C(l_C^{\star})$ is smaller than $n$
for each box $C$ and $l\in\NAT$, where $l_C^{\star}$ is the largest $l\in\NAT$ such that $V_C(l)$
is not empty. 
Therefore, we can choose the leader of
each box $C$ applying (simultaneously in each box) the granularity dependent leader election algorithm
on $V_C(l_C^{\star})$.
It is done by the second stage, which applies the granularity dependent leader election
on $V_C(\log n)$, $V_C(\log n-1)$, $V_C(\log n-2)$ and so on, until the leader of $C$ is chosen.
After it is done all stations in $C$ become silent.
This
idea is implemented in the second part of the algorithm, called {\em Selection}.
Now, we provide the
pseudo-code of the leader election algorithm and then its correctness and complexity are formally
analyzed.
\begin{algorithm}[H]
	\caption{LeaderElection($V,n$)}
	\label{alg:leader}
	\begin{algorithmic}[1]
    \State For each $v\in V$: $cand(v)\gets true$; 
    \For{$i=1,\ldots,\log n+1$}\Comment{Elimination}
        \For{$j,k\in[0,2]$}
            \State Execute $S$ twice on the set:
            \State $\{w\in V\,|\,cand(w)=true \mbox{ and }w\in C(j',k')$ \mbox{ such that } $(j'\mod 2,k'\mod 2)=(j,k)\}$;
            \State Each $w\in V$ determines and stores $X_w$ during the first execution of $S$ and
            \State $X_v$ for
                each $v\in X_w$ during the second execution of $S$,
            \For{each $v\in V$}
                \State $u\gets \min(X_v)$
                \If{$X_v=\emptyset$ or $v>\min(X_u\cup\{u\})$}
                    \State $cand(v)\gets false$; $ph(v)\gets i$
                \EndIf
            \EndFor
        \EndFor
    \EndFor

    \State For each $v\in V$: $state(v)\gets active$ \Comment{Selection}
    \For{$i=\log n,(\log n)-1,\ldots,2,1$}
        \State $V_i\gets$ GranLeaderElection($\{v\in V\,|\, ph(v)=i, state(v)=active\},1/n$)\Comment{$V_i$ -- leaders}
        \State Each element $v\in V_i$ sets $state(v)\gets leader$ and transmits successfully 
        \State \tj{using constant dilution (see Corollary~\ref{cor:dilsuc})}
        \State Simultaneously, for each $v\in V$ which can hear $u\in\boxx(v)$: $state(v)\gets passive$
    \EndFor
    \end{algorithmic}
\end{algorithm}

\begin{lemma}\labell{l:lead:empty}
Let $C$ be a box of the pivotal grid and $l\in\NAT$.
Then,
\begin{enumerate}
\item
$|V_C(l+1)|\leq |V_C(l)|/2$;
\item
If $V_C(l+1)$ is empty, then the smallest distance between elements of $V_C(l)$ is at least $1/n$.
\end{enumerate}
\end{lemma}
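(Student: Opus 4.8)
The plan is to prove the two items together, isolating first the structural fact about a single Elimination block on which both rest. Fix a box $C$ of the pivotal grid and an index $l\in\NAT$; one may assume $|V_C(l)|\ge 2$, since otherwise both items are immediate (a lone candidate in $C$ decodes nobody, so it is eliminated in block $l+1$, giving $V_C(l+1)=\emptyset$, while the distance condition in item~2 is vacuous). Let $x$ be the smallest distance between two stations of $V_C(l)$. In block $l+1$ exactly the stations of $V_C(l)$ act as the candidates of $C$ and run the strongly-selective family $S$ under the dilution used by Algorithm~\ref{alg:leader}, and each candidate $v$ records $X_v$, the set of candidates of $C$ it decodes. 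The structural claim I would establish is that the relation ``$u\in X_w$'' is \emph{symmetric} over $V_C(l)$: $u\in X_w$ if and only if $w\in X_u$. I would derive this from Proposition~\ref{prop:for:sel} and the interference estimate in its proof: a station decodes a signal only from a sender at distance $O(x)$ of it (a more distant signal is swamped by the interference of the simultaneously transmitting candidates near the receiver, by a packing bound), and for two same-box candidates at distance $O(x)$ the selectivity of $S$ produces, for each, a round in which its constant-size box-neighbourhood is silent, so the interference bound of Proposition~\ref{prop:for:sel} — whose constant slack absorbs the bounded change in residual far interference caused by shifting the reference point by $O(x)$ — makes each decode the other. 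Upgrading the ``closest-pair'' conclusions of Proposition~\ref{prop:for:sel}, Proposition~\ref{prop:closer} and Corollary~\ref{cor:selector} to this symmetry of the whole relation is the step I expect to require the most care.

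Granting symmetry, item~1 follows by a charging argument on the elimination rule of Algorithm~\ref{alg:leader}. A candidate $v$ survives block $l+1$ only if $X_v\neq\emptyset$ and $v\le\min(X_u\cup\{u\})$ with $u:=\min(X_v)$; since $v\notin X_v$ this forces $v<u$. By symmetry $v\in X_u$, hence $\min(X_u)\le v$, and combined with $v\le\min(X_u\cup\{u\})=\min(X_u)$ (using $u>v$) we get $v=\min(X_u)$. Define $\phi(v):=u=\min(X_v)$ for each survivor $v$. Then $\phi(v)\in V_C(l)$, $\phi(v)>v$, the map $\phi$ is injective (if $\phi(v)=\phi(v')=u$ then $v=\min(X_u)=v'$), and $\phi(v)$ is itself eliminated: its smallest decoded station is $v$, while $\min(X_v\cup\{v\})=v<\phi(v)$, so the rule removes $\phi(v)$ from the candidates. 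Hence the survivors inject into the eliminated candidates of $C$, so $|V_C(l+1)|\le|V_C(l)|-|V_C(l+1)|$, i.e.\ $|V_C(l+1)|\le|V_C(l)|/2$.

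For item~2 I would argue the contrapositive: if $x<1/n$ then some station of $V_C(l)$ survives block $l+1$. Since $x\le 1/n$, the minimum-distance pair of $V_C(l)$ satisfies the hypothesis of Proposition~\ref{prop:closer}, so its two stations decode each other; in particular $W':=\{v\in V_C(l):X_v\neq\emptyset\}$ is nonempty. Let $v_0:=\min(W')$ and $u:=\min(X_{v_0})$, so $u\neq v_0$. If $u<v_0$, then by symmetry $v_0\in X_u$, so $u\in W'$, contradicting minimality of $v_0$; hence $u>v_0$. By symmetry again, every station decoded by $u$ lies in $W'$ and so has ID at least $v_0$, whence $\min(X_u\cup\{u\})\ge v_0$; combined with $u>v_0$ this yields the survival condition $v_0\le\min(X_u\cup\{u\})$, so $v_0$ survives and $V_C(l+1)\neq\emptyset$, the desired contradiction.
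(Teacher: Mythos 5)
Your proof takes essentially the same route as the paper's: item~1 rests on the observation that the survivors of an Elimination block are exactly one endpoint of each mutually-hearing matched pair (your injection $\phi$ from survivors to eliminated candidates is a cleaner write-up of the paper's one-line matching claim and correctly delivers the factor $1/2$), and item~2 is the contrapositive via Proposition~\ref{prop:closer} applied to the closest pair, again exactly as in the paper. The one caveat concerns your proposed derivation of the symmetry of the relation $u\in X_w$: the intermediate claim that a station can only decode senders at distance $O(x)$ is not true in general (in a round where the candidates near a receiver happen to be silent it may well decode a sender much farther than the global minimum distance $x$), but the consequence you actually use --- that every survivor $v$ satisfies $v=\min(X_u)$ for $u=\min(X_v)$ with $v<u$, i.e.\ that survivors arise only from matched pairs --- is precisely the assertion the paper itself states without further justification, so your argument is at least as complete as the original.
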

\begin{proof}
Similarly as in Section~\ref{s:anonymous}, our algorithm implicitly builds matchings
in the graphs whose vertices are $V_C(l)$ and an edge connects such $u$ and $v$ that
$u$ can hear $v$ and $v$ can hear $u$ during an execution of $S$.
Note that the station $v\in V_C(l)$ belongs to $V_C(l+1)$ only if
the following conditions are satisfied:
\tj{
\begin{itemize}
\item
$v=\min(X_u)$;
\item
$u=\min(X_v)$;
$v<u$
\end{itemize}
for some $u\in V_C(l)$. That is, only elements of the matching belong to
$V_C(l+1)$ and exactly one element from each matched pair belongs to $V_C(l+1)$.}

Therefore, the inequality $|V_C(l+1)|\leq |V_C(l)|$ holds. This gives
item 1 of the lemma.

\tj{As for item 2, assume that $V_C(l)$ is not empty. Observe that $V_C(l+1)$ is
not empty if there exist $v,u\in V_C(l)$ such that $v$ can hear $u$
and $u$ can hear $v$. (Indeed, $v\in V_C(l+1)$ for the smallest $v\in V_C(l)$ such that $v$ can hear $u$
and $u$ can hear $v$ for some $u\in V_C(l)$.) However, such $v$ and $u$ exist if the smallest
distance between elements of $V_C(l)$ is at least $\frac1{n}$ by Proposition~\ref{prop:closer}.
}
\end{proof}

\begin{theorem}\labell{t:leader:general}
Algorithm LeaderElection chooses the leader in each box of the pivotal grid containing
at least one element of $V$ in $O(\log n\log N)=O(\log^2 N)$ rounds, provided $\alpha>2$.
\end{theorem}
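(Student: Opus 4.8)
The plan is to establish the time bound first, then the correctness. For the time bound, the Elimination stage runs the outer loop $\log n + 1$ times, and inside each iteration it runs the fixed $3 \times 3$ set of residue classes $(j,k)\in[0,2]^2$, each involving two executions of the $(N,d)$-ssf $S$ of length $s = O(\log N)$. Since both the number of residue classes and $d$ are constants (the latter by Proposition~\ref{prop:for:sel}), one iteration of the outer loop costs $O(\log N)$ rounds, so the whole Elimination stage costs $O(\log n \log N)$. The Selection stage runs GranLeaderElection $\log n$ times, each on a set of granularity at most $n$ (justified below), so by Proposition~\ref{prop:leader} each call costs $O(d_\alpha^2(n)\log n) = O(\log n)$ rounds for $\alpha > 2$ (where $d_\alpha$ is then constant), plus the successful transmission via constant dilution (Corollary~\ref{cor:dilsuc}), giving $O(\log^2 n)$ total. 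Summing, the algorithm runs in $O(\log n \log N) = O(\log^2 N)$ rounds since $n \le N$.

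For correctness, the key is to track the sets $V_C(l)$. By Lemma~\ref{l:lead:empty}(1), $|V_C(l+1)| \le |V_C(l)|/2$, so after $\log n + 1$ elimination blocks every set $V_C(l)$ with $l \ge \log n$ satisfies $|V_C(l)| \le 1$; in particular there is a well-defined largest index $l_C^\star \le \log n$ with $V_C(l_C^\star)$ nonempty, and $ph(v) = l_C^\star$ for every $v$ that is never eliminated, hence for at most... actually I must be careful: the surviving candidates of $C$ form exactly $V_C(l_C^\star)$, and I claim Lemma~\ref{l:lead:empty}(2) forces this set to have minimum pairwise distance at least $1/n$, i.e.\ granularity at most $n$. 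Indeed $V_C(l_C^\star + 1) = \emptyset$ by maximality of $l_C^\star$, so part~(2) applies directly. Therefore when the Selection stage reaches $i = l_C^\star$, the set $\{v \in V \mid ph(v) = l_C^\star,\ state(v) = active\}$ restricted to box $C$ equals $V_C(l_C^\star)$ (all these stations are still active, since a station in $C$ only becomes $passive$ or $leader$ after $C$ has elected its leader, which has not yet happened), and it has granularity $\le n$, so GranLeaderElection with parameter $1/n$ correctly elects one leader of $C$ by Proposition~\ref{prop:leader}. After that leader transmits successfully (Corollary~\ref{cor:dilsuc}), all remaining active stations of $C$ hear it and go $passive$, so no further leader is elected in $C$. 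Thus each box containing an element of $V$ ends up with exactly one leader.

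The main obstacle I anticipate is not the arithmetic but the bookkeeping around the interaction of the two stages: one must argue that the "phase" labels $ph(v)$ assigned during Elimination are consistent enough that, box by box, the surviving set at the critical index is exactly what GranLeaderElection is handed, and that no station of $C$ prematurely silences itself (which would be fatal since GranLeaderElection needs \emph{all} of $V_C(l_C^\star)$ present to guarantee a leader). This is where Lemma~\ref{l:lead:empty} — in particular the matching structure underlying part~(2) and the fact that exactly one element of each matched pair survives — does the real work, together with the dilution (the residue classes $(j,k)$ and the $3$-dilution inside Proposition~\ref{prop:closer}) ensuring that the ssf-based hearing arguments are not spoiled by interference from neighboring boxes. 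A secondary point to check is that running GranLeaderElection simultaneously across all boxes is legitimate: this follows because its internal transmissions are always diluted (Proposition~\ref{prop:lead1}/Corollary~\ref{cor:dilsuc}), so remote boxes do not interfere, exactly as in the granularity-dependent setting.
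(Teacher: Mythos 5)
Your proposal is correct and follows essentially the same route as the paper: the time bound splits into $O(\log n\cdot\log N)$ for the Elimination stage (outer loop of $\log n+1$ iterations, constantly many residue classes, two ssf runs of length $O(\log N)$ each) plus $O(\log^2 n)$ for the Selection stage, and correctness rests on Lemma~\ref{l:lead:empty}: part (1) gives $l_C^\star\leq\log n$, part (2) gives granularity at most $n$ for the last nonempty level set, so GranLeaderElection succeeds and the elected leader's successful transmission silences the rest of the box. Your write-up is in fact somewhat more careful than the paper's (e.g.\ in checking that no station of $C$ is silenced before the iteration $i=l_C^\star$, and that parallel executions across boxes do not interfere), with only the same harmless off-by-one in identifying $\{v: ph(v)=l_C^\star\}\cap C$ with $V_C(l_C^\star)$ that the paper itself glosses over.
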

\begin{proof}
Time complexity $O(\log^2n)$ follows immediately from the bounds on the size of selectors
and complexity of GranLeaderElection.

Lemma~\ref{l:lead:empty}.1 implies that $V_C(l)=\emptyset$ for each box $C$ and $l>\log n$.
(In other words, $ph(v)\leq\log n$ for each $v\in V$.)
Moreover, by Lemma~\ref{l:lead:empty}.2, the smallest distance between stations of $V_C(l_0)$ is at least $1/n$,
where $l_0=\max_l\{V_C(l)\neq\emptyset\}$. In other words the smallest distance between stations of
$\{v\in V\,|\, ph(v)=l_0, state(v)=active\}$ is $\geq 1/n$, where $l_0$ is the largest
number $l$ such that $ph(v)=l$ for some $v\in V$.

Let us focus on a box $C$ which contains at least one station from $V$.
Selection stage (the for-loop in lines 13-16) tries to choose the leader of $C$ among
$V_C(\log n), V_C(\log n-1), \ldots$. Moreover, when the leader is elected, all stations
from $C$ are switched off (i.e., their state is set to passive which implies that they
do not attend further GranLeaderElection executions).
Since $l_0=\max_l(V_C(l)\neq\emptyset)\leq\log n$
and the smallest distance between elements of $V_C(l_0)$ is $\geq 1/n$, each execution of GranLeaderElection
is applied on a set of stations with the smallest distance between stations $\geq 1/n$, and therefore the leader
in each box $C$ containing (at least one) element of $V$ is chosen by LeaderElection.
\end{proof}

\subsection{Broadcasting Protocol}
Algorithm~\ref{alg:broadcast:general} implements our broadcasting algorithm which repeats leader
election procedure several times and each station is ``switched off'' after it is elected a leader of its box
(assuring that each leader $v$ transmits the broadcast message successfully to all station
accessible from $v$).
\begin{algorithm}[H]
	\caption{GeneralBroadcast($V,n$)}
	\label{alg:broadcast:general}
	\begin{algorithmic}[1]
    \State The source transmits the broadcast message
    \State $V_1\gets \{v\in V\,|\, v\mbox{ received the broadcast message}\}$
    \For{$i=1,2,\ldots,D\Delta$}
        \State LeaderElection($V_{i},n$)
        \State $V_{i+1}\gets\{v\in V\,|\, state(v)\neq leader, v\mbox{ received the broadcast message}\}$
    \EndFor
    \end{algorithmic}
\end{algorithm}
\begin{theorem}\labell{t:broadcast:general}
Algorithm GeneralBroadcast finishes broadcasting in $O(D\Delta\log^2 N)$ rounds in ad hoc networks, provided that
$\alpha>2$ and each station knows $N,D$ and $\Delta$.
\end{theorem}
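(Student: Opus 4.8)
The plan is to split the statement into a time bound and a coverage (correctness) bound, both of which follow from the per-call guarantee of Theorem~\ref{t:leader:general} together with the fact that the leaders elected in a call broadcast the source message $r$-successfully.

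For the time bound: the main loop runs $\Theta(D\Delta)$ iterations (the ``$D\Delta$'' in line~3 should be read as a suitable constant times $D\Delta$; since the network is connected and has at least two nodes, $\Delta\ge 1$, so $D\le D\Delta$ and any $O(D\Delta)$ loop length is within the claimed bound). Each iteration is one call to LeaderElection, costing $O(\log n\log N)$ rounds by Theorem~\ref{t:leader:general}, followed by the step where each newly elected leader transmits the message ``successfully''. Since there is at most one leader per box of the pivotal grid, the set of leaders can be made $\delta$-diluted for the constant $\delta=d_\alpha(n)$ of Corollary~\ref{cor:dilsuc} by splitting it into $\delta^2=O(1)$ sub-rounds indexed by box coordinates modulo $\delta$; in each sub-round all transmitting leaders are $r$-successful. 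Using $n\le N$ (the IDs are distinct and lie in $[N]$), one iteration costs $O(\log^2 N)$ rounds, hence the whole algorithm $O(D\Delta\log^2 N)$.

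For coverage I would first record two structural facts. \textbf{(i)} Each box $C$ of the pivotal grid is a clique of the communication graph (its diagonal equals $r$, so any two stations in $C$ are neighbors), hence $|C|\le\Delta+1$. \textbf{(ii)} If a box $C$ has an informed station at the start of some iteration $i$, then within the next $|C|$ iterations every station of $C$ has served as leader of $C$ (and therefore has transmitted the source message $r$-successfully): in iteration $i$ LeaderElection elects some leader of $C$ among its informed stations, that leader's $r$-successful transmission informs the whole clique $C$, and thereafter, as long as not all of $C$ are leaders, $C$ has an informed non-leader station and so receives a fresh leader in the next iteration. Now fix a target station $w\neq$ source and a shortest path $u_0=\text{source},u_1,\dots,u_k=w$, with $k\le D$; let $C_\ell$ be the pivotal-grid box containing $u_\ell$ and $T_j=\sum_{\ell=1}^{j}|C_\ell|$. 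I claim by induction on $j\ge 1$ that $u_j$ has transmitted $r$-successfully (as a leader of $C_j$) by the end of iteration $T_j$, and hence its neighbor $u_{j+1}$ is informed by then. For the base case, $u_1$ is a neighbor of the source, hence informed after line~1, so $C_1$ has the informed station $u_1$ at the start of iteration~1 and fact~(ii) applies; for the step, if $u_j$ is informed by the end of iteration $T_{j-1}$ then $C_j$ has an informed station at the start of iteration $T_{j-1}+1$, so by~(ii) all of $C_j$ -- in particular $u_j$ -- have served as leader by the end of iteration $T_{j-1}+|C_j|=T_j$, and $u_{j+1}$, a neighbor of $u_j\in C_j$, is informed by then. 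Taking $j=k-1$ and using $|C_\ell|\le\Delta+1$ and $k\le D$ gives that $w$ is informed by iteration $T_{k-1}\le(D-1)(\Delta+1)=O(D\Delta)$, so the chosen loop length is enough; finally, since every station knows $N,D,\Delta$ and every transmitted message carries the global round counter, each station can compute this completion round and declare the broadcast finished then.

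The points I would be careful about, rather than treat as routine, are: that the inductive estimates are genuine upper bounds -- a box may be informed earlier via another path, and a path vertex may be elected leader earlier than the worst case, but all such events only speed up coverage, so phrasing fact~(ii) as a ``from the first time $C$ is informed'' statement and allowing some stations of $C$ to have become leaders in earlier calls suffices; and that $\alpha>2$ is genuinely needed, since it is what makes $d_\alpha(n)$ a constant in Corollary~\ref{cor:dilsuc} and what validates Proposition~\ref{prop:for:sel}/Corollary~\ref{cor:selector} and Proposition~\ref{prop:closer} underlying the correctness of LeaderElection (Theorem~\ref{t:leader:general}). I expect the main obstacle to be making the box-level invariant in fact~(ii) robust to the gradual, path-independent arrival of informed stations, so that the single clean induction above goes through without case analysis.
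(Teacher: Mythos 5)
Your proposal is correct and follows essentially the same route as the paper: the per-iteration cost $O(\log^2 N)$ from Theorem~\ref{t:leader:general} plus a constant-dilution transmission of the new leaders, combined with an induction along a shortest path using the fact that each pivotal-grid box is a clique of size at most $\Delta+1$ and therefore exhausts its stations as leaders within $O(\Delta)$ iterations of first being informed. The paper states the key box-level claim ("each station is elected leader of its box within $\Delta$ executions after being informed") without the explicit clique/counting justification and path bookkeeping that you supply, so your write-up is a more detailed rendering of the same argument rather than a different one.
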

\begin{proof}
Let $P$ be a shortest path in the network graph from the source to a station $v$. Then, the length
of $P$ is at most $D$. Theorem~\ref{t:leader:general} guarantees that each station $v$
is elected a leader of its box $C$ after at most $\Delta$ executions of LeaderElection
following the execution in which $v$ receives the broadcast message. Moreover, a station elected
the leader of its box successfully sends the broadcast message to all its neighbors in the
network graph. Therefore, the broadcast message arrives to the last vertex of $P$ in
$O(D\Delta\log^2N)$ rounds.
\end{proof}
In order to implement Algorithm GeneralBroadcast, the knowledge of $n$, $D$ and $\Delta$ is required.
However, if $n$ is not known, one can implement LeaderElection in $O(\log^2N)$ rounds using the
bound $n\leq N$. Moreover, each station $v$ which is elected a leader of its box in GeneralBroadcast,
does not attend the protocol after the execution of LeaderElection
in which it is chosen a leader.
And, each station is eventually elected a leader.
Therefore, instead of the for-loop repeated $D\Delta$ times, it is sufficient that each station
participates in the protocol until its state changes to the value $leader$. This observation leads
to the following corollary.
\begin{corollary}\labell{c:broadcast:general}
One can build a protcol which finishes broadcasting in $O(D\Delta\log^2 N)$ rounds in ad hoc networks, provided that
$\alpha>2$ and each station knows merely $N$.
\end{corollary}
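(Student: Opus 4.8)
The plan is to bootstrap the corollary directly from Algorithm GeneralBroadcast and Theorem~\ref{t:broadcast:general}, removing one at a time the three pieces of global knowledge that algorithm uses: $n$, $D$, and $\Delta$. Position and ID are part of the model, so what remains afterwards is exactly the knowledge of $N$.

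First I would eliminate the dependence on $n$. The parameter $n$ enters \textsc{LeaderElection} only through the bound $\log n+1$ on the number of Elimination blocks, the range of the Selection loop, and the granularity parameter handed to \textsc{GranLeaderElection}; using the bound $n\le N$, each occurrence of $n$ can be replaced by $N$. The correctness is unaffected: Elimination still isolates, for every box $C$, a set $V_C(l_C^\star)$ whose pairwise distances are at least $1/n\ge 1/N$ (Lemma~\ref{l:lead:empty}), and the subsequent granularity-bounded election still elects a leader in each such box (Proposition~\ref{prop:leader}). Re-running the count in the proof of Theorem~\ref{t:leader:general} with $\log n$ replaced by $\log N$ gives cost $O(\log N\cdot\log N)=O(\log^2 N)$ per call.

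Second I would remove the counter $D\Delta$ that bounds the outer loop of GeneralBroadcast. The observation, already exploited in the proof of Theorem~\ref{t:broadcast:general}, is that a station has nothing more to do once it has been elected the leader of its box: by Corollary~\ref{cor:dilsuc} such a leader transmits the broadcast message successfully to \emph{all} its neighbours at the end of the execution in which it is chosen, and thereafter stays silent (its state is never reset to \emph{active}, since it is dropped from the participating set $V_{i+1}$). So I would replace ``\textbf{for} $i=1,\dots,D\Delta$'' by the purely local stopping rule: each station keeps participating in successive executions of \textsc{LeaderElection} until the first time it has state $leader$. This rule needs only the knowledge of $N$, and because the protocol is synchronous all stations agree on the execution boundaries, so the stations that have dropped out do not perturb the elections run by the rest — a smaller transmitting set only lowers interference, and the diluted-transmission guarantees of Corollary~\ref{cor:dilsuc} are monotone in the transmitter set. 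Third, I would re-derive the time bound exactly as in the proof of Theorem~\ref{t:broadcast:general}: fix a shortest path $v_0,\dots,v_k$ from the source $v_0$ to an arbitrary station $v=v_k$, so $k\le D$; once $v_i$ is informed, its box (a clique of the communication graph, hence of size $O(\Delta)$) yields one new leader per execution, so within $O(\Delta)$ executions $v_i$ itself is elected and, transmitting successfully, informs $v_{i+1}$; by induction $v_k$ is informed within $O(D\Delta)$ executions, each costing $O(\log^2 N)$ rounds, which gives the claimed $O(D\Delta\log^2 N)$. (Awareness of completion comes for free: a station may regard the broadcast as finished for it the moment it becomes a leader.)

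The step I expect to be the main obstacle is the second one — arguing that the self-terminating variant truly simulates GeneralBroadcast. Two things must be checked: that the process cannot stall (it cannot — as long as some informed station is not yet a leader, the next \textsc{LeaderElection} call strictly decreases the number of such stations in its box by Theorem~\ref{t:leader:general}, while connectivity keeps new stations getting informed until everyone is), and that discarding stations as they become leaders does not invalidate the interference analysis inside \textsc{LeaderElection} (it does not, by the monotonicity remark above). Everything else is a direct re-reading of the proofs of Theorems~\ref{t:leader:general} and~\ref{t:broadcast:general} with $n$ replaced by $N$, and the $\alpha>2$ hypothesis is inherited from those results.
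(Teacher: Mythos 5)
Your proposal is correct and follows essentially the same route as the paper: substitute the bound $n\le N$ into \textsc{LeaderElection} to get $O(\log^2 N)$ per call, and replace the global $D\Delta$ loop counter by the local stopping rule that each station participates until it is elected leader of its box, with the time bound then inherited from Theorem~\ref{t:broadcast:general}. Your additional checks (no stalling, monotonicity of the interference guarantees under shrinking transmitter sets) are sensible elaborations of points the paper leaves implicit.
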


\section{Lower Bounds} \labell{s:lower}
\tj{ 
In this section we provide lower bounds which are close
to the the upper bounds provided so far. (In fact, they
leave the gap $O(\log N)$ in most cases.)
}

For a network  with distinguished source station $s$, $L_i$ denotes the set
of nodes in distance $i$ from $s$ in the communication graph (thus, in particular, $L_0=\{s\}$
and $L_1$ is equal to the set of neighbors of $s$).

\begin{theorem}\labell{t:lower:log}
There exists an infinite family of networks requiring
$\Omega(n\log N)$ rounds in order to accomplish deterministic
ad hoc broadcasting in the SINR model without local knowledge.
\end{theorem}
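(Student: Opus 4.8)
The plan is to combine a ``long chain'' argument — which forces an $\Omega(n)$ factor — with a strongly‑selective‑family / pigeonhole argument against deterministic transmission schedules — which forces an $\Omega(\log N)$ factor at \emph{each} link of the chain — much in the spirit of classical layered lower bounds for radio networks, but with a gadget engineered for the SINR model. For every sufficiently large $n$ (with, say, $N\ge 2n$) I would exhibit a network made of $m=\Theta(n)$ consecutive \emph{collision gadgets}. Gadget $i$ carries the broadcast message from a node $v_{i-1}$ to a node $v_i$ (with $v_0=s$): it is a short constant‑length path $v_{i-1}\to\cdots\to x_i$ lying on a fixed line $\ell$, followed by two ``twin'' forwarders $a_i,b_i$ placed symmetrically with respect to $\ell$ at one and the same distance (slightly below $r$) from $x_i$ and from $y_i$, followed by another short path $y_i\to\cdots\to v_i$ on $\ell$. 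The whole network lies on $\ell$ except for the twins, whose off‑axis displacements $\eta_i$ are taken minuscule; $\dist(x_i,y_i)$ is chosen slightly above $r$ so that $x_i$ is not a neighbour of $y_i$; and the path lengths are chosen so that the only in‑range neighbours of $a_i$ (and of $b_i$) are $x_i$, $y_i$ and the other twin. Since $\alpha\ge 2$, the interference contributed at any node by the stations of all the other gadgets is a convergent series, hence $O(r^{-\alpha})$.

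\textbf{Gadget properties and sequentiality.} For a suitable (generically perturbed) placement I would verify: (i) \emph{collision} — if both $a_i$ and $b_i$ transmit in a round then $SINR(a_i,y_i,\cdot)<\beta$ and $SINR(b_i,y_i,\cdot)<\beta$ regardless of all other stations, so $y_i$ receives nothing from $\{a_i,b_i\}$ (and nothing from anyone else, being out of range); (ii) \emph{solo success} — if exactly one twin transmits and the interference at $y_i$ is below $\varepsilon\cN$, then $y_i$ receives it; (iii) \emph{no shortcut} — because $\dist(x_i,y_i)>r$, the only in‑neighbour of $y_i$ that can hold the broadcast message before gadget $i$ finishes is $a_i$ or $b_i$, and likewise $a_i,b_i$ can only be informed by $x_i$; (iv) \emph{twin indistinguishability} — as long as $a_i$ and $b_i$ transmit in the same rounds, they also receive the same messages in the same rounds, because the only stations either of them can hear are $x_i$, $y_i$ and the other twin, all of which are symmetric across $\ell$, the $\eta_i$ are tiny, and no reception event sits exactly on an SINR threshold. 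Property (iii) together with the non‑spontaneous wake‑up rule makes the chain strictly sequential: writing $\theta_i$ for the round in which $v_i$ first becomes informed and $\tau_i$ for the number of rounds between the moment $a_i,b_i$ get informed and $\theta_i$, one gets $\theta_i\ge\theta_{i-1}+\tau_i$, so the total broadcast time is at least $\sum_{i=1}^m\tau_i$.

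\textbf{Forcing $\tau_i=\Omega(\log N)$ adaptively.} By (i)--(iii), $y_i$ cannot become informed before the first round after $\theta_{i-1}$ in which exactly one of $a_i,b_i$ transmits, so it suffices to choose the identifiers of the twins so as to delay that event. Fix the deterministic protocol and process the gadgets from left to right. When the identifiers of $a_i,b_i$ are about to be assigned, the whole network and all identifiers of gadgets $1,\dots,i-1$ and of the path $v_{i-1}\to\cdots\to x_i$ are already fixed, and the twins are silent until informed, so the round $t^0_i$ in which $a_i,b_i$ get informed is determined; simulate the protocol to learn it. By (iv), as long as the twins transmit in lockstep after $t^0_i$ they share a common state sequence, so the transmit decision of a twin with identifier $z$ in the $j$‑th round after $t^0_i$ is a fixed function $\widetilde\sigma_j(z)$ of $z$. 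Mapping each of the at least $N-n\ge N/2$ still‑unused identifiers $z$ to $\bigl(\widetilde\sigma_1(z),\dots,\widetilde\sigma_T(z)\bigr)\in\{0,1\}^T$ with $T=\lfloor\log_2(N/2)\rfloor$, two unused identifiers $p_i\neq q_i$ must collide; assigning them to $a_i$ and $b_i$ keeps the twins in lockstep — hence $y_i$ uninformed — for $T$ rounds after $t^0_i$, so $\tau_i\ge T=\Omega(\log N)$. Summing over the $m=\Theta(n)$ gadgets yields a broadcast time of $\Omega(n\log N)$. (Since there is no feedback before the critical round, the same argument works verbatim whether or not the protocol is adaptive.)

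\textbf{Main obstacle.} The combinatorial core — chain $\times$ selective family, with the adaptive, simulation‑based choice of identifiers — is routine; the delicate part, and the place where SINR differs essentially from the radio model, is realising the collision gadget. One must place the two twin forwarders so that a lone transmission still dominates the ambient‑plus‑interference at the target — which, in view of the threshold condition $P_v\dist^{-\alpha}(v,u)\ge(1+\varepsilon)\beta\cN$ and the $O(r^{-\alpha})$ cumulative interference from the rest of the line, forces the equidistant, just‑below‑$r$ placement — while a simultaneous transmission genuinely collides; and one must make the two twins provably indistinguishable to every deterministic protocol, which forces the entire network (save the minuscule off‑axis bumps) onto a single line, so that the twins are mirror images, plus a generic perturbation guaranteeing that no reception event ever lands exactly on an SINR threshold. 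Arranging all of (i)--(iv) with constant‑length gadgets — so that $m=\Theta(n)$ rather than $m=o(n)$ — is the crux.
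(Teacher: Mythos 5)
Your proposal is correct and follows essentially the same route as the paper: a sequential chain of $\Theta(n)$ constant-size gadgets, each containing a pair of forwarders whose simultaneous transmission violates the SINR threshold at the unique next-hop node, combined with a counting/pigeonhole argument over the unused identifiers to keep the pair transmitting in lockstep for $\Omega(\log N)$ rounds per gadget. Your write-up is in fact more detailed than the paper's sketch (which compresses the indistinguishability and ID-selection steps into ``a simple counting argument''), but the underlying construction and argument are the same.
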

\begin{proof}
First, we describe a family of networks $\m{F}$ such that
broadcasting in SINR requires  time $\Omega(D\log N)$.

Each element of $\m{F}$ is formed as a sequential composition
of $D$ networks $V_1,\ldots,V_D$ of eccentricity $3$ each, such that:
\begin{itemize}
\item
the source $s$ is connected with two nodes $v_1,v_2$ in $L_1$ with arbitrary
IDs;
\item
$v_1,v_2$ are connected with $w$, the only element of $L_2$, and satisfy
the condition:
\begin{equation}\label{e:dist}
P\cdot\dist(v_1,w)^{-\alpha}=P\cdot\dist(v_2,w)^{-\alpha}-\cN/2.
\end{equation}
\end{itemize}
Moreover, we assume that  $\beta=1$. Finally, sequential composition of networks
$V_1,\ldots,V_D$ stands for identifying \tj{the element $w$} of network component $V_i$
with the source $s$ of network component $V_{i+1}$.

Note that if $v_1$ and $v_2$ transmit simultaneously in a network component $V_i$, the
message is {\em not} received
by $w$. Using simple counting argument, one can force
such choice of IDs of $v_1$ and $v_2$ that $\Omega(\log N)$ rounds are
necessary until a round in which exactly one of $v_1,v_2$ transmits
a message under the SINR model.
Since $D=\Theta(n)$ in the above construction,
the bound $\Omega(n\log N)$ holds.
\end{proof}



\remove{
For the purpose of lower bounds, we assume no background noise;
this allows us to simplify the presentation by using ranges instead of the whole SINR expressions.
Note however that the same results can be proved
using the analogous arguments (with additional constant scaling to overcome the impact of fixed
background noise) for any fixed value of background noise.
}

\begin{theorem}\labell{th:lower}
For any deterministic broadcasting algorithm $\cA$
and for every $D\geq 3$ and $\Delta\ge 4$, there exists a network of at most $D\Delta$ nodes with eccentricity
$D$ and maximal degree $\Delta$ on which algorithm $\cA$ completes broadcasting in $\Omega(D\Delta)$
rounds.
\end{theorem}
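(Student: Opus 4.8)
The plan is to construct, for each pair $(D,\Delta)$, a layered network on which every deterministic algorithm is forced to spend $\Omega(\Delta)$ rounds to move the broadcast message across each of the $\Theta(D)$ layers, giving $\Omega(D\Delta)$ in total. The building block will be a gadget analogous to the one in Theorem~\ref{t:lower:log}: a single ``bottleneck'' node $w$ in layer $L_2$ of a component whose predecessors in $L_1$ are positioned so that, whenever two or more of them transmit simultaneously, $w$ hears nothing, so $w$ can only be informed in a round in which exactly one predecessor transmits. Chaining $\Theta(D)$ such components in series (identifying the bottleneck of $V_i$ with the source of $V_{i+1}$, as in the proof of Theorem~\ref{t:lower:log}) multiplies the per-component cost by $D$. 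The twist compared with Theorem~\ref{t:lower:log} is that here the number of distinguished nodes that can reach the bottleneck is $\Theta(\Delta)$ rather than $2$, and the adversary must exploit the fact that, with no local knowledge, these $\Theta(\Delta)$ nodes cannot tell each other apart or coordinate.

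First I would fix a single component. Put the source $s$, then a set $A$ of $\Theta(\Delta)$ nodes in $L_1$ all adjacent to $s$ and all mutually within range (so they form a clique together with $s$, keeping the degree $O(\Delta)$), and a single node $w\in L_2$ adjacent to every node of $A$. The geometry is chosen so that the received power at $w$ from any one node of $A$ exceeds the threshold, but the combined interference from any two of them destroys reception at $w$ — this is the same SINR arithmetic as in equation~\eqref{e:dist}, just requiring pairwise (hence all-subset) collisions. Crucially, $w$ itself is placed so that messages sent only between members of $A$ never reach $w$ (distance slightly above $r$), so $A$ cannot help $w$ except by having exactly one member transmit. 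Since the algorithm is deterministic and the nodes of $A$ run identical code modulo their IDs, the behaviour of a node in round $t$ (before it hears anything new) is a fixed function of its ID and the common broadcast message; the sequence of rounds in which a given ID transmits is thus a fixed binary schedule. The adversary then chooses which $\Theta(\Delta)$ IDs out of $[N]$ to assign to $A$.

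The core combinatorial step is: for any collection of $N$ binary schedules (one per possible ID), the adversary can pick a subset $Z$ of size $k=\Theta(\Delta)$ such that in the first $\Omega(k)=\Omega(\Delta)$ rounds there is no round in which exactly one schedule in $Z$ fires. This is the ``anti-selector'' counting argument: a family that guarantees a singleton for every $k$-subset within $T$ rounds is essentially a strongly-selective family, which needs length $\Omega(k\log(N/k))$, indeed $\Omega(k)$ suffices for our purposes — more elementarily, one builds $Z$ greedily, at each round either all currently-live candidates stay silent or at least two fire, pruning down by at most a bounded factor, so starting from $N$ one survives $\Omega(k)$ rounds with $k$ candidates still alive. (Because a newly-informed node of $A$ does not change the picture at $w$ — only $A$-to-$w$ transmissions matter and $A$ members were all informed at the same time from $s$ — the schedules really are oblivious for the relevant window.) Hence each component costs $\Omega(\Delta)$ rounds; concatenating $\Theta(D)$ components, where the adversary is free to reuse the same $k$ bad IDs in each component (the total node count is $D\Delta$ but IDs need not be globally distinct across components — or, if they must be, since $N$ is only assumed to bound IDs we take $N$ large enough, or reuse is fine because nodes in different components never interfere), gives a network with at most $D\Delta$ nodes, eccentricity $\Theta(D)$ (rescale $D$ by the constant eccentricity of a component), maximal degree $\Theta(\Delta)$, on which broadcasting takes $\Omega(D\Delta)$ rounds.

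The main obstacle I anticipate is making the adversary argument fully rigorous across the $D$ components simultaneously rather than one at a time: a node in $A$ of component $V_i$ only ``wakes up'' when the message reaches $s_i=w_{i-1}$, at a time the adversary does not control a priori, and after waking its schedule is the fixed oblivious one but shifted by the wake-up time. I would handle this by arguing inductively on $i$: conditioned on the message reaching $s_i$ at some round $\tau_i$, the nodes of $A_i$ all wake at $\tau_i$ (they are all adjacent to $s_i$ and $s_i$ transmits successfully to them, or they are informed together), their subsequent behaviour depends only on their IDs and the elapsed time, the adversary picks the bad $k$-subset $Z_i$ for those time-shifted schedules, and $w_i$ is not informed before $\tau_i+\Omega(\Delta)$; set $\tau_{i+1}\ge\tau_i+\Omega(\Delta)$. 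One must also check the interference bookkeeping: distant components are far enough apart (boxes scaled so inter-component distance exceeds any relevant interference radius, using $\alpha>2$ convergence exactly as in Proposition~\ref{prop:for:sel}) that transmissions in $V_j$ cannot rescue the bottleneck of $V_i$. Finally I would note, as the paper emphasises, that nothing in this argument uses determinism of the receiver side or rules out randomization at the nodes — a uniformly random schedule also fails to produce a singleton among $k$ chosen IDs with constant probability in a $o(\Delta)$ window — so the bound extends to randomized algorithms, matching the $\Omega(n\log N)$ discussion.
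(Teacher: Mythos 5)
Your construction differs from the paper's in a way that makes the combinatorial heart of the argument much harder, and the elementary argument you offer for that step does not work. You place the bottleneck node $w$ adjacent to \emph{all} $\Theta(\Delta)$ nodes of $A$, so $w$ is informed the first time \emph{any} single member of $A$ transmits alone; to delay this you must exhibit $\Theta(\Delta)$ IDs whose transmission schedules produce no singleton for $\Omega(\Delta)$ rounds. Your greedy justification --- ``pruning down by at most a bounded factor per round, so starting from $N$ one survives $\Omega(k)$ rounds with $k$ candidates alive'' --- is essentially the pigeonhole argument the paper uses in Theorem~\ref{t:lower:log}, and it only yields survival for $O(\log(N/k))$ rounds, not $\Omega(k)$: shrinking a pool of $N$ candidates by a constant factor per round reaches size $k$ after $\log(N/k)$ rounds. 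To get $\Omega(\Delta)$ along your route you would need the full $\Omega(k\log(N/k))$ lower bound for selective families of \cite{ClementiMS01}, which is a genuinely nontrivial counting argument; and even then you would have to handle (i) the fact that the bad set it produces may have size much smaller than $k$, so the maximal-degree-$\Delta$ requirement needs separate padding, and (ii) adaptivity: since $A$ is a clique, its members may hear one another during the window and deviate from any fixed oblivious schedule, so an oblivious selective-family bound does not directly apply (your remark about newly-informed nodes addresses wake-up times, not intra-$A$ feedback).

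The paper sidesteps all of this by hiding a different piece of information. The unique layer-3 node $w_j$ is placed in the range of \emph{exactly one} of the $\Delta$ middle nodes $v_0,\dots,v_{\Delta-1}$ (whose IDs are fixed, not adversarially chosen), and the adversary maintains the family of $\Delta$ candidate networks indexed by the unknown position $j$. In any round in which at most $c=\lceil 2^{\alpha/2}\rceil$ middle nodes transmit, only those at most $c$ candidate positions could possibly have been served, so the adversary discards at most $c$ networks from the family; in any round with more than $c$ transmitters, the SINR at every candidate position is below threshold and nothing is discarded. Because the surviving networks differ only in the location of the silent node $w_j$, the communication history of all other stations is identical across them, so adaptivity causes no trouble, and the family stays nonempty for $\lfloor\Delta/c\rfloor-1$ rounds by direct counting. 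If you replace your gadget by this one --- the hidden receiver reachable from a single middle node, with the adversary choosing which --- the rest of your plan (chaining $\Theta(D)$ components, spacing them so that $\alpha>2$ convergence kills inter-component interference, and the remark about randomization) goes through as you describe.
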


\begin{proof}
Let $\gammapom=1/\sqrt{2}$.
Let $\m{F}$ be a family of networks $F_j$, for $1\le j\le\Delta$,
of eccentricity $3$ which consist of three layers:
\begin{itemize}
\vspace*{-1ex}
\item
the source $s$,
located in the origin point $(0,0)$,
is the only element of $L_1$;
\vspace*{-1ex}
\item
$L_2$ consists of $\Delta$ nodes $v_0,\ldots,v_{\Delta-1}$, where the position of $v_i$
is $(\gammapom\cdot \frac{i}{\Delta}, \gammapom)$ for $0\le i\le\Delta-1$;
\vspace*{-1ex}
\item
$L_3$ contains only one node $w_j$ with coordinates
$(\gammapom\cdot\frac{j}{\Delta},\gamma+1)$.
\end{itemize}
Thus, the family $\m{F}$ consists of $\Delta$ elements, each network $F_j\in \m{F}$ is
uniquely determined by the
value $j$ fixing the position of node $w_j\in L_3$.

In what follows, we assume that the ranges of $s$ and $v_0,\ldots,v_{\Delta-1}$ are
equal to $1$.
Then,
\begin{itemize}
\vspace*{-1ex}
\item
$v_0,\ldots,v_{\Delta-1}$ are in the range area of $s$;
\vspace*{-1ex}
\item
$w_j$ is in the range of $v_j$ and it is not in the range of any other station from $L_1\cup L_2$;
\vspace*{-1ex}
\item
if more than $2^{\alpha/2}$ stations from $L_2$ transmit in a round, node $w_j$ cannot hear a message.
\end{itemize}
\vspace*{-1ex}
The first two bullets follow directly from the location of points and the value of range.
The last bullet holds because the minimum (maximum) of the distances between $v_i$ and $w_j$ is larger
than or equal to $1$ (smaller than $\sqrt{2}$), which guarantees that 
$SIR(v_i,w_j,\m{T})$ 
is smaller than $1$ for each of the transmitting stations $v_i$
if $|\m{T}\cap L_2|\geq 3$, where $\m{T}$ is the set of transmitting stations.

Consider any broadcasting algorithm $\cA$. 
We specify an adversary
who simultaneously, round after round, decides
what is heard by stations in $L_1\cup L_2$ in consecutive rounds of $\cA$
and restricts the family of considered networks $\m{F}$ to the networks on which
such answers are valid.
The goal of the adversary is to
prevent the arrival of a message to $w_j$ as long as possible.
Assume that the source sends the broadcast message to all nodes in $L_1$ in round $0$.
The adversary determines the family $\m{F}_t$, for every $t\leq \lfloor\Delta/2\rfloor-1$, in the following
way:
\begin{enumerate}
\vspace*{-1ex}
\item $\m{F}_0\leftarrow \m{F}$
\vspace*{-1ex}
\item $c\gets \lceil 2^{\alpha/2}\rceil$
\vspace*{-1ex}
\item For $t=1,2,\ldots, \lfloor \Delta/c\rfloor -1$ do:
 \begin{enumerate}
\vspace*{-1ex}
 \item
 if $v_{i_1}, \ldots, v_{i_{c'}}$ are the only stations from $L_1$ that transmit
 a message in the $t$-th round of $\cA$
 on the networks from $\m{F}_{t-1}$, and $c'\leq c$ \\
 then
 $\m{F}_t\leftarrow \m{F}_{t-1}\setminus\{F_{i_1},\ldots,F_{i_c}\}$;
\vspace*{-.5ex}
 \item
 otherwise, $\m{F}_t\leftarrow \m{F}_{t-1}$.
 \end{enumerate}
\end{enumerate}
\vspace*{-1ex}
One can easily verify that, for each $t\leq \lfloor \Delta/c\rfloor -1$,
the following conditions are satisfied:
\begin{itemize}
\vspace*{-1ex}
\item
$\m{F}_t$ is not empty;
\vspace*{-1ex}
\item
the history of communication
(i.e., messages/noise heard by all stations in consecutive rounds)
is the same in each network from $\m{F}_t$ up to the round $t$;
\vspace*{-1ex}
\item
$w_j$ does not
receive the broadcast message by round $t$ in the execution of $\cA$ on any network in~$\m{F}_t$.
\end{itemize}
\vspace*{-1ex}
This provides the claimed lower bound for constant eccentricity $D$.
In order to generalize this bound for arbitrary $D$, one can
consider a family of networks which consists of $(D-1)/2$ networks
from $\m{F}$
shifted
such that the source of the $i$th
network is equal to the only element in layer
$L_3$
in the $(i-1)$st
network,
for $2\le i\le (D-1)/2$.
The above strategy of the adversary
can be applied sequentially to every subsequently shifted network from $\m{F}$, to gain
the multiplicative factor $D$.
Note also that the size of the obtained network is $\frac{D-1}{2}\cdot (\Delta+1)+1\le D\Delta$,
its maximum degree is $\Delta$ and its eccentricity is $\frac{D-1}{2}\cdot 2 +1=D$.
\remove{
Now, consider the nonuniform known model~(b). First, let us concentrate
on the networks from $\m{F}$. If the range of
\dk{$w_j\in L_3$}
is smaller than
$1$ then the position of $w_j$ is unknown to the nodes of $L_1\cup L_2$
\dk{in network $F_j$, for any $1\le j\le \Delta$,}
and the above lower bound $\Omega(\Delta)$ applies for the family $\m{F}$.
Note that the largest distance between $s$ and any element
\dk{in $L_2$}
is $d=d(s, v_{\Delta-1})<1$. Therefore, if the range of $w_j$ in network $F_j$
from $\m{F}$ is equal to $d$, we can combine $D$ networks from $\m{F}$ as
above for model~(a) and obtain the same bound $\Omega(D\Delta)$.

Finally, for model~(c), it is sufficient to increase the range of each node $w_j$
in the last shifted copy of the network in $\m{F}$ in such a way that $s$ is in its range area.
}
\end{proof}

As we argue next,
the complexity of broadcasting depends also on granularity of the network.

\begin{corollary}\labell{cor:lowerGranularity}
For any deterministic broadcasting algorithm $\cA$ in
unknown uniform model,
and for any each $D\geq 3$ and $g\ge 4$, there exists a network with eccentricity
$D$ and granularity $g$ on which algorithm $\cA$ completes broadcasting in $\Omega(D g)$
rounds.
\end{corollary}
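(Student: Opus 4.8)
The plan is to reuse the hard family $\m{F}$ from the proof of Theorem~\ref{th:lower} almost verbatim, changing only the horizontal coordinates of the middle layer so that the minimum interstation distance becomes exactly $r/g$ (hence the granularity becomes $g$) while the middle layer still contains $\Theta(g)$ nodes. Recall that each gadget $F_j$ consists of the source $s=(0,0)$, a layer $L_2=\{v_0,\dots,v_{\Delta-1}\}$ placed at height $\gamma$, and a single node $w_j\in L_3$ placed at height $\gamma+r$ directly above $v_j$. First I would redefine $v_i=(i\cdot r/g,\ \gamma)$ and $w_j=(j\cdot r/g,\ \gamma+r)$, taking $\Delta=\Theta(g)$ small enough that the horizontal extent $(\Delta-1)r/g$ of $L_2$ stays below $\gamma$; for $g$ larger than a constant depending on $\alpha$ this still leaves $\Delta\ge 2\lceil 2^{\alpha/2}\rceil$ nodes, which is all that the adversary argument needs.

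Next I would re-verify the few reachability facts underlying the adversary's strategy. Because $(\Delta-1)r/g<\gamma$, each $v_i$ satisfies $\dist(s,v_i)^2=(i r/g)^2+\gamma^2<2\gamma^2=r^2$, so all of $L_2$ lies in the range area of $s$; since $\dist(v_j,w_j)=r$ while $\dist(v_i,w_j)>r$ for $i\neq j$ and $\dist(s,w_j)>\gamma+r>r$, the node $w_j$ is reachable only from $v_j$ among $L_1\cup L_2$; and the distances from the $v_i$ to $w_j$ all lie in an interval $[r,\ Cr)$ for a constant $C=C(\alpha)$, so there is a constant $c=c(\alpha)$ such that $w_j$ receives nothing whenever more than $c$ nodes of $L_2$ transmit simultaneously. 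Finally, the smallest interstation distance in $F_j$ is realized by two consecutive nodes of $L_2$ and equals $r/g$ (the source is at distance $\ge\gamma>r/g$ from every $v_i$, and $\dist(v_i,w_j)\ge r>r/g$), so the granularity of $F_j$ is exactly $g$; the same remains true after chaining gadgets, since across-gadget distances are bounded below by a constant multiple of $r$, exactly as in Theorem~\ref{th:lower}.

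With this geometry fixed, the rest is the adversary argument of Theorem~\ref{th:lower} word for word: maintain a non-empty subfamily $\m{F}_t$ on which the communication history is identical through round $t$, discarding at most $c$ networks in every round in which at most $c$ of the $v_i$ transmit; this forces $\Omega(\Delta)$ rounds before $w_j$ can hear the broadcast message. Chaining $\Theta(D)$ rescaled copies (identifying the $L_3$-node of one copy with the source of the next) produces a connected network of eccentricity $D$ and granularity $g$ on which $\cA$ needs $\Omega(D\Delta)=\Omega(Dg)$ rounds.

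The main obstacle I anticipate is purely the geometric bookkeeping in the second paragraph: confirming that $\Theta(g)$ middle nodes at spacing $r/g$ still fit inside the range area of $s$, that packing them this densely creates neither an unintended short distance nor an unintended edge of the reachability graph, and that the ``too many transmitters block $w_j$'' threshold remains a constant depending only on $\alpha$. None of this affects the adversary's strategy or its analysis, so once the gadget is set up the lower bound transfers immediately from Theorem~\ref{th:lower}.
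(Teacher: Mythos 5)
Your proposal is correct and is essentially the paper's own argument: the paper proves this corollary in one line by observing that the family from Theorem~\ref{th:lower} already has minimum interstation distance $\Theta(\gamma/\Delta)$ and hence granularity $\Theta(\Delta)$, so taking $\Delta=\Theta(g)$ immediately yields the $\Omega(Dg)$ bound. Your version merely makes the rescaling of the middle layer to spacing $r/g$ and the re-verification of the reachability and interference facts explicit, which the paper leaves implicit.
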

\begin{proof}
Note that granularity of the family of networks considered in the proof of
Theorem~\ref{th:lower} is $\Omega(\Delta)$, which immediately gives the claimed
result.
\end{proof}

\tj{
Finally, we make an observation that one can transform lower bounds from
Theorems~\ref{t:lower:log} and \ref{th:lower} to the case of randomized algorithms.
We sketch an idea of these transformations by considering networks from the family $\m{F}$
described in Theorem~\ref{th:lower}. Recall that each element of the layer $L_2$ should
transmit as the only element of $L_2$ in order to guarantee that the only element of $L_3$
is informed, regardless of its location. However, by simple counting arguments, the expectation
of the number of steps after which some of elements of $L_2$ transmit as the only ones is
$\Omega(\Delta)$.
}


\section{Conclusions}

In this work we provided several novel algorithmic techniques for broadcasting in ad hoc wireless
networks with uniform power,
supported by theoretical analysis. We also discovered that the lack of knowledge
about stations on close proximity results in substantially higher performance cost
for majority of network parameters $D,\Delta$, and even randomization does not help much.
The main open problem is to extend this study to networks with non-uniform power
and to other fundamental communication problems.


\end{document}